\newcommand{\h}{{\mathfrak{h}}}
\newcommand{\un}{{\mathbb I}}
\newcommand{\ra}{\rightarrow}
\newcommand{\tr }{\mbox{tr}}
\renewcommand{\ker}{{\rm Ker~}}
\newcommand{\ran}{{\rm Ran~}}
\newcommand{\bra}{\langle} 
\newcommand{\ket}{\rangle}
\newcommand{\D}{{\mathbb D}}
\renewcommand{\vec}[1]{\mathbf{#1}}
\newcommand{\be}{\begin{equation}}
\newcommand{\ee}{\end{equation}}
\newcommand{\bea}{\begin{eqnarray}}
\newcommand{\eea}{\end{eqnarray}}
\newcommand{\eps}{\epsilon}
\newcommand{\ffi}{\varphi}
\newcommand{\ep}{\hfill  {\vrule height 10pt width 8pt depth 0pt}}
\newcommand{\grintl}{[\kern-.18em [}
\newcommand{\grintr}{]\kern-.18em ]}
\newcounter{resultcounter}[section]
\newtheorem{thm}[resultcounter]{Theorem}
\newtheorem{lem}[resultcounter]{Lemma}
\newtheorem{prop}[resultcounter]{Proposition}
\newtheorem{cor}[resultcounter]{Corollary}
\newtheorem{rem}[resultcounter]{Remark}
\newtheorem{rems}[resultcounter]{Remarks}
\newtheorem{exple}[resultcounter]{Example}
 \def\cB{{\cal B}} \def\cC{{\cal C}}
 \def\cH{{\cal H}} 
 \def\cK{{\cal K}}
 \def\cT{{\cal T}}
\newcommand{\R}{{\mathbb R}}
\newcommand{\N}{{\mathbb N}}
\newcommand{\C}{{\mathbb C}}
\newcommand{\Z}{{\mathbb Z}}
\renewcommand{\P}{{\mathbb P}}
\newcommand{\I}{{\mathbb I}}
\newcommand{\T}{{\mathbb T}}
\newcommand{\Ss}{{\mathbb S}}
\begin{document}
\title{Spectral Properties of Non-Unitary Band Matrices}

 \author{ Eman Hamza\footnote{Department of Physics, Faculty of Science, Cairo University, Cairo 12613, Egypt} \and Alain Joye\footnote{ UJF-Grenoble 1, CNRS Institut Fourier UMR 5582, Grenoble, 38402, France} }

%

\date{ }

\maketitle
\vspace{-1cm}

\thispagestyle{empty}
\setcounter{page}{1}
\setcounter{section}{1}

\setcounter{section}{0}

\abstract{We consider families of random non-unitary contraction operators defined as deformations of CMV matrices which appear naturally in the study of random quantum walks on trees or lattices. We establish several deterministic and almost sure results about the location and nature of the spectrum of such non-normal operators as a function of their parameters. We relate these results to the analysis of certain random quantum walks, the dynamics of which can be studied by means of iterates of such random non-unitary contraction operators.\\

\vspace{0.5cm}
{\bf Mathematics Subject Classification (2010).} 47A10, 82C41.
\\\vspace{0.2cm} \hspace{0.4cm}{\bf Keywords.} Band matrices, Quantum random Walks.
}

\thispagestyle{empty}
\setcounter{page}{1}
\setcounter{section}{1}

\setcounter{section}{0}

\section{Introduction}

The spectral theory of self-adjoint and unitary operators is a well established topic in mathematics with a rich structure revealed by numerous important results, and which has found many applications, particularly in mathematical physics. See for example the textbooks \cite{Ka, RS, DS, D4, Ku} selected from the abundant literature on the topic. By contrast, the general spectral theory of operators enjoying less symmetry, that is non-normal operators, is more vast, technically more involved and less well understood. However, the spectral theory of non self-adjoint operators has been the object of many works, in various setups of regimes, as can be seen from the works \cite{GoKr,  SNF, D1, D2, TE, D3, Sj, CL, CCL, CD} and references therein.
In particular, several analyses of non self-adjoint operators focus on tri-diagonal operators, when expressed in a certain basis, see \cite{D1, D2, CL, CD}. Since Jacobi matrices provide generic models of self-adjoint operators, it is quite natural to deal with non self-adjoint tri-diagonal matrices which are deformations of Jacobi matrices. Moreover, certain models of this sort are physically relevant, see e.g. \cite{HN, GK, FZ}.

In this paper, we introduce and analyze the spectral properties of another set of  non-normal operators possessing a band structure in a certain basis, which share similarities with the tri-diagonal non-self-adjoint operators mentioned above. Our operators have a five-diagonal structure and are obtained as deformations of certain unitary operators called CMV matrices, see \cite{Si} for a detailed account. The role played by CMV matrices for unitary operators is similar to that played by Jacobi matrices for self-adjoint operators: they provide generic models of unitary operators; hence we call our models non-unitary operators. 
The non-unitary operators considered in this paper arise naturally in the study of random quantum walks on certain infinite graphs, which provide unitary dynamical systems of interest for physics, computer science and probability theory, see for example the reviews \cite{Ke, Ko, V-A, J4}. In particular, random quantum walks defined on $\Z$ are given by special cases of CMV matrices. The study of the spectral properties of random unitary operators and quantum walks defined on trees or lattices, see e.g.  \cite{bhj, HJS, JM, ASWe, J3, HJ}, may lead to the analysis of certain autocorrelation functions. We show  in Section \ref{rqw} below that in certain cases, the analysis of these autocorrelation functions reduces to the study of iterates of our non-unitary operators, which provides a direct link between spectral properties of non-unitary operators and random quantum walks. Moreover, the structure of our non-unitary operators allows us to determine the spectral nature of the corresponding random quantum walks they are related to.

 While the non-unitary operators we study correspond to deformations of random CMV matrices of a special type, and consequently are rather sparse, we show in Section \ref{sextension} that due to certain symmetries they possess, our main results also apply to deformations of  random unitary CMV type matrices of a much more general form. Those random unitary operators appear  as models in condensed matter physics and can be considered as natural unitary analogs of Anderson type models, see \cite{BB, bhj,  HJS}. The corresponding non-unitary deformations they give rise to are thus of a quite general form, displaying generically non zero elements at all entries of the familiar $5$-diagonal structure CMV type matrices possess. In that sense, our  spectral analysis applies to non-unitary deformations of typical random CMV type matrices addressed in the literature, which corresponds in this richer framework to the analyses of the non self-adjoint Anderson or Feinberg-Zee models addressed e.g. in \cite{D1, D2, CD}. 
\subsection{Main results}

The non-unitary operators $T_\omega$ addressed here are random operators on the Hilbert space $l^2(\Z)$ with the following structure: In the canonical basis of $l^2(\Z)$, denoted by $\{e_j\}_{j\in \Z}$, $T_\omega$ is defined as the infinite matrix 
\be\label{matrixt1}
T_\omega=\begin{pmatrix}
\ddots & e^{i\omega_{2j-1}}\gamma &e^{i\omega_{2j-1}} \delta & & & \cr
            &0                                          &0                                                   & & &\cr
             &0                                          &0      & e^{i\omega_{2j+1}}\gamma &e^{i\omega_{2j+1}}\delta &  \cr               
             & e^{i\omega_{2j+2}}\alpha &e^{i\omega_{2j+2}}\beta & 0& 0 &     \cr
             & & & 0                                          &0  &  \cr
           & &  & e^{i\omega_{2j+4}}\alpha &e^{i\omega_{2j+4}}\beta & \ddots
\end{pmatrix},
\ee
where the dots mark the main diagonal and the first column is the image of the vector $e_{2j}$. The phases $\{e^{i\omega_j}\}_{j\in \Z}$ are iid random variables and the deterministic coefficients, when arranged in a matrix $C_0\in M_2(\C)$, are constrained by the requirement that $C_0$ be a projection on $\C^2$ of a unitary matrix on $\C^3$:
\be\label{consti}
C_0=\begin{pmatrix} \alpha  & \beta  \cr  \gamma & \delta 
\end{pmatrix} \ \mbox{s.t.}\ \tilde C= \begin{pmatrix}
\alpha & r & \beta  \cr
q & g & s  \cr
\gamma & t & \delta  
\end{pmatrix} \in U(3), \ \mbox{with $0\leq g \leq 1$.}
\ee 
When $C_0$ itself is unitary, which corresponds to $g=1$, $T_\omega$  is a unitary random CMV matrix describing a random quantum walk, the spectral properties of which are known,  see \cite{JM, ASWe}. In general, however, $C_0$ is a contraction, and $T_\omega$ is a non-normal contraction, i.e. a non-unitary operator. We note here that, in general, $T_\omega$ is not a seminormal operator,  i.e. $[T_\omega^*,T_\omega]$ is not definite, see \cite{C}.
Non-unitary operators $T_\omega$ constrained by condition (\ref{consti}) appear as a natural objects in the study of the spectral properties of random quantum walks defined on the lattice $\Z^2$ or on $\cT_4$, the homogeneous tree of coordination number $4$, as explained in Section \ref{qwnuo}. This provides us with an independent motivation to focus on the characterization  (\ref{consti}) here, although other choices of deformations of CMV matrices are obviously possible. 
 Actually, Section \ref{sextension} shows that our spectral results extend to operators of the form $\widetilde T_\omega$ defined in the same basis as that used for (\ref{matrixt1}) by the random infinite matrix
\bea\label{matrixtt}
&&\widetilde T_\omega= \\ \nonumber
&& \begin{pmatrix}
\ddots & e^{i(\omega_{4j-1}+\omega_{4j-3})}\gamma\delta &e^{i(\omega_{4j-1}+\omega_{4j-3})}\delta^2 & & & \cr
            & e^{i(\omega_{4j-1}+\omega_{4j})}\gamma\beta  &e^{i(\omega_{4j-1}+\omega_{4j})}\delta\beta & & &\cr
             &e^{i(\omega_{4j+1}+\omega_{4j+2})}\gamma\alpha &e^{i(\omega_{4j+1}+\omega_{4j+2})}\gamma\beta       & e^{i(\omega_{4j+3}+\omega_{4j+1})}\gamma\delta &e^{i(\omega_{4j+3}+\omega_{4j+1})}\delta^2  &  \cr               
             & e^{i(\omega_{4j+2}+\omega_{4j+4})}\alpha^2 &e^{i(\omega_{4j+2}+\omega_{4j+4})}\alpha\beta & e^{i(\omega_{4j+3}+\omega_{4j+4})}\gamma\beta  &e^{i(\omega_{4j+3}+\omega_{4j+4})}\delta\beta &     \cr
             & & & e^{i(\omega_{4j+5}+\omega_{4j+6})}\gamma\alpha &e^{i(\omega_{4j+1}+\omega_{4j+2})}\gamma\beta  &  \cr
           & &  & e^{i(\omega_{4j+6}+\omega_{4j+8})}\alpha^2 &e^{i(\omega_{4j+6}+\omega_{4j+8})}\alpha\beta & \ddots
\end{pmatrix},
\eea
with entries characterised by (\ref{consti}). When $g=1$, the CMV type random operator $\widetilde T_\omega$ is unitary. The extension of our spectral analysis to the non-unitary deformation $\widetilde T_\omega$ is provided by  the identity  $\sigma (\widetilde T_\omega)=\sigma(T^2_\omega)$ and the spectral mapping theorem.

Our main spectral results about $T_\omega$ read as follows. After dealing with some special cases and with the translation invariant situation where $e^{i\omega_j}=1$, $j\in \Z$, we show in Theorem \ref{t1} that the polar decomposition of $T_\omega=V_\omega K$ has the following structure: the isometric part $V_\omega$ is actually unitary and has the same matrix structure as $T_\omega$, i.e. $V_\omega$ a one dimensional random quantum walk. Moreover, the self-adjoint part $K$ is deterministic with spectrum consisting in two infinitely degenerate eigenvalues $\{g,1\}$ only.   One consequence of this fact is that $T_\omega$ is a completely non-unitary contraction operator for $g<1$, so that the random quantum walk operator it comes from has no singular spectrum, see Proposition \ref{Hsing}.  This special structure also allows us to get informations on the spectrum of $T_\omega$ in terms of properties on $\sigma(V_\omega)$ and $\sigma(K)$, by applying a general result stated as Theorem \ref{gensigvk} and Corollary \ref{simplif}. This result  
determines parts of the resolvent set of a bounded operator of the form $T=AB$ with $A$, $B$ bounded, invertible and normal, in terms of the spectra of $A$ and $B$. A direct consequence is that the disc of radius $g>0$ centered at 0  is always contained in the resolvent set of $T_\omega=V_\omega K$ and, when $V_\omega$ contains a gap in its spectrum, other non-trivial explicitly determined sets also belong to $\rho(T_\omega)$, see Lemmas \ref{form} and \ref{form2}.

Then, we take advantage of the fact that the two spectral projectors of $K$ induce a natural bloc structure for $T_\omega$ which suggests the use of the Schur-Feshbach map. It turns out the blocs of the decomposition of $V_\omega$ are tridiagonal operators. This fact allows us to provide conditions on the parameter $g\in ]0, 1[$ in Theorem \ref{suff} which ensure that the spectrum of $T_\omega$ is contained in a centered ring with inner radius $g$ and outer radius strictly smaller than one. It also allows us to show in Lemma \ref{lcnu} that the circles of radii $1$ and $g$ cannot support any eigenvalues of $T_\omega$. These results are deterministic, but we further show that they hold for any realization of the random phases $\{e^{i\omega_j}\}_{j\in \Z}$. Finally, we take a closer look at the case $g=0$, the farthest to the unitary case, in some sense. Assuming the random phases are uniformly distributed and making use of ergodicity, we show that the almost sure spectrum of  $T_\omega$ consists in the origin and a centered ring whose inner and outer radii we determine. Also, in case the peripheral spectrum of $T_\omega$ coincides with the unit circle, we get that it contains no eigenvalue, whereas  the spectrum of  $V_\omega$ is pure point, and that of the corresponding random quantum walk operator is absolutely continuous, see Proposition \ref{g=0}.

The rest of the paper is organized as follows. Section \ref{rqw} provides a short summary of the relevant informations needed to make connection between the non-unitary operators $T_\omega$ considered in this paper and random quantum walks on $\cT_4$ and $\Z^2$. The link is made explicit in Section \ref{qwnuo}. The spectral properties of non-unitary operators is developed in the following two sections, together with the consequences which can be drawn for the random quantum walks they are related to and the explicit link  between $T_\omega$ and $\widetilde T_\omega$. The last  section is devoted to the case $g=0$. 

{\bf Acknowledgments } This work was supported in part by the French Government through a fellowship granted by the French Embassy in Egypt ( Institut Francais d'Egypte). E. H. thanks Universit\'e Grenoble-1 and the Institut Fourier where  this project was started, for support and hospitality. A.J. would like to thank J. Asch, Th. Gallay and S. Nonnenmacher for useful discussions. 

\section{Random Quantum Walks on $\Z^2$ and $\cT_4$}\label{rqw}

We provide here the basics on simple random quantum walks defined on the lattice $\Z^2$ and the homogeneous tree $\cT_4$, of coordination number $4$. Such quantum walks naturally depend on a $U(4)$-matrix valued parameter $C$ which drives the walk and monitors the effects of the disorder at the same time. In the next section, we focus on certain families of matrix valued parameters of interest which directly lead to the non-unitary operators $T_\omega$ considered in this paper. We also explain the consequences of our analysis of $T_\omega$ for the corresponding random quantum walks.

For more about  random quantum  walks and their spectral properties, we refer the reader to the reviews \cite{Ko, V-A, J4} and papers \cite{bhj, HJS, JM, J3,  HJ} and references therein.

We describe random quantum walks on the graph $\cT_4$ only according to \cite{HJ}, and will simply mention the occasional changes necessary to deal with the lattice case, as in \cite{J3}.

\subsection{Random quantum walks on $\cT_4$}
Let $\cT_4$ be a homogeneous tree of degree $4$, that we will consider as the tree of the free group generated by 
$
A_4=\{a,b, a^{-1}, b^{-1}\},
$ 
with $a a^{-1}=a^{-1}a=e=b b^{-1}=b^{-1}b$, $e$ being the identity element of the group; see Figure (\ref{tree4}). We choose a vertex of $\cT_4$ to be the root of the tree, denoted by $e$. Each vertex $x=x_{1}x_{2}\dots x_{n}$, $n\in\N$ of $\cT_4$ is a reduced word of finitely many letters from the alphabet $A_4$ and an edge of $\cT_4$ is a pair of vertices $(x,y)$ such that $xy^{-1}\in A_4$. 
The number of nearest neighbors of any vertex is thus $4$ and any pair of vertices $x$ and $y$ can be joined by a unique set of edges, or path in $\cT_4$. 
\begin{figure}[htbp]
\begin{center}
      \includegraphics[scale=.4]{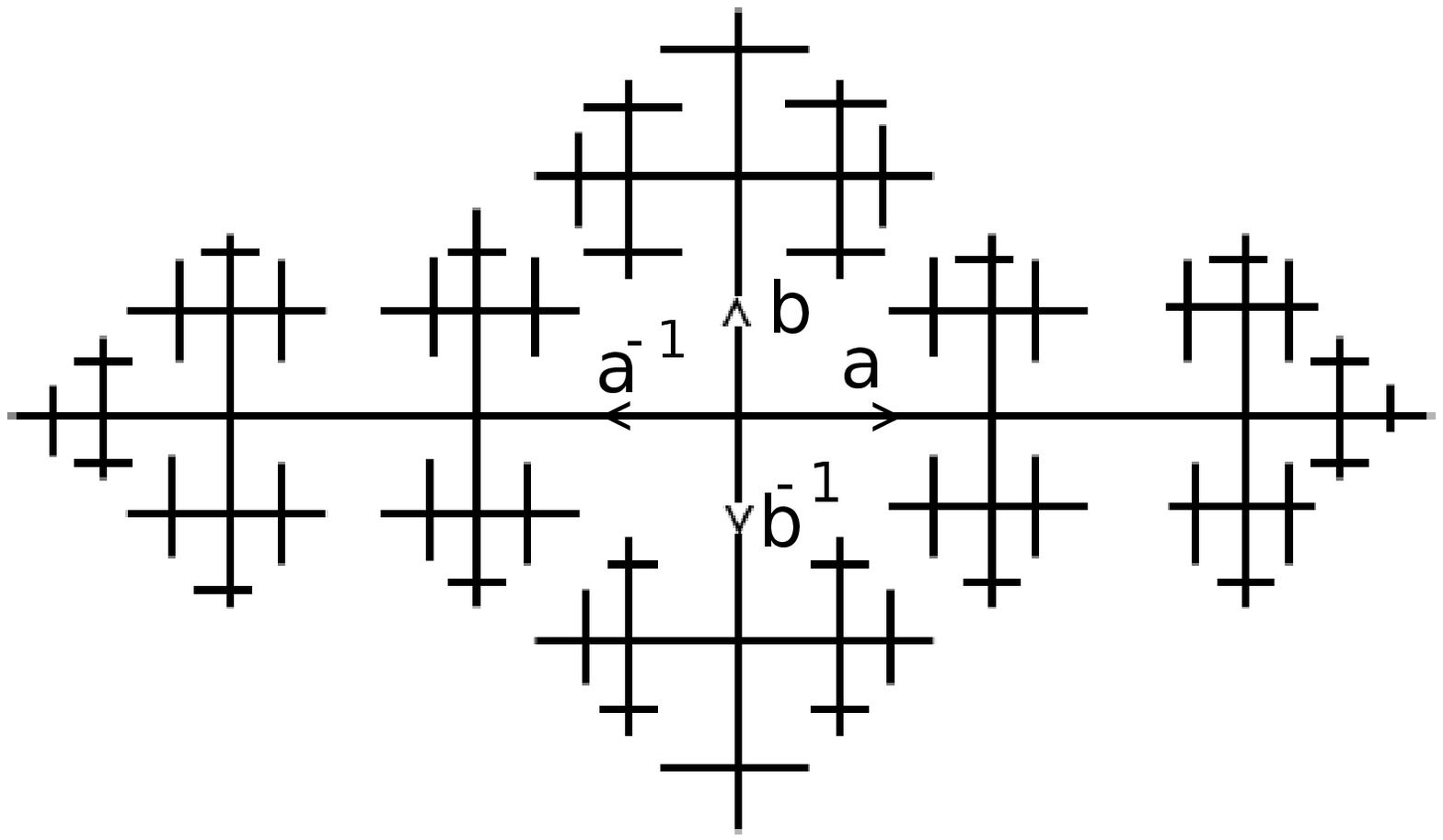}
   \end{center}\vspace{-.5cm}
   \caption{\footnotesize construction of $\cT_4$}\label{tree4}
\end{figure}
We identify $\cT_4$ with its set of vertices, and define the configuration Hilbert space of the walker by
$
l^2(\cT_4)=\Big\{\psi=\sum_{x\in \cT_4}\psi_x |x\ket \ \mbox{s.t.} \  \psi_x\in \C,  \ \sum_{x\in \cT_4}|\psi_x|^2<\infty\Big\},
$ 
where $|x\ket$ denotes the element of the canonical basis of $l^2(\cT_4)$ which sits at vertex $x$.
The coin Hilbert space (or spin Hilbert space) of the quantum walker on $\cT_4$ is $\C^4$. The elements of the ordered canonical basis of $\C^4$ are labelled by the letters of the alphabet $A_4$ as $\{|a\ket, |b\ket, |a^{-1}\ket, |b^{-1}\ket\}$.
The total Hilbert space is
\be\label{canbas}
\cK = l^2(\cT_4)\otimes \C^4 \ \mbox{ with canonical basis }\  \big\{x\otimes \tau\equiv |x\ket\otimes |\tau\ket ,\ \ x\in \cT_4, \tau\in A_4\big\}.
\ee
The quantum walk on the tree is characterized by the dynamics defined as the composition of a unitary update of the coin (or spin) variables in $\C^4$ followed by a coin (or spin) state dependent shift on the tree. 
Let $C\in U(4)$, $U(4)$ denoting the set of $4\times 4$ unitary matrices on $\C^4$. The unitary update operator given by $\I\otimes C$ acts on the canonical basis of $\cK$ as
\be\label{reshuffle}
(\I\otimes C) x\otimes \tau=|x\ket\otimes C|\tau\ket=\sum_{\tau'\in A_4} C_{\tau'\tau}\, x\otimes \tau',
\ee
where $\{C_{\tau'\tau}\}_{(\tau',\tau)\in A_4^2}$ denote the matrix elements of $C$.
The coin state dependent shift $S$ on $\cK$ is defined by
\bea\label{dirsumshift}
S={\sum}_{\tau\in A_4} S_\tau\otimes |\tau\ket\bra \tau| ,
\eea
where for all $\tau\in A_4$ the unitary operator $S_\tau$ is a shift that acts on $l^2(\cT_4)$ as  
$
S_\tau|x\ket=|x\tau\ket,  \forall x\in \cT_4
$,
with $S_\tau^{-1}=S_\tau^*=S_{\tau^{-1}}$. 
A quantum walk on $\cT_4$ is then defined as the one step unitary evolution operator on $\cK=l^2(\cT_4)\otimes \C^4$ given by 
\be\label{defeven}
U(C)=S(\I\otimes C)
={\sum_{\tau\in A_4\atop x\in \cT_4} |x\tau\ket\bra x |\otimes |\tau\ket\bra \tau| C},
\ee
where $C\in U(4)$ is a parameter.
A random quantum walk is defined via the following natural generalization. Let $\cC=\{C(x)\in U(4)\}_{x\in\cT_4}$ be a family of coin matrices  indexed by the vertices $x\in\cT_4$. A quantum walk with site dependent coin matrices is  defined by 
\bea\label{explicitsitedep}
U(\cC)&=&\sum_{\tau\in A_4, 
 x\in \cT_4} |x\tau\ket\bra x |\otimes |\tau\ket\bra \tau|  C(x).  
\eea

Consider  $\Omega=\T^{\cT_4\times A_4}$, $\T=\R/2\pi \Z$ the torus, as a probability space with $\sigma$ algebra generated by the cylinder sets and measure $\P=\otimes_{x\in\cT_4\atop \tau\in A_4}d\nu$ where $d\nu(\theta)=l(\theta)d\theta$, $l\in L^\infty(\T)$, is a probability measure on $\T$. 
Let $\{\omega^\tau_x\}_{x\in \cT_4, \tau\in A_4 }$ be a set of i.i.d. random variables on the torus $\T$ with common distribution $d\nu$. We will note $\Omega\ni \omega=\{\omega^\tau_x\}_{x\in \cT_4, \tau\in A_4 }$.
Our random quantum walks  are constructed by means of the following  families of site dependent random coin matrices: Let  $\cC_\omega=\{C_\omega(x)\in U(4)\}_{x\in \cT_4}$ be the collection of random coin matrices depending on a fixed matrix $C\in U(4)$, where, for each $x\in \cT_4$, $C_\omega(x)$ is defined by its matrix elements 
$ 
C_\omega(x)_{\tau \tau'}=e^{i\omega^{\tau}_{x\tau}}C_{\tau\tau'}, \ \ \tau,\tau'\in A_4^2.
$ 
The site dependence appears in the random phases only of the matrices $C_\omega(x)$, which have a fixed skeleton $C\in U(4)$. We consider random quantum walks 
defined by the operator
\be
U_\omega(C):=U(\cC_\omega) \ \mbox{on} \ \cK=l^2(\cT_4)\otimes \C^4
\ee
depending on $C\in U(4)$.
Defining a random diagonal unitary operator on $\cK$ by 
\be\label{dd}
\D_\omega x\otimes \tau = e^{i\omega^\tau_x} x\otimes \tau, \ \ \forall (x,\tau)\in \cT_4\times A_4,
\ee
we get that  $U_\omega(C)$ is manifestly unitary thanks to the identity
\be\label{iddef}
U_\omega(C)=\D_\omega U(C) \ \ \mbox{on } \cK.
\ee

\subsection{Random quantum walks on $\Z^2$}

The definition of a random quantum walk 
of the same type on $\Z^2$ instead of $\cT_4$ is the same, {\em mutatis mutandis}: the sites $x\in\cT_4$ are replaced by $x\in\Z^2$ so that the configuration space $l^2(\cT_4)$ is replaced by $l^2(\Z^2)$ but the coin space remains $\C^4$ in the definition of $\cK$. Thus the update operator $\I\otimes C$ is the same on $l^2(\Z^2)\otimes \C^4$ and on $l^2(\cT_4)\otimes \C^4$. Only the definition of the shifts $S_\tau$ in $S=\sum_{\tau\in A_4}S_\tau \otimes |\tau\ket\bra\tau|$, see (\ref{dirsumshift}), needs to be slightly changed.
We associate the letters $\tau$ of the alphabet $A_4$ with the canonical basis vectors $\{e_1, e_2\}$ of $\R^2$ as follows
$
a \leftrightarrow e_1,\ a^{-1} \leftrightarrow -e_1, \ b \leftrightarrow e_2,\  b^{-1} \leftrightarrow -e_2
$ 
and define the action of $S_\tau$ on $l^2(\Z^2)$ accordingly: for any $x=(x_1,x_2)\in \Z^2$,
$
S_a|x\ket=|x+e_1\ket, \ S_{a^{-1}}|x\ket=|x-e_1\ket, \ S_{b}|x\ket=|x+e_2\ket,\ S_{b^{-1}}|x\ket=|x-e_2\ket.
$ 
The random quantum walk is then defined by $U_\omega(C)$, as in (\ref{iddef}). 
\begin{rem}
All the results concerning $U_\omega(C)$ proven below for random quantum walks defined on $\cT_4$ hold for walks defined on $\Z^2$ as well, with the adaptations given above.
\end{rem}

\subsection{Spectral Criteria}

The main issue about random quantum walks concerns the long time behavior of the discrete random unitary dynamical system on the Hilbert space $\cK$ they give rise to by iteration of $U_\omega(C)$. The resulting dynamics is related to the spectral properties of $U_\omega(C)$  studied in the papers \cite{HJS, JM, ASWe, J3, HJ} on $\Z^d$ and $\cT_d$, as a function of $d\in \N$ and of the unitary matrix valued parameter $C$. We recall here well known spectral criteria which make a direct link between random quantum walks $U_\omega(C)$ on $\cT_4$ and $\Z^2$ and $T_\omega$ defined in (\ref{matrixt1}).

For a unitary operator $U$ on a separable Hilbert space $\cH$, the spectral measure $d\mu_\phi$ on the torus $\T$ associated with a normalized vector $\phi\in \cH$ decomposes as
$
d\mu_\phi=d\mu^{p}_\phi+d\mu^{ac}_\phi+d\mu^{sc}_\phi
$
 into its pure point, absolutely continuous and singular continuous components. The corresponding orthogonal spectral subspaces are denoted by $\cH^{\#}(U)$, with $\#\in\{p, ac, sc\}$. 
Then, see {\em e.g.} \cite{RS}, Wiener or RAGE Theorem relates the autocorrelation function ${n\mapsto\bra \phi | U^n \phi \ket}$ to the spectral properties of $U$:
\be
\lim_{N\ra\infty}\frac{1}{N}\sum_{n=0}^{N}|\bra \phi | U^n \phi\ket|^2=\sum_{\theta\in\T}(\mu^{p}_\phi\{\theta\})^2,
\ee
whereas the absolutely continuous spectral subspace of $U$, $\cH^{ac}(U)$, is given by
\be\label{critac}
\cH^{ac}(U)
=\overline{\Big\{\phi \ | \ \sum_{n\in \N}|\bra \phi | U^n \phi\ket|^2<\infty \Big\}}.
\ee

For example, consider $U(C)$ on $\cK$ given by (\ref{defeven}).  For any $C\in U(4)$, $\bra x\otimes\tau | U(C)^{2n+1} x\otimes~\tau \ket=0$ for any  $n\in\Z$ and $x\otimes \tau \in \cK$,  because $U(C)$ is off-diagonal.
Moreover, if $C=\I$,  $S=U(\I)$ further satisfies $\bra x\otimes \tau | S^{2n} x\otimes\tau \ket=\delta_{0,n}$, for all $x\otimes \tau \in \cK$, so that $d\mu_{x\otimes\tau}=\frac{d\theta}{2\pi}$ and $\sigma(S)=\sigma_{ac}(S)=\Ss$, the whole unit circle. The same holds for $U(\I)=S$ defined on $\Z^2$.
\section{Quantum Walks and Non-Unitary Operators}\label{qwnuo}

We consider here random quantum walks on $\cT_4$ characterized by coin matrices $C$ with a diagonal element of modulus one. As explained below, the non-trivial part of the dynamics they give rise to induces a systematic drift in one space direction. In other words, the dynamics induces a leakage of the wave vectors in one direction that is associated with a purely absolutely continuous part of spectrum of the corresponding evolution operator.
We approach this spectral question by analysing the restriction of $U_\omega(C)$ to a one-dimensional subspace that defines the random contractions $T_\omega$ we study in this paper. The consequences for such quantum walks of our results about the contractions $T_\omega$, namely the proof that the evolution operator is purely absolutely continuous for all realisations of the disorder, are spelled out in Lemma \ref{lemspr} and Proposition \ref{Hsing}.
Finally, we note that from the perspective of the determination of the spectral phase diagram for random quantum walks  on $\cT_4$, the corresponding set of coins matrices is not covered by the work \cite{HJ}. 

Without loss, we assume that the coin matrix $C$ with a diagonal element of modulus one takes the following form in the ordered basis  $\{|a\ket, |b\ket,|a^{-1}\ket,|b^{-1}\ket\}$, 
\be\label{lcm}
C=\begin{pmatrix}
\alpha & r & \beta & 0 \cr
q & g & s & 0 \cr
\gamma & t & \delta & 0 \cr
0 & 0 & 0 & e^{i\theta} \cr
\end{pmatrix}\equiv
\begin{pmatrix} \tilde C & \vec 0 \cr \vec 0 ^T & e^{i\theta}\end{pmatrix}\in U(4),
\ \ \mbox{where}\  \
\tilde C= \begin{pmatrix}
\alpha & r & \beta  \cr
q & g & s  \cr
\gamma & t & \delta  
\end{pmatrix}\in U(3),
\ee 
with $\theta\in \T$ and $1\geq g\geq 0$.
The assumption $g\geq 0$ always holds at the price of a multiplication of $C$, and thus of $U_\omega(C)$, by a global phase which does not affect the spectral properties.
By construction, $U_\omega(C)$ admits $\cK^{b^{-1}}$, the subspace characterized by a coin variable equal to $|b^{-1}\ket$, as an invariant subspace on which it acts as the shift $S_{b^{-1}}$, up to phases. Hence
\be
\sigma\left(U_\omega(C)|_{\cK^{b^{-1}}}\right)=\sigma_{ac}\left(U_\omega(C)|_{\cK^{b^{-1}}}\right)=\Ss.
\ee
Let $\cK^{\perp}$ be the complementary invariant subspace 
\be
\cK^{\perp}=\overline{\mbox{span }}\Big\{x\otimes \tau \ | \ x\in \cT_4, \tau\in \{a,b,a^{-1}\}\Big\},
\ee
where the notation $\overline{\mbox{span }}$ means the closure of the span of vectors considered.
On $\cK^{\perp}$ the action of $U_\omega(C)$ on the quantum walker makes it move horizontally back and forth, but it only makes it go up vertically, see Figure (\ref{tree4}). In a sense, the dynamics induces a leakage of the vectors in the direction corresponding to the coin state $|b\ket$.
In order to assess that $U_\omega(C)|_{\cK^{\perp}}$ has purely absolutely continuous spectrum, an application of criterion (\ref{critac}) leads us to consider $\bra \psi | U_\omega(C)^n\psi\ket$, $n\geq 0$, with normalized vector $\psi\in\cK^{\perp}$. Note that by construction, for all $x\in \cT_4$, all $\tau\in \{a,b,a^{-1}\}$
\be\label{escape}
\bra x\otimes b | U_\omega(C)^n x\otimes \tau\ket = \delta_{n,0}\delta_{b,\tau}, \ \ \forall \ n\in \N, \forall \ x\in \cT_4.
\ee
In particular, all spectral measures $d\mu_{x\otimes b}(\theta)=\frac{d\theta}{2\pi}$ on $\T$ and $\sigma\left(U_\omega(C)|_{\cK^{\perp}}\right)=\Ss$ as well. We thus have,
\be\label{hbdef}
\cH_b=\overline{\mbox{span }}\Big\{x\otimes \tau \ | \ x\in \cT_4, \tau\in \{b,b^{-1}\}\Big\}\subset \cH^{ac}(U_\omega(C)).
\ee
\subsection{Reduction to One Space Dimension}

To this end we introduce the horizontal subspace associated with the direction $a$
\be\label{p0}
\cH_0=\overline{\mbox{span }}\Big\{x\otimes \tau \ | \ x=a^m\in \cT_4, m\in\Z, \tau\in \{a,a^{-1}\}\Big\}\subset \cK^\perp\subset \cK,
\ee
and 
$P_0: \cK\ra \cK$, the orthogonal projector onto $\cH_0$. All vectors in this subspace live on the horizontal one dimensional lattice passing through the root of $\cT_4$. We can actually consider vectors on any other horizontal one dimensional lattice by attaching $\cH_0$ to any other vertex. 
To study $P_0U_\omega(C)^nP_0$, $n\geq 0$ we first note the following simple lemma which allows us to focus on the restriction of $U_\omega(C)$ to $\cH_0$.

\begin{lem}\label{contraction} Let $T_\omega: \cH_0\ra \cH_0$ be defined by
$
T_\omega=P_0U_\omega(C)P_0|_{\cH_0}
$ and $T=T_\omega|_{\omega=(\cdots, 0, 0, 0,\cdots)}$. 
Then, $T_\omega$ is a contraction,
\be\label{deft}
T_\omega=\D^0_\omega T,\ \mbox{where } \ \D^0_\omega=\mbox{\em diag }(e^{i\omega_x^\tau}), 
\ee
is the restriction of (\ref{dd}) to $\cH_0$, 
and, for any $n\in \N$,
$
P_0U_\omega(C)^nP_0|_{\cH_0}=T_\omega^n.
$
\end{lem}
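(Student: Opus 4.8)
The plan is to prove the three assertions in turn: the contraction bound and the phase factorization are immediate, while the composition identity $P_0U_\omega(C)^nP_0=(P_0U_\omega(C)P_0)^n$ carries the real content and rests on the one-directional nature of the drift in the $b$ direction. Throughout I would work with operators on $\cK$, write $Q_0=\I-P_0$, and restrict to $\cH_0$ only at the very end.

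\textbf{Contraction and phase factorization.} Since $U_\omega(C)$ is unitary and $P_0$ is an orthogonal projection, $\|T_\omega\|=\|P_0U_\omega(C)P_0\|\le\|U_\omega(C)\|=1$, so $T_\omega$ is a contraction. For the factorization I would use (\ref{iddef}), $U_\omega(C)=\D_\omega U(C)$, together with the fact that both $\D_\omega$ and $P_0$ are diagonal in the canonical basis $\{x\otimes\tau\}$: $\D_\omega$ by (\ref{dd}), and $P_0$ because it projects onto the closed span of a subset of basis vectors. Diagonal operators commute, so $P_0\D_\omega=\D_\omega P_0$, and since $\D_\omega$ preserves $\cH_0$ with restriction $\D^0_\omega$,
\be
T_\omega=P_0\D_\omega U(C)P_0|_{\cH_0}=\D_\omega P_0 U(C)P_0|_{\cH_0}=\D^0_\omega\,T ,
\ee
where $T=P_0U(C)P_0|_{\cH_0}$ is the $\omega=0$ specialisation.

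\textbf{The one-way drift.} The key geometric input is that the block form (\ref{lcm}) decouples the coin value $b^{-1}$, so on the invariant subspace $\cK^{\perp}$, to which $\cH_0$ belongs, the coin only takes values in $\{a,b,a^{-1}\}$. Consequently one step $S(\I\otimes C)$ appends to the position word $x$ a single letter from $\{a,a^{-1},b\}$, never $b^{-1}$. Writing $n_b(x)$ for the number of occurrences of $b$ in the reduced word $x$, and recalling that in the free group a $b$ can be cancelled only by a subsequent $b^{-1}$, we see that $n_b$ cannot decrease along the dynamics: a state supported at $x$ is sent to a combination of states supported at words $x'$ with $n_b(x')\in\{n_b(x),n_b(x)+1\}$, the value $n_b(x)+1$ arising exactly from the coin-$b$ component produced by the $b$-shift. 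I would then introduce
\be
\cH_{\ge 1}=\overline{\mbox{span }}\Big\{x\otimes\tau\ |\ x\in\cT_4,\ n_b(x)\ge 1,\ \tau\in\{a,b,a^{-1}\}\Big\}
\ee
and record the three consequences: (i) $\cH_{\ge 1}\perp\cH_0$, since $\cH_0$ has $n_b=0$; (ii) $U_\omega(C)\cH_{\ge 1}\subseteq\cH_{\ge 1}$ by monotonicity of $n_b$; and (iii) $U_\omega(C)\cH_0\subseteq\cH_0\oplus\cH_{\ge 1}$, with the part orthogonal to $\cH_0$ lying in $\cH_{\ge 1}$, because the only $n_b=0$ terms produced from $\cH_0$ carry coin $a$ or $a^{-1}$ at a pure-$a$ site and therefore already belong to $\cH_0$. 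In particular $Q_0U_\omega(C)P_0|_{\cH_0}$ has range in $\cH_{\ge 1}$.

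\textbf{Telescoping.} With (i)--(iii) I would induct on $n$. Inserting $\I=P_0+Q_0$ before the last factor,
\be
P_0U_\omega(C)^nP_0=P_0U_\omega(C)^{n-1}P_0U_\omega(C)P_0+P_0U_\omega(C)^{n-1}Q_0U_\omega(C)P_0 .
\ee
In the second term $Q_0U_\omega(C)P_0$ lands in $\cH_{\ge 1}$ by (iii), $U_\omega(C)^{n-1}$ preserves $\cH_{\ge 1}$ by (ii), and $P_0$ annihilates $\cH_{\ge 1}$ by (i), so it vanishes; using $P_0^2=P_0$ and the induction hypothesis $P_0U_\omega(C)^{n-1}P_0=(P_0U_\omega(C)P_0)^{n-1}$, the first term equals $(P_0U_\omega(C)^{n-1}P_0)(P_0U_\omega(C)P_0)=(P_0U_\omega(C)P_0)^n$. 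Restricting to $\cH_0$ yields $P_0U_\omega(C)^nP_0|_{\cH_0}=T_\omega^n$. I expect the only genuine obstacle to be the careful justification that the escape is strictly one-directional, i.e. the monotonicity of $n_b$; this hinges entirely on the decoupling of the $b^{-1}$ coin in (\ref{lcm}) and on the tree/free-group structure, and once it is phrased as the invariance in (ii), the operator identity follows by the routine telescoping above.
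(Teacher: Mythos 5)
Your proof is correct and takes essentially the same approach as the paper: both identify the only escape from $\cH_0$, namely $Q_0U_\omega(C)P_0$ mapping $x\otimes\tau$ to a multiple of $xb\otimes b$, and then argue that the walk never returns to $\cH_0$, so that the cross terms in the telescoping vanish. Your invariant subspace $\cH_{\ge 1}$ and the monotonicity of $n_b$ merely spell out in detail the step the paper asserts without proof, namely $P_0U_\omega(C)^k\,xb\otimes b=0$ for all $k\ge 1$.
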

\begin{proof} First, we have $\|T_\omega\|=\|P_0U_\omega(C)P_0\|\leq 1$ and  $[\D_\omega,P_0]=0$ proves the second statement.
Set $Q_0=\un -P_0$ and let us show that for all $k\geq 1$, $P_0U_\omega(C)^kQ_0U_\omega(C)P_0=0$. Indeed, for any basis vector $x\otimes\tau$ of $\cH_0$, $Q_0U_\omega(C)x\otimes\tau$ is proportional to $xb\otimes b$, where $xb\neq a^m$, for all $m\in \Z$. Consequently, $P_0U_\omega(C)^kxb\otimes b=0$, for any $k\geq 1$, which yields the result. \ep
\end{proof}
\begin{rems}\label{tbloc} 
i) The contraction $T$ can be written according to  (\ref{defeven}) as 
\be\label{quasiwalk}
T=S(\I\otimes C_{0})=S_a\otimes |a\ket\bra a| C_{0}+S_{a^{-1}}\otimes |a^{-1}\ket\bra a^{-1}| C_{0},
\ee
where $C_0=\Pi_0 C \Pi_0|_{\Pi_0\C^4}$ with $\Pi_0=|a\ket\bra a|+ |a^{-1}\ket\bra a^{-1}|$ is a contraction which takes the form
\be\label{czero}
C_0=\begin{pmatrix} \alpha  & \beta  \cr  \gamma & \delta 
\end{pmatrix} \  \mbox{ in the ordered basis $\{|a\ket,|a^{-1}\ket\}$.}
\ee
We will say that $C_0$ characterizes the operator $T$.\\
ii) Such an operator, or its higher dimensional analogs, define {\em contractive quantum walks}.\\
\end{rems}

Since $T_\omega$ is not normal in general, the inequalities
$
\mbox{spr }(T_\omega)\leq \|T_\omega\|\leq 1
$
are not necessarily  saturated. Actually, we prove below, Corollary \ref{normt}, that $\|T_\omega\|=1$, so that we need to extract spectral informations about $T_\omega$ in order to get decay as $n\ra \infty$ of the autocorrelation function $|\bra \psi | U_\omega(C)^n \psi\ket|$, $\psi\in \cH_0$.  Hence,
\begin{lem}\label{lemspr} With the notations above,
$ 
 \mbox{\em spr }(T_\omega)<1 \ \Rightarrow \ U_\omega(C) \ \mbox{is purely ac}, \ \forall \omega\in \Omega.
$ 
\end{lem}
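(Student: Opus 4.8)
The plan is to feed the geometric decay coming from $\mbox{spr}(T_\omega)<1$ into the absolutely continuous criterion (\ref{critac}) and to show that the resulting absolutely continuous subspace exhausts $\cK$. We already know from (\ref{hbdef}) that $\cH_b\subset\cH^{ac}(U_\omega(C))$, so the only canonical basis vectors left to control are those carrying a coin state $|a\ket$ or $|a^{-1}\ket$, namely the vectors living on the horizontal $a$-axes of $\cT_4$.

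First I would quantify the hypothesis. Since $\mbox{spr}(T_\omega)<1$, the spectral radius formula $\lim_{n\to\infty}\|T_\omega^n\|^{1/n}=\mbox{spr}(T_\omega)$ produces a constant $M<\infty$ and a radius $\rho\in(\mbox{spr}(T_\omega),1)$ with $\|T_\omega^n\|\le M\rho^n$ for all $n\in\N$. For any $\psi\in\cH_0$ one has $P_0\psi=\psi$, so Lemma \ref{contraction} gives $\bra\psi|U_\omega(C)^n\psi\ket=\bra\psi|P_0U_\omega(C)^nP_0\psi\ket=\bra\psi|T_\omega^n\psi\ket$, whence $|\bra\psi|U_\omega(C)^n\psi\ket|\le M\rho^n\|\psi\|^2$. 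As $\rho<1$ this is square summable in $n$, so criterion (\ref{critac}) places every $\psi\in\cH_0$ in $\cH^{ac}(U_\omega(C))$; that is, $\cH_0\subset\cH^{ac}(U_\omega(C))$.

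Next I would promote this from the distinguished axis $\cH_0$ to all of $\cK$. Every basis vector $x\otimes\tau$ with $\tau\in\{a,a^{-1}\}$ lies on a unique horizontal $a$-axis $\ell=\{ya^m : m\in\Z\}$, and these axes partition the vertices of $\cT_4$, hence decompose the span of the coin-$a$ and coin-$a^{-1}$ basis vectors. Attaching a copy $\cH_0^{(\ell)}$ of $\cH_0$ to each axis $\ell$, the restriction of $U_\omega(C)$ to $\cH_0^{(\ell)}$ is once more a contraction $T_\omega^{(\ell)}$ of the form (\ref{deft}), obtained from $T$ by the shift of the random phases carried by $\ell$, and the computation of Lemma \ref{contraction} applies verbatim. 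Granting $\mbox{spr}(T_\omega^{(\ell)})<1$, the same decay estimate gives $\cH_0^{(\ell)}\subset\cH^{ac}(U_\omega(C))$. Since $\cH_b$ together with the family $\{\cH_0^{(\ell)}\}_\ell$ spans $\cK$, one concludes $\cH^{ac}(U_\omega(C))=\cK$, i.e. $U_\omega(C)$ is purely absolutely continuous.

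The genuinely delicate point, and the reason the statement is quantified over all $\omega\in\Omega$, is precisely this last step: the decay on a given axis uses only the spectral radius of the contraction attached to that axis, and from one axis to the next these contractions differ by their phases. The argument therefore needs the bound $\mbox{spr}(T_\omega)<1$ to hold deterministically, for every realisation of the disorder simultaneously rather than just for the root axis — which is exactly what the deterministic ring estimate of Theorem \ref{suff} provides. The remaining bookkeeping, that the $a$-axes together with $\cH_b$ recover the full canonical basis of $\cK$, is routine once the tree geometry is unwound.
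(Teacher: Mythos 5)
Your proof is correct and takes essentially the same route as the paper's: geometric decay of $\|T_\omega^n\|$ from $\mbox{spr}(T_\omega)<1$, the criterion (\ref{critac}) to place $\cH_0$ inside $\cH^{ac}(U_\omega(C))$, and exhaustion of $\cK$ by $\cH_b$ together with copies of $\cH_0$ attached to every horizontal axis of the tree. The only difference is one of explicitness: what the paper compresses into ``since $\cH_0$ can be attached to any vertex of the tree, we get the result,'' you correctly unpack as the requirement that the spectral radius bound hold for the contraction attached to every axis, i.e.\ for every realisation of the phases, which is exactly how the lemma is applied in the paper (via the deterministic bound of Theorem \ref{suff}).
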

\begin{proof}
 If the spectral radius of $T_\omega$ satisfies $\mbox{spr }(T_\omega)<1$, then, for any 
$\eps>0$ s.t. 
$|\ln (\mbox{spr }(T_\omega))|-\eps>0$, $\|T^n_\omega\|\leq (\mbox{spr }(T_\omega)e^{\eps})^n$, if $n$ is large enough. Thus, for any normalized $\psi\in \cH_0$, we have
$
|\bra \psi | U_\omega^n \psi\ket| = |\bra  \psi | T_\omega^n \psi\ket|\leq e^{-n(|\ln (\mbox{\scriptsize spr }(T_\omega))|-\eps)},  \ \mbox{if $n$ is large enough}. 
$
Thus $\cH_0\subset \cH^{ac}(U_\omega)$. Since $\cH_0$ can be attached to any vertex of the tree, we get the result.\ep
\end{proof}
\begin{rem} We show below in Proposition \ref{Hsing} that a finer analysis of the structure of $T_\omega$ implies that $U_\omega(C)$ is purely ac for all $\omega$, if $g<1$.\end{rem}
\section{One-Dimensional Contractive Quantum Walk}

We turn to the analysis of the random contractive quantum walk defined by 
(\ref{deft}) and (\ref{quasiwalk}) with parameters 
\be\label{const}
C_0=\begin{pmatrix} \alpha  & \beta  \cr  \gamma & \delta 
\end{pmatrix} \ \mbox{s.t.}\ \tilde C= \begin{pmatrix}
\alpha & r & \beta  \cr
q & g & s  \cr
\gamma & t & \delta  
\end{pmatrix}\in U(3) \ \mbox{and}\ 0\leq g \leq 1. \ee 
We view this problem as a question of independent interest in the spectral analysis of non self-adjoint or, more adequately in the present context,  non-unitary operators.

We start by the following simple property relating $C_0$ to $\tilde C$.
\begin{lem}\label{gmg}
Let $C_0=\begin{pmatrix} \alpha  & \beta  \cr  \gamma & \delta 
\end{pmatrix}$ be a contraction on $\C^2$ which is not unitary. Then, there exists $\tilde C\in U(3)$ such that (\ref{const}) holds.
\end{lem}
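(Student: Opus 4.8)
The plan is to recast the statement as a one-step unitary border completion and to solve it with the singular value decomposition of $C_0$. First I would observe that placing the four entries of $C_0$ at the corners of $\tilde C$ is, after a fixed relabelling of the basis, the standard bordering of $C_0$ by one row and one column. If $P\in U(3)$ is the permutation exchanging the second and third coordinates, then $P\tilde C P^{T}=\begin{pmatrix} C_0 & v \cr w^{*} & g\end{pmatrix}$ with $v=(r,t)^{T}$, $w^{*}=(q,s)$ and unchanged central entry $g$. Since $P$ is unitary, $\tilde C\in U(3)$ iff the bordered matrix $M:=\begin{pmatrix} C_0 & v \cr w^{*} & g\end{pmatrix}$ is unitary; so it suffices to find $v,w\in\C^{2}$ and $g\in[0,1]$ with $M\in U(3)$ and then to read $r,q,s,t$ back off through $P^{-1}$.

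Next I would diagonalise. Writing a singular value decomposition $C_0=A\Sigma B^{*}$ with $A,B\in U(2)$ and $\Sigma=\mathrm{diag}(\sigma_1,\sigma_2)$, $\sigma_1\ge\sigma_2\ge 0$, conjugation gives $(A^{*}\oplus 1)\,M\,(B\oplus 1)=\begin{pmatrix}\Sigma & A^{*}v \cr w^{*}B & g\end{pmatrix}$, which is unitary precisely when $M$ is. The problem therefore reduces to completing the diagonal contraction $\Sigma$ to a $3\times 3$ unitary by a single border row and column; once this is done with data $(\tilde v,\tilde w,g)$, I recover $v=A\tilde v$, $w=B\tilde w$ and $C_0=A\Sigma B^{*}$, and permute back to $\tilde C$.

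The core computation is the diagonal completion. Orthonormality of the first two columns of $\begin{pmatrix}\Sigma & \tilde v\cr \tilde w^{*} & g\end{pmatrix}$ yields $|\tilde w_1|^{2}=1-\sigma_1^{2}$ and $|\tilde w_2|^{2}=1-\sigma_2^{2}$, while their mutual orthogonality (they overlap only in the bottom border) reads $\overline{\tilde w_1}\,\tilde w_2=0$. Since $C_0$ is not unitary one has $\sigma_2<1$, hence $\tilde w_2\neq 0$, forcing $\tilde w_1=0$ and thus $\sigma_1=1$; the remaining cross relations then fix $g=\sigma_2$. A transparent solution is the direct sum $1\oplus R$ with $R=\begin{pmatrix}\sigma_2 & -\sqrt{1-\sigma_2^{2}}\cr \sqrt{1-\sigma_2^{2}} & \sigma_2\end{pmatrix}$ acting on the last two coordinates, which is manifestly unitary; contractivity makes the square root real and non-unitarity gives $g=\sigma_2\in[0,1)$, as required by (\ref{const}).

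The step I expect to be decisive is the column orthogonality through the bottom border: it is this single relation that both pins down $g=\sigma_2$ and localises the whole defect on the smaller singular direction, the larger one necessarily saturating the contraction bound $\sigma_1=\|C_0\|=1$. This is coherent with the way the relevant $C_0$ arise here, since a corner compression $C_0=\Pi_0 C\Pi_0$ of a unitary is exactly the reverse of this completion and therefore automatically has largest singular value one. Once the explicit rotation $R$ is transported back through $A$, $B$ and $P$, checking $\tilde C^{*}\tilde C=\un$ is routine, which completes the construction.
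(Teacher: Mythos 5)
Your construction is correct and lands where the paper's does, but by a somewhat different and, in one respect, more careful route. The paper, after the same basis exchange, keeps $C_0$ in block form $\tilde C=\begin{pmatrix} C_0 & u \cr \bar v^T & g\end{pmatrix}$ and extracts everything directly from the unitarity equations: $C_0^*C_0=\I_{\C^2}-|v\ket\bra v|$ with $\|v\|^2=1-g^2$, so that $g^2$ is read off as the smaller eigenvalue of $C_0^*C_0$, $v$ as the corresponding eigenvector, and $u=-C_0v/g$ when $g\neq 0$, with the separate case $g=0$ handled by taking $u$ a normalized vector spanning $\ker C_0^*$. You instead diagonalize via an SVD and complete the diagonal contraction by the explicit rotation $1\oplus R$; this buys a uniform treatment of the degenerate case $g=0$ (no case split for the border column) and a completion whose unitarity is checked by inspection rather than by solving the border equations.

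The comparison also surfaces a point your write-up handles more explicitly than the paper. Both arguments hinge on the fact that a one-row, one-column unitary bordering of $C_0$ exists only if $\I_{\C^2}-C_0^*C_0$ has rank at most one, i.e. only if the larger singular value satisfies $\sigma_1=1$: this is exactly your orthogonality relation $\overline{\tilde w_1}\,\tilde w_2=0$, and it is exactly what the paper's sentence ``It follows that $\sigma(C_0^*C_0)=\{1,g^2\}$'' assumes silently. For a strict contraction such as $C_0=\frac12\,\I_{\C^2}$ (non-unitary, with $\sigma_1=\frac12$) no such $\tilde C\in U(3)$ exists, so the lemma as literally stated requires the additional hypothesis $\sigma_1=\|C_0\|=1$ --- harmless in context, since the $C_0$ of (\ref{czero}) arise by deleting one row and one column of $\tilde C\in U(3)$ and automatically satisfy it, as you note. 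Be aware, however, that your clause ``forcing $\tilde w_1=0$ and thus $\sigma_1=1$'' is the necessity direction only (it presupposes that the completion exists), so as a freestanding proof of the stated lemma your argument, like the paper's own, is valid precisely once $\sigma_1=1$ is added to the hypotheses; your closing paragraph identifies this correctly, which the paper's proof does not.
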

\begin{proof} By exchanging the basis vectors, we can look for $\tilde C$ in the bloc form $\tilde C=\begin{pmatrix} C_0  & u  \cr \bar v^T  & g 
\end{pmatrix}$, where $u, v$ denote vectors  in $\C^2$ and $g\in [0,1]$. Imposing that $\tilde C\in U(3)$, we get, 
\bea
C_0^*C_0&=& \I_{\C^2} -  | v\ket\bra  v|, \ \ 
\|v\|^2=1-g^2, \ \ 
C_0v=-gu\\ \nonumber
C_0C_0^*&=& \I_{\C^2} -  | u\ket\bra  u|, \ \ 
\|u\|^2=1-g^2, \ \ 
C_0^*u=-gv.
\eea
It follows that $\sigma(C_0^*C_0)=\{1,g^2\}$, which determines $0\leq g<1$ and the norm of the corresponding eigenvector $v$ of $C_0^*C_0$. If $g\neq 0$, then $u=-C_0v/g$. In case $g=0$, $u$ is a normalized eigenvector of $\ker C_0^*$. \ep
\end{proof}

Identifying the subspace $\cH_0$ with $l^2(\Z)$, we get a representation of $T_\omega$ by a 5-diagonal doubly infinite matrix.
Let $\{e_j\}_{j\in\Z}$, resp.  $\{a^m\otimes \tau\}_{m\in \Z}^{\tau\in\{a, a^{-1}\}}$, be the canonical orthonormal basis of $l^2(\Z)$, resp. $\cH_0$. We map the latter to the former according to the rule
\be\label{rule}
e_{2j} = a^j\otimes a, \ \ e_{2j+1} = a^j\otimes a^{-1}, \ \ j\in \Z
\ee 
and relabel the random phases $\omega_x^\tau$ accordingly,
so that we can identify $T_\omega$ with the matrix
\be\label{matrixt}
T_\omega=\D_\omega^0T=\begin{pmatrix}
\ddots & e^{i\omega_{2j-1}}\gamma &e^{i\omega_{2j-1}} \delta & & & \cr
            &0                                          &0                                                   & & &\cr
             &0                                          &0      & e^{i\omega_{2j+1}}\gamma &e^{i\omega_{2j+1}}\delta &  \cr               
             & e^{i\omega_{2j+2}}\alpha &e^{i\omega_{2j+2}}\beta & 0& 0 &     \cr
             & & & 0                                          &0  &  \cr
           & &  & e^{i\omega_{2j+4}}\alpha &e^{i\omega_{2j+4}}\beta & \ddots
\end{pmatrix},
\ee
where the dots mark the main diagonal and the first column is the image of the vector $e_{2j}$.
We note three special cases which allow for a complete description of the spectrum of $T_\omega$.
\begin{lem}\label{sc}
If $\alpha=\delta=0$,  the subspaces $\mbox{\em span }\{e_{2j+1}, e_{2j+2}\}$ reduce $T_\omega$. We have
\be
T_\omega=\oplus_{j\in\Z}T_\omega^{(j)}, \ \mbox{where } \ T_\omega^{(j)}=\begin{pmatrix} 
0 & \gamma e^{i\omega_{2j+1}}\cr  \beta e^{i\omega_{2j+2}} & 0
\end{pmatrix},\ j\in \Z,
\ee
 $\sigma(T_\omega)=\cup_{j\in \Z}\{\pm g^{1/2} e^{i\theta/2}e^{i(\omega_{2j+1}+\omega_{2j+2})/2}\}$, and $g=\min{(|\beta|,|\gamma|)}$, $\theta=\arg (\beta\gamma)$.

If $\beta=\gamma=0$, the subspaces $\cH_+=\overline{\mbox{span }}\{e_{2j}\}_{ j\in \Z}$ and $\cH_-=\overline{\mbox{span }}\{e_{2j+1}\}_{ j\in \Z}$ reduce $T_\omega$. We have, with $S_\pm$ the standard shifts on $\cH_\pm$,
\be
T_\omega=T^{(+)}_\omega\oplus T^{(-)}_\omega, 
\ee
where, $\ T^{(+)}_\omega=T_\omega|_{\cH_+}$ is unitarily equivalent to $|\alpha|S_+$, similarly  $\ T^{(+)}_\omega=T_\omega|_{\cH_+}$ is unitarily equivalent to $|\delta|S_-$.
 $\sigma(T_\omega)=\Ss\cup g\Ss$, and $g=\min(|\alpha|, |\delta|)$.

If $g=1$, $T_\omega$ is unitary with $\sigma_c(T_\omega)=\emptyset$, almost surely, 
unless $C_0\in U(2)$ is diagonal, in which case $\sigma(T_\omega)=\sigma_{ac}(T_\omega)=\Ss.$
\end{lem}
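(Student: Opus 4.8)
The plan is to read the action of $T_\omega$ on the canonical basis directly off the columns of (\ref{matrixt}) and then specialise to each of the three cases. Inspecting the matrix gives
\be
T_\omega e_{2j}=e^{i\omega_{2j-1}}\gamma\, e_{2j-1}+e^{i\omega_{2j+2}}\alpha\, e_{2j+2},\qquad
T_\omega e_{2j+1}=e^{i\omega_{2j-1}}\delta\, e_{2j-1}+e^{i\omega_{2j+2}}\beta\, e_{2j+2}.
\ee
All three statements rest on this formula together with the constraint from Lemma \ref{gmg}, namely $\sigma(C_0^*C_0)=\{1,g^2\}$, which pins down the moduli of the relevant entries of $C_0$.

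For $\alpha=\delta=0$ I would check that $T_\omega e_{2j+1}=e^{i\omega_{2j+2}}\beta\, e_{2j+2}$ and $T_\omega e_{2j+2}=e^{i\omega_{2j+1}}\gamma\, e_{2j+1}$, so each $\mbox{span }\{e_{2j+1},e_{2j+2}\}$ is invariant; since these two-dimensional spaces partition the basis, $T_\omega=\oplus_j T_\omega^{(j)}$ with the stated anti-diagonal blocks. The eigenvalues of $T_\omega^{(j)}$ are the two square roots of $e^{i(\omega_{2j+1}+\omega_{2j+2})}\beta\gamma$, and $\sigma(T_\omega)$ is the closure of their union over $j$. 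Finally, with $\alpha=\delta=0$ one has $C_0^*C_0=\mbox{diag }(|\gamma|^2,|\beta|^2)$, so the constraint forces $\{|\beta|,|\gamma|\}=\{1,g\}$; hence $|\beta\gamma|=g$, $\theta=\arg(\beta\gamma)$ and $g=\min(|\beta|,|\gamma|)$, which yields the stated eigenvalues.

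For $\beta=\gamma=0$ the formula shows $T_\omega$ leaves $\cH_+$ and $\cH_-$ invariant, acting on $\cH_+$ as $e_{2j}\mapsto e^{i\omega_{2j+2}}\alpha\, e_{2j+2}$ and on $\cH_-$ as $e_{2j+1}\mapsto e^{i\omega_{2j-1}}\delta\, e_{2j-1}$. These are bilateral weighted shifts whose weights all have modulus $|\alpha|$, resp. $|\delta|$; conjugating by an appropriate diagonal unitary (built from the usual recursion $d_{j+1}=d_j w_j/|w_j|$, which is consistent on $\Z$ precisely because $|w_j|$ is constant) identifies them with $|\alpha|S_+$ and $|\delta|S_-$. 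Since $S_\pm$ is unitary with spectrum $\Ss$, the restrictions have spectra $|\alpha|\Ss$ and $|\delta|\Ss$; the constraint $C_0^*C_0=\mbox{diag }(|\alpha|^2,|\delta|^2)$ forces $\{|\alpha|,|\delta|\}=\{1,g\}$, whence $\sigma(T_\omega)=\Ss\cup g\Ss$ with $g=\min(|\alpha|,|\delta|)$.

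The case $g=1$ is where the real work lies. Here $\sigma(C_0^*C_0)=\{1\}$, so $C_0\in U(2)$ and $T_\omega=\D_\omega^0 S(\I\otimes C_0)$ is unitary. If $C_0$ is diagonal then $\beta=\gamma=0$ with $|\alpha|=|\delta|=1$, so by the previous paragraph $T_\omega$ is unitarily equivalent to $S_+\oplus S_-$; these bilateral shifts have purely absolutely continuous spectrum $\Ss$, giving $\sigma(T_\omega)=\sigma_{ac}(T_\omega)=\Ss$. If $C_0$ is not diagonal, then $\beta,\gamma\neq 0$ and $T_\omega$ is a genuine random CMV matrix, i.e. a one-dimensional random quantum walk of the type analysed in \cite{JM, ASWe}; the assertion $\sigma_c(T_\omega)=\emptyset$ almost surely is exactly the pure point (localisation) statement established there, which I would invoke directly. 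This last step is the main obstacle, being the only place that requires input beyond the elementary reduction arguments, and also the only place where the randomness and the almost sure qualifier genuinely enter.
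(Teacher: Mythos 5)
Your proof is correct and follows essentially the same route as the paper's: read the block/shift decompositions directly off the matrix (\ref{matrixt}), pin down the coefficient moduli from the constraint (\ref{const}), and invoke \cite{JM} for the $g=1$ case. The only cosmetic difference is that you extract $\{|\beta|,|\gamma|\}=\{1,g\}$ (resp. $\{|\alpha|,|\delta|\}=\{1,g\}$) from $\sigma(C_0^*C_0)=\{1,g^2\}$ as in Lemma \ref{gmg}, whereas the paper manipulates the unitarity relations of $\tilde C$ directly ($\bar q s=0$, hence $q=t=0$ or $s=r=0$); both arguments yield the same conclusion, and your explicit handling of the diagonal-coin subcase and of the closure of the union of block spectra is, if anything, slightly more careful than the paper's.
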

\begin{proof}
The decompositions of $T_\omega$ under the assumptions made is straightforward. The only point is the determination of the spectral radius when the coefficients are constrained by (\ref{const}). We consider $\alpha=\delta=0$ only, the other case being similar. In such a case (\ref{const}) implies $\bar qs=0$ so that either $q=t=0$ or $s=r=0$. In which case $|\gamma|=1$, or $|\beta|=1$. In the first case, $g^2+|r|^2=1=|r|^2+|\beta|^2$, so that $g=|\beta|=\min(|\beta|, |\gamma|)$. The case $|\beta|=1$ is similar. Finally, the case $g=1$ implies that $C_0$ is unitary, so that $T_\omega$ is a one dimensional random quantum walk, and \cite{JM} applies to yield the result. \ep
\end{proof}
\begin{rem}\label{highdim}
Quantum walks of the general form (\ref{iddef}) can be defined on $\Z^d$ or $\cT_{2d}$, with $d\in \N$, using the obvious extension to higher dimensions, see \cite{HJ}.  When reduced to a one dimensional lattice of the form $\cH_0$, they give rise to a contractive quantum walk which has the form of a CMV type matrix of the kind (\ref{matrixt}). In general, $U(\cC)$ is not a dilation of the corresponding contractive quantum walk. However, if the quantum walk $U_\omega(C)$ defined on $\cT_{2d}$, say, with coin matrix $C\in U(2d)$ 
having similar properties as for $d=2$, this property is still true:  let us denote the coin states basis by $\{ |a_j\ket, |a_j^{-1}\ket\}_{j=1,\dots, d}$ and assume $C |a_j^{-1}\ket=e^{-i\theta_j}|a_j^{-1}\ket$, for $j=2,\dots,d$. Consider the subspace $\cH_0$ associated with the direction $a_1$ and $P_0$ the corresponding orthogonal projection onto $\cH_0$; then $U_\omega(C)$ is a dilation of the contraction $T_\omega= P_0 U_\omega(C) P_0$,  i.e. Lemma \ref{contraction} holds. 
\end{rem}

\subsection{Translation invariant case}
The deterministic, translation invariant case characterized by $\D_\omega=\un$, i.e. $T_\omega=T$, is best tackled by Fourier methods. We map  $l^2(\Z)$ unitarily onto $ L^2(\T;\C^2)$ via the identification 
\be
\psi=\sum_{j\in\Z}c_j|j\ket\in l^2(\Z) \ \leftrightarrow \ \ f(x)=\begin{pmatrix}f_+(x)\cr f_-(x)\end{pmatrix}\in L^2(\T;\C^2),
\ee where $f_+(x)=\sum_{j}c_{2j}e^{i2jx}$, $f_-(x)=\sum_{j}c_{2j+1}e^{i(2j+1)x}$, $x\in \T$. Then $T$ is unitarily equivalent on $L^2(\T;\C^2)$ to the multiplication operator by the analytic matrix valued function
\be\label{matt}
T\simeq T(x)=\begin{pmatrix} \alpha e^{i2x} & \beta e^{ix} \cr \gamma e^{-ix} & \delta e^{-i2x}
\end{pmatrix}.
\ee 
The following criteria for more symmetries hold true.
\begin{lem}\label{symlem}
i) $T$ is self-adjoint $\Leftrightarrow$   $C_0=\begin{pmatrix} 0  & e^{i\nu}  \cr   e^{-i\nu} & 0 
\end{pmatrix}$, $\nu\in \R$. This implies $g=1$, $T$ is unitary and $\sigma(T)=\{-1,1\}$.\\
ii) $T_\omega$ is unitary $\Leftrightarrow$ $|\det C_0|=\left|\det \begin{pmatrix} \alpha  & \beta  \cr \gamma  & \delta 
\end{pmatrix}\right|=1$.
\end{lem}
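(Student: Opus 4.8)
The plan is to treat both statements through the Fourier representation of the translation invariant operator $T$, namely its unitary equivalence to multiplication by the matrix symbol $T(x)$ given in (\ref{matt}). A multiplication operator by a continuous matrix valued function on $\T$ is self-adjoint, respectively unitary, if and only if the symbol $T(x)$ is Hermitian, respectively unitary, for almost every $x$; and since the entries of $T(x)$ are trigonometric polynomials, ``almost every'' upgrades to ``every'' $x\in\T$ by continuity. Thus both equivalences reduce to pointwise algebraic conditions on the $2\times2$ matrix $T(x)$.

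For i), I would first impose $T(x)=T(x)^*$ for all $x$ and read off the entries of (\ref{matt}). The diagonal condition requires $\alpha e^{2ix}=\overline{\alpha e^{2ix}}$ for all $x$, i.e. $\alpha e^{2ix}\in\R$ for all $x$, which forces $\alpha=0$, and likewise $\delta=0$; the off-diagonal condition requires $\beta e^{ix}=\overline{\gamma e^{-ix}}=\bar\gamma e^{ix}$, hence $\gamma=\bar\beta$. This already pins down $C_0=\begin{pmatrix}0&\beta\cr\bar\beta&0\end{pmatrix}$.

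The decisive step, and the one I expect to be the only non-routine point, is to promote $|\beta|\le1$ to $|\beta|=1$: self-adjointness of a contraction alone would permit any $|\beta|<1$, so the constraint (\ref{const}) is indispensable here. The embedding $\tilde C\in U(3)$ forces, exactly as in the proof of Lemma \ref{gmg}, the identity $C_0^*C_0=\I_{\C^2}-|v\ket\bra v|$, whence $\sigma(C_0^*C_0)=\{1,g^2\}$ and in particular the largest singular value of $C_0$ equals $1$. But for the Hermitian-off-diagonal $C_0$ above one computes $C_0^*C_0=|\beta|^2\,\I_{\C^2}$, a scalar matrix, so both singular values coincide and equal $|\beta|$; comparison gives $|\beta|=1$ and $g=1$. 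Writing $\beta=e^{i\nu}$ yields the stated form, and $C_0$ is then unitary, so $T$ is unitary by part ii). For the spectrum, the symbol $T(x)=\begin{pmatrix}0&e^{i(\nu+x)}\cr e^{-i(\nu+x)}&0\end{pmatrix}$ has trace $0$ and determinant $-1$ for every $x$, hence eigenvalues $\pm1$ independent of $x$, so $\sigma(T)=\{-1,1\}$. The converse implication is the immediate check that this symbol is Hermitian.

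For ii), I would first remove the randomness: since $\D^0_\omega$ is unitary and $T_\omega=\D^0_\omega T$ by (\ref{deft}), one has $T_\omega^*T_\omega=T^*T$ and $T_\omega T_\omega^*=\D^0_\omega(TT^*)(\D^0_\omega)^*$, so $T_\omega$ is unitary if and only if $T$ is. Then $T$ is unitary iff $T(x)^*T(x)=\I_{\C^2}$ for all $x$ (finite dimensional isometry being automatically unitary); computing this product, the diagonal entries are $|\alpha|^2+|\gamma|^2$ and $|\beta|^2+|\delta|^2$ and the off-diagonal ones are $(\bar\alpha\beta+\bar\gamma\delta)e^{\mp ix}$, so the condition is precisely $C_0^*C_0=\I_{\C^2}$, i.e. $C_0\in U(2)$. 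Finally, for a contraction the modulus of the determinant is the product of the two singular values, both $\le1$, so $|\det C_0|=1$ forces both to equal $1$ and hence $C_0$ unitary, while the converse is clear; equivalently $\sigma(C_0^*C_0)=\{1,g^2\}$ gives $|\det C_0|=g$, and $g=1$ iff $C_0$ is unitary. This closes the chain $T_\omega$ unitary $\Leftrightarrow$ $T$ unitary $\Leftrightarrow$ $C_0\in U(2)$ $\Leftrightarrow$ $|\det C_0|=1$.
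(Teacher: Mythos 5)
Your proof is correct, and its skeleton matches the paper's: part i) via the Fourier symbol (\ref{matt}) combined with the constraint (\ref{const}), and part ii) via the chain $T_\omega$ unitary $\Leftrightarrow$ $T$ unitary $\Leftrightarrow$ $C_0\in U(2)$ $\Leftrightarrow$ $|\det C_0|=1$. The differences are local but real. For i), the paper only says that self-adjointness of $T(x)$ for all $x$ together with (\ref{const}) ``readily implies'' the claim; you supply exactly the missing step, and your decisive observation --- that (\ref{const}) forces $1\in\sigma(C_0^*C_0)$ (this is the computation behind Proposition \ref{k2}, where $\sigma(\kappa_k^2)=\{1,g^2\}$), which is incompatible with the scalar matrix $C_0^*C_0=|\beta|^2\,\I_{\C^2}$ unless $|\beta|=g=1$ --- is the right way to close it. For ii), the paper deduces $T$ unitary $\Leftrightarrow$ $C_0$ unitary from the structural formula (\ref{quasiwalk}) rather than from the symbol, and it proves the determinant criterion as a general statement (Lemma \ref{lemgen}) for contractions $W\in M_d(\C)$ via the spectral decomposition, using $\|W^n\|\le 1$ to kill the eigennilpotents and $\|P_k\|\ge 1$ to force the eigenprojections to be orthogonal; your substitute --- $|\det C_0|$ is the product of the singular values, each at most $1$, so $|\det C_0|=1$ forces $C_0^*C_0=\I$ --- is more elementary, equally rigorous, and in fact extends verbatim to $M_d(\C)$, so nothing is lost by bypassing the paper's lemma here.
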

\begin{proof} 
We have $T$ is sef-adjoint if and only if $T(x)$ is self-adjoint for all $x\in \T$, which 
together with (\ref{const}) readily implies the first statement. The second statement 
 is a consequence of the general simple lemma
\begin{lem}\label{lemgen}
Let $W\in M_d(\C)$ be a contraction. Then, $W$ is unitary $\Leftrightarrow$ $|\det(W)|=1$.
\end{lem}
Indeed, $T_\omega$ is unitary if and only if $T$ is unitary, which is true, see (\ref{quasiwalk}) if and only if $C_0$ is unitary, and the lemma applies to the last matrix valued contraction. \\
\begin{proof}
The direct implication is trivial.
Assume $|\det(W)|=1$ and consider the spectral decomposition 
\be
W=\sum_{k=1}^m\lambda_kP_k+D_k, 
\ee
where $\sigma(W)=\{\lambda_k\}_{1\leq k\leq m}$, and $\{P_k\}_{1\leq k\leq m}$, resp. $\{D_k\}_{1\leq k\leq m}$, are the eigenprojectors, resp. eigennilpotents of $W$. Since $W$ is a contraction the condition on the  determinant implies  $|\lambda_k|=1$, $k=1, 2, \dots, m$. Moreover, $\|W^n\|\leq 1$ for all $n\geq 0$, so that all eigennilpotents are equal to zero, since 
\be
W^n=\sum_{k=1}^m\lambda_k^nP_k+\sum_{r=0}^KD_k^r\lambda_k^{n-r}\begin{pmatrix} n \cr  r
\end{pmatrix}, \ \ \mbox{$n\geq K$},
\ee
where $K$ is the maximal index of nilpotency of the $D_k's$. Eventually, the general property $\|P_k\|\geq 1$ together with $\sigma(W)\subset \Ss$  imply that $\|P_k\|=1$ for $W$ to be a contraction, so that $P_k=P_k^*$ for all $k=1, 2, \dots, m$. \ep
\end{proof}
\end{proof}

As $T$ is unitarily equivalent to a multiplication operator, its spectrum is readily obtained in the generic case. For all $x\in \T$, consider the eigenvalues of $T(x)$
\be\label{evti}
\lambda_{\pm}(x)=\frac12\left(\alpha e^{i2x}+\delta e^{-i2x}\pm\{(\alpha e^{i2x}+\delta e^{-i2x})^2-4(\alpha\delta-\beta\gamma)\}^{1/2}\right).
\ee
Assume that $\T\cap Z=\emptyset$, where  $Z=\{x\in \C \ | \ \lambda_-(x)=\lambda_+(x)\}$ is the finite set of exceptional points $T(x)$, see \cite{Ka}. Then, with $P_\pm(x)$ the eigenprojectors of the diagonalizable matrix $T(x)$, we get that  $(T-z)^{-1}$ is given for $z\in\rho(T)$ by the multiplication operator 
$
R_z(x)=\frac{P_-(x)}{\lambda_-(x)-z}+\frac{P_+(x)}{\lambda_+(x)-z},
$ 
on $L^2(\T; \C^2)$ 
and $\sigma (T)=\mbox{Ran }\lambda_-\cup \mbox{Ran }\lambda_+$. 
\subsection{Polar decomposition of $T_\omega$}

In case the contractive quantum walk $T_\omega$ is random, we cannot use Fourier transform methods to determine  $\mbox{spr} (T_\omega)$ but, instead, we resort to the properties of its polar decomposition.
Let us come back to the general case (\ref{matrixt}) and consider the unique decomposition  $T_\omega=V_\omega K_\omega$, where $K_\omega$ is a non negative operator  on $l^2(\Z)$ and $V_\omega$ is an isometry on $l^2(\Z)$.  We note that due to (\ref{deft}), $K_\omega$ is independent of the randomness since $T_\omega^*T_\omega = T^*T=K^2$. 

\begin{thm}\label{t1}
The contraction $T_\omega$ defined on $l^2(\Z)$  by (\ref{matrixt}) with the constraint (\ref{const}) admits the polar decomposition $T_\omega=V_\omega K$, where $0\leq K\leq \un$ is given by
\be\label{specdeck}
K=P_1+gP_2,  \ \ \mbox{with} \ \ \sigma(K)=\sigma_{ess}(K)=\{1,g\} \ \ \mbox{and} \ \ \|K\|=1,
\ee 
and with infinite dimensional spectral projectors $P_j$, $j=1,2$ given in (\ref{projk}) below.\\
The isometry $V_\omega$ is unitary on $l^2(\Z)$ and takes the form $V_\omega=\D_\omega^0 V$, with 
\be\label{matv}
V=\frac{1}{1+g}\begin{pmatrix}
\ddots & \gamma(1+g)-qt &\delta(1+g)-st & & & \cr
            &0                                          &0                                                   & & &\cr
             &0                                          &0      & \gamma(1+g)-qt  &\delta(1+g)-st  &  \cr               
             & \alpha(1+g)-qr &\beta(1+g)-sr & 0& 0 &     \cr
             & & & 0                                          &0  &  \cr
           & &  & \alpha(1+g)-qr &\beta(1+g)-sr & \ddots
\end{pmatrix},
\ee
where the dots mark the main diagonal and the first column is the image of the vector $e_{2j}$.
\end{thm}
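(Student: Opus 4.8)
The plan is to build the decomposition from the factorization $T=S(\I\otimes C_0)$ recorded in Remark~\ref{tbloc}, equation~(\ref{quasiwalk}), where $S=S_a\otimes|a\ket\bra a|+S_{a^{-1}}\otimes|a^{-1}\ket\bra a^{-1}|$ is \emph{unitary} on $\cH_0\simeq l^2(\Z)\otimes\C^2$ (each $S_\tau$ is a shift with $S_\tau^*=S_{\tau^{-1}}$), while $\D^0_\omega$ is a diagonal unitary. Thus $T_\omega=\D^0_\omega S(\I\otimes C_0)$ is a product of two unitaries and the single fixed contraction $\I\otimes C_0$.

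First I would extract the positive part. Since $\D^0_\omega$ and $S$ are unitary,
\be
T_\omega^*T_\omega=(\I\otimes C_0^*)\,S^*(\D^0_\omega)^*\D^0_\omega S\,(\I\otimes C_0)=\I\otimes(C_0^*C_0),
\ee
which is $\omega$-independent and identifies $K=|T_\omega|=\I\otimes|C_0|$, as already noted before the statement. By Lemma~\ref{gmg} the Hermitian matrix $C_0^*C_0$ has spectrum $\{1,g^2\}$, the eigenvalue $g^2$ being carried by the vector $v$ there, with $\|v\|^2=1-g^2$; hence $|C_0|$ is the rank-one perturbation $|C_0|=\I-\frac{1}{1+g}|v\ket\bra v|$ of the identity (indeed $|C_0|v=gv$ and $|C_0|=\I$ on $v^{\perp}$). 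Tensoring with $\I$ produces the two eigenvalues $\{1,g\}$ of $K$, each of infinite multiplicity, the associated infinite-dimensional spectral projectors $P_1,P_2$, and the assertions $\sigma(K)=\sigma_{ess}(K)=\{1,g\}$ and $\|K\|=1$, which is exactly~(\ref{specdeck}).

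Next I would construct the unitary factor. The candidate for the polar part of $C_0$ on $\C^2$ is
\be
U_0:=C_0-\frac{1}{1+g}|u\ket\bra v|,
\ee
where $u,v$ are the vectors of Lemma~\ref{gmg}, identified with the border entries of $\tilde C$ as $u=(r,t)^T$ and $v=(\bar q,\bar s)^T$. Using only $C_0v=-gu$, $C_0^*u=-gv$ and $\|u\|^2=\|v\|^2=1-g^2$, a short computation gives simultaneously $U_0^*U_0=\I$ and $U_0|C_0|=C_0$. Setting $V:=S(\I\otimes U_0)$, a product of unitaries and hence unitary, one obtains $VK=S(\I\otimes U_0)(\I\otimes|C_0|)=S(\I\otimes C_0)=T$, so that $T_\omega=\D^0_\omega T=(\D^0_\omega V)K=:V_\omega K$ with $V_\omega$ unitary. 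Because $V=S(\I\otimes U_0)$ is assembled from $U_0$ in precisely the way $T=S(\I\otimes C_0)$ is assembled from $C_0$, its matrix has the band template of~(\ref{matrixt}) with the entries of $C_0$ replaced by those of $U_0$; writing out the entries of $U_0$, namely $\alpha(1+g)-qr$, $\beta(1+g)-sr$, $\gamma(1+g)-qt$, $\delta(1+g)-st$ (all divided by $1+g$), and reinstating the phases $\D^0_\omega$ reproduces~(\ref{matv}).

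The delicate point is that $U_0$ is genuinely \emph{unitary} rather than merely a partial isometry, which is what forces $V_\omega$ to be unitary and is special to this structure. For $g>0$ this is automatic, since $C_0$ is then invertible and $U_0=C_0|C_0|^{-1}$ is the unique polar isometry; the real content is that the same closed-form $U_0$ stays unitary as $g\to 0$, and this is exactly where the equality $\|u\|=\|v\|$ coming from the unitarity of the dilation $\tilde C$ enters the verification $U_0^*U_0=\I$. When $g=0$ the canonical polar partial isometry of $T_\omega$ is no longer unitary (its kernel is the infinite-dimensional $\ker K$); there $V_\omega=\D^0_\omega V$ must be read as the distinguished unitary extension singled out by the formula, and I would state this explicitly so that $T_\omega=V_\omega K$ with unitary $V_\omega$ is unambiguous.
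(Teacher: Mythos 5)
Your proof is correct, and it reaches the decomposition by a route that differs meaningfully from the paper's. The paper proceeds by brute force: Proposition \ref{k2} computes $K^2=T^*T$ entrywise from the band matrix (\ref{matrixt}), identifies the $2\times 2$ blocks $\kappa_k$, and diagonalizes them; then the proposition that follows defines $V:=TK^{-1}$ for $0<g<1$, computes $\kappa_k^{-1}$ explicitly, and simplifies to the unitary matrix (\ref{simv}), the case $g=0$ being covered only by a remark that the resulting formula for $V$ survives the limit $g\to 0$ even though $K^{-1}$ does not. You instead exploit the factorization $T=S(\I\otimes C_0)$ of (\ref{quasiwalk}), with $S$ unitary on $\cH_0$, to get $T_\omega^*T_\omega=\I\otimes C_0^*C_0$ in one line, and you read off $|C_0|=\I-\frac{1}{1+g}|v\rangle\langle v|$ from Lemma \ref{gmg}; this reproduces (\ref{specdeck}) and (\ref{projk}) (your $v/\|v\|$ is exactly $v_2^{(k)}$ of (\ref{evek})). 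For the unitary part you write down the closed-form candidate $U_0=C_0-\frac{1}{1+g}|u\rangle\langle v|$ and verify $U_0^*U_0=\I$ and $U_0|C_0|=C_0$ using only the relations $C_0v=-gu$, $C_0^*u=-gv$, $\|u\|^2=\|v\|^2=1-g^2$ of Lemma \ref{gmg}; I checked these computations and the matching of the entries of $U_0$ with (\ref{simv}), hence with (\ref{matv}), and they are right. What your approach buys is twofold: it replaces infinite-matrix manipulations by $2\times 2$ algebra, and, more substantively, it treats $g=0$ on the same footing as $g>0$, since your verification never inverts $C_0$ or $K$; this closes cleanly the point that the paper itself only settles by a limiting remark, and you are right to flag that at $g=0$ the unitary factor is no longer the canonical polar partial isometry but a distinguished unitary extension of it. What the paper's route buys in exchange is the explicit inverse blocks $\kappa_k^{-1}$ and the concrete eigenvector formulas (\ref{evek}), which it reuses verbatim in the later computations of Lemma \ref{qlem}.
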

\begin{cor}\label{normt}  for all $\omega\in \Omega$,  $T_\omega$ satisfies:
$
\|T_\omega\|=1 \  \mbox{and } \ T_\omega\,  \mbox{is unitary}  \Leftrightarrow  g=1.
$
\end{cor}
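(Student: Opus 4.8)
The plan is to obtain Corollary \ref{normt} directly from the polar decomposition $T_\omega=V_\omega K$ supplied by Theorem \ref{t1}, reducing both assertions to elementary properties of the deterministic positive part $K$. Since that theorem asserts $V_\omega$ is unitary and $K$ satisfies $0\leq K\leq\un$ with $\sigma(K)=\{1,g\}$, $g\in[0,1]$, both the value of $\|T_\omega\|$ and the unitarity of $T_\omega$ are controlled by $K$ alone, and hence hold uniformly in $\omega\in\Omega$.

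For the norm statement I would use that a norm-preserving map does not change operator norms: for any normalized $\psi$, $\|T_\omega\psi\|=\|V_\omega K\psi\|=\|K\psi\|$ since $V_\omega$ is unitary by Theorem \ref{t1}, whence $\|T_\omega\|=\|K\|$. By (\ref{specdeck}) we have $\|K\|=1$ (the spectral value $1$ is attained, $P_1$ being nonzero), so $\|T_\omega\|=1$ for every $\omega$. I note that this first claim in fact requires only the isometric character of the polar factor on the range of $K$, so the full unitarity of $V_\omega$ is not needed here.

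For the unitarity equivalence I would argue that, $V_\omega$ being unitary, $T_\omega=V_\omega K$ is unitary if and only if $K=V_\omega^* T_\omega$ is unitary. But $K$ is a positive contraction, so $\sigma(K)\subset[0,1]$; were $K$ unitary its spectrum would lie on the unit circle, forcing $\sigma(K)\subset\{1\}$ and hence $K=\un$ (a self-adjoint operator with a single spectral point is the corresponding scalar multiple of the identity). Conversely $K=\un$ makes $T_\omega=V_\omega$ unitary. Thus $T_\omega$ is unitary iff $K=\un$, and writing $K=P_1+gP_2$ with $P_1+P_2=\un$, the equality $K=\un$ amounts to $(1-g)P_2=0$; since $P_2$ is infinite dimensional, hence nonzero, this forces $g=1$. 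This yields the equivalence for all $\omega$, consistently with Lemma \ref{symlem}~ii), as $g=1$ corresponds to $C_0\in U(2)$, i.e. $|\det C_0|=1$, by Lemma \ref{gmg}.

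There is essentially no obstacle once Theorem \ref{t1} is granted: the corollary is a short consequence of the spectral form of $K$. The only points warranting a line of justification are that a positive contraction is unitary only if it equals the identity (using self-adjointness together with $\sigma(K)\subset[0,1]$), and that $P_2\neq 0$ so that $g=1$ is genuinely forced rather than vacuous; both follow at once from the statement of Theorem \ref{t1}. The substantial work — that $K$ is deterministic with $\sigma(K)=\{1,g\}$ and that $V_\omega$ is unitary of the stated band form — resides entirely in Theorem \ref{t1} and is assumed here.
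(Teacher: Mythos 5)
Your proposal is correct and follows essentially the same route as the paper: the corollary is stated there as an immediate consequence of Theorem \ref{t1}, with exactly the argument you spell out — unitarity of $V_\omega$ gives $\|T_\omega\|=\|K\|=1$, and $T_\omega$ is unitary iff the positive contraction $K=P_1+gP_2$ is, which by self-adjointness and $\sigma(K)=\{1,g\}$ with $P_2\neq 0$ forces $g=1$. Your added justifications (positivity plus unimodular spectrum forcing $K=\un$, and nontriviality of $P_2$) are precisely the details the paper leaves implicit.
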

\begin{rems}\label{detg} i) Condition (\ref{const}) implies $g=\left|\det \begin{pmatrix} \alpha  & \beta  \cr \gamma  & \delta 
\end{pmatrix}\right|$.\\
ii) The unitary operator $V$ corresponds to a one-dimensional quantum walk with unitary coin matrix  $\frac{1}{1+g}\begin{pmatrix} \alpha(1+g)-qr & \beta(1+g)-sr\cr \gamma(1+g)-qt & \delta(1+g)-st \end{pmatrix}$, according to Remark \ref{tbloc}. \\ iii) The random quantum walk $V_\omega$ displays dynamical localization for all values of the parameters in (\ref{const}), unless the coin matrix is diagonal, in which case it is absolutely continuous, see \cite{JM}.\\
iv) When $g=1$, the original random quantum walk characterized by (\ref{lcm})  decouples into one-dimensional problems the solutions of which are known, \cite{JM}. Thus, we assume $0\leq g<1$.\\
v) We have $0\in \sigma(K)$ iff \ $0\in \sigma(T)$, and $\ker \, K=\ker \, T$, since $V$ is unitary.
\end{rems}

The proof of Theorem \ref{t1} entails explicit computations of $K$ and $V_\omega$ which are detailed in the next two propostions.

\begin{prop}\label{k2} Assume $0\leq g<1$. The two-dimensional orthogonal subspaces $\cH^{(k)}=\mbox{span}\{e_{2k}, e_{2k+1}\}$ reduce the operator $K=(T^*T)^{1/2}$ which takes the form 
\be\label{redk}
K=\bigoplus_{k\in \Z}\kappa_k \ \ \mbox{with respect to  } \ \ \cH_0=\bigoplus_{k\in\Z}\cH^{(k)}.
\ee
The bloc $ \kappa_k$ acts in the ordered basis $\{e_{2k}, e_{2k+1}\}$ as
\be
\kappa_k= \frac{1}{|q|^2+|s|^2}\begin{pmatrix} g|q|^2+|s|^2 & \bar q s(g-1) \cr q \bar s (g-1) & g|s|^2+|q|^2 \end{pmatrix}, \ \ \forall k\in \Z,
\ee
see   (\ref{const}).
The spectral decomposition of $\kappa_k$ reads
\be
\kappa_k=Q^{(k)}_1+g Q^{(k)}_2, \ \ \mbox{where } \ \ Q^{(k)}_1=\frac{1}{|q|^2+|s|^2}\begin{pmatrix} |s|^2 & -\bar q s \cr -q \bar s & |q|^2 \end{pmatrix}=\un_{2}-Q^{(k)}_2.
\ee
\end{prop}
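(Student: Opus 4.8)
The plan is to use the factorization $T=S(\I\otimes C_0)$ from (\ref{quasiwalk}) together with the unitarity of the phases. First I would observe that because $\D_\omega^0$ is unitary, $T_\omega^*T_\omega=T^*(\D_\omega^0)^*\D_\omega^0T=T^*T$, so $K=(T^*T)^{1/2}$ is deterministic and it suffices to diagonalize $T^*T$.

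Second, I would establish the block reduction. Under the identification (\ref{rule}) the shift acts on the basis of $\cH_0$ by $Se_{2j}=e_{2j+2}$ and $Se_{2j+1}=e_{2j-1}$, a bijection of the basis onto itself; hence $S|_{\cH_0}$ is unitary. Writing $\I\otimes C_0$ for the operator that acts as $C_0$ on each fiber $\cH^{(j)}=\mbox{span}\{a^j\otimes a,\,a^j\otimes a^{-1}\}=\mbox{span}\{e_{2j},e_{2j+1}\}$, this would give
\[
T^*T=(\I\otimes C_0)^*(S|_{\cH_0})^*(S|_{\cH_0})(\I\otimes C_0)=(\I\otimes C_0)^*(\I\otimes C_0),
\]
which is block diagonal with every block equal to $C_0^*C_0$ relative to $\cH_0=\bigoplus_k\cH^{(k)}$. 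Equivalently, one checks directly that $Te_{2j},Te_{2j+1}\in\mbox{span}\{e_{2j-1},e_{2j+2}\}$ and that these spans are mutually orthogonal in $j$, so that $\bra Te_m,Te_{2j}\ket$ vanishes unless $m\in\{2j,2j+1\}$, and then reads off the $2\times2$ Gram matrix. Taking the positive square root blockwise yields $K=\bigoplus_k\kappa_k$ with $\kappa_k=(C_0^*C_0)^{1/2}$, which is (\ref{redk}).

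Third, I would compute $(C_0^*C_0)^{1/2}$ explicitly using the $U(3)$ constraint (\ref{const}). Orthonormality of the first and third columns of $\tilde C$ gives $|\alpha|^2+|\gamma|^2=1-|q|^2$, $|\beta|^2+|\delta|^2=1-|s|^2$ and $\bar\alpha\beta+\bar\gamma\delta=-\bar qs$, whence
\[
C_0^*C_0=\begin{pmatrix}1-|q|^2 & -\bar qs \\ -q\bar s & 1-|s|^2\end{pmatrix}=\un_2-| v\ket\bra v|,\qquad v=\begin{pmatrix}\bar q \\ \bar s\end{pmatrix}.
\]
Normalization of the second row of $\tilde C$ gives $\|v\|^2=|q|^2+|s|^2=1-g^2$ (consistent with Lemma \ref{gmg}), so $| v\ket\bra v|$ has eigenvalue $1-g^2$ on $\C v$ and $0$ on $(\C v)^\perp$. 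Hence $C_0^*C_0$ has eigenvalues $g^2$ and $1$, with orthogonal eigenprojectors $Q_2^{(k)}=| v\ket\bra v|/\|v\|^2$ and $Q_1^{(k)}=\un_2-Q_2^{(k)}$; these are exactly the matrices in the statement. The positive square root preserves the eigenprojectors and replaces $g^2,1$ by $g,1$, so $\kappa_k=Q_1^{(k)}+gQ_2^{(k)}$, and expanding the sum reproduces the claimed closed form.

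The only delicate points are bookkeeping ones: reading off $S|_{\cH_0}$ under (\ref{rule}) carefully enough to be sure it is unitary, so that no boundary effect spoils the exact block-diagonal structure of $T^*T$, and rewriting the raw entries of $C_0^*C_0$ in terms of $q,s,g$ via the orthonormality relations of $\tilde C$. Once $C_0^*C_0=\un_2-| v\ket\bra v|$ is identified, the spectral calculus is immediate and I expect no real obstacle.
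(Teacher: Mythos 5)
Your proof is correct, and it reaches the paper's conclusion by a somewhat different mechanism. The paper obtains the block structure by brute force: it multiplies out the banded matrix (\ref{matrixt}) to get $K^2=T^*T$ entrywise, rewrites the resulting $2\times 2$ blocks via the $U(3)$ constraint, and then extracts $\sigma(\kappa_k^2)$ from the determinant and trace before computing eigenvectors and projectors explicitly. You instead exploit the factorization $T=S(\I\otimes C_0)$ of Remark \ref{tbloc}: since $S|_{\cH_0}$ permutes the canonical basis of $\cH_0$ it is unitary, so it cancels in $T^*T=(\I\otimes C_0)^*(\I\otimes C_0)=\I\otimes C_0^*C_0$, giving the block-diagonal form with no matrix multiplication at all; you then recognize $C_0^*C_0=\un_2-|v\ket\bra v|$ with $\|v\|^2=|q|^2+|s|^2=1-g^2$ (the same rank-one identity the paper derives in Lemma \ref{gmg} but does not reuse here), so the eigenvalues $\{1,g^2\}$, the eigenprojectors, and the square root all follow from the rank-one structure rather than from a det/trace computation. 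Your organization is arguably cleaner and makes the origin of the two-point spectrum transparent; what the paper's more computational route buys is the explicit normalized eigenvectors (\ref{evek}), which it needs verbatim later (they furnish the orthonormal bases of $\cH_1$, $\cH_2$ used in Lemma \ref{qlem}), though these are of course immediate from your vector $v$ and its orthogonal complement. As a small side benefit, your bookkeeping also fixes a typo in the paper's proof, which writes $\sigma(\kappa_k^2)=\{1,g\}$ where $\{1,g^2\}$ is meant.
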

We deduce the spectral decomposition of $K$ given in Theorem \ref{t1} immediately:
\be\label{projk}
\sigma(K)=\{1,g\}, \ \ K=P_1+gP_2, \ \ \mbox{where } \ \ P_j=\bigoplus_{k\in \Z} Q_j^{(k)}, \ j=1,2.
\ee
\begin{proof}
A straightforward computation based on definition (\ref{matrixt}) yields
\be
K^2=\bigoplus_{k\in \Z}\begin{pmatrix} |\alpha|^2+|\gamma|^2 & \delta \bar \gamma +\beta \bar \alpha  \cr \gamma  \bar \delta + \alpha \bar \beta  & |\beta|^2+|\delta|^2 \end{pmatrix}\equiv \bigoplus_{k\in \Z} \kappa_k^2
\ee
with the decomposition of $\cH_0$ given by (\ref{redk}). Condition (\ref{const}) allows us to rewrite the blocs $\kappa_k^2$ of this decomposition as
\be
\kappa_k^2=\begin{pmatrix} 1-|q|^2 & -  s \bar q  \cr -  q \bar s & 1-|s|^2 \end{pmatrix}, \ \ \mbox{ with } \left\{\begin{matrix} \hspace{-.5cm}\det \kappa_k^2=1-(|q|^2+|s|^2)=g^2 \cr \tr \kappa_k^2=2-(|q|^2+|s|^2)=1+g^2.\end{matrix}\right.
\ee
Hence, $\sigma(\kappa_k^2)=\{1, g\}$ with corresponding normalized eigenvectors 
\bea\label{evek}
v^{(k)}_1=\frac{1}{\sqrt{|q|^2+|s|^2}}\begin{pmatrix} s \cr -q \end{pmatrix},  \ \ 
v^{(k)}_2=\frac{1}{\sqrt{|q|^2+|s|^2}}\begin{pmatrix} \bar q \cr \bar s \end{pmatrix}.
\eea
Explicit computations yield the spectral projectors $Q^{(k)}_1=|v^{(k)}_1\ket\bra v^{(k)}_1|$ and $Q^{(k)}_2=\un_2-Q^{(k)}_2$, and, in turn, $\kappa_k={(\kappa_k^2)}^{1/2}$. The spectral decomposition of $K$ follows immediately.\ep
\end{proof}

We now turn to the computation of the isometry $V_\omega=\D_\omega^0 V$. Recall that  translation invariant operators with the same band structure matrix as $T$ are characterized by a $2\times 2$ matrix, in the same way as $T$ is characterized by $\begin{pmatrix} \alpha  & \beta  \cr  \gamma & \delta 
\end{pmatrix}$, see Remark \ref{tbloc}.
\begin{prop}
For $1>g>0$, $V=TK^{-1}$ where $K^{-1}=\bigoplus_{k\in \Z}\kappa_k^{-1}$ and 
\be
\kappa_k^{-1}=\frac{1}{g(1+g)}\begin{pmatrix}
1-|s|^2+g & s\bar q\cr q \bar s &  1-|q|^2+g
\end{pmatrix}.
\ee
The operator $V$ has the same band structure as $T$ and is  characterized by the unitary matrix
\be\label{simv}
\begin{pmatrix} \alpha  & \beta \cr \gamma & \delta  
\end{pmatrix}\kappa_k^{-1}=\frac{1}{1+g}\begin{pmatrix}   \alpha(1+g)-qr & \beta(1+g)-sr\cr \gamma(1+g)-qt & \delta(1+g)-st\end{pmatrix}.
\ee
\end{prop}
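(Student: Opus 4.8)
The plan is to establish the three claims in turn: the block form of $K^{-1}$, the preservation of the band structure under $V=TK^{-1}$, and the explicit coin matrix (\ref{simv}). Since $1>g>0$, Theorem \ref{t1} gives $\sigma(K)=\{1,g\}$ with $0\notin\sigma(K)$, so $K$ is boundedly invertible and the isometric factor of $T=VK$ is simply $V=TK^{-1}$. As $K=\bigoplus_{k\in\Z}\kappa_k$ is block diagonal with respect to $\cH^{(k)}=\mbox{span}\{e_{2k},e_{2k+1}\}$, so is $K^{-1}=\bigoplus_{k\in\Z}\kappa_k^{-1}$, and I would obtain $\kappa_k^{-1}$ either from the spectral form $\kappa_k^{-1}=Q^{(k)}_1+g^{-1}Q^{(k)}_2$ of Proposition \ref{k2} or directly from the $2\times2$ adjugate formula together with $\det\kappa_k=g$. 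The stated expression then follows after inserting the identity $|q|^2+|s|^2=1-g^2=(1-g)(1+g)$, which is precisely the relation $\det\kappa_k^2=g^2$ established in the proof of Proposition \ref{k2}, so that a factor $(1-g)$ cancels and leaves the prefactor $1/(g(1+g))$.

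For the band structure I would read off from (\ref{quasiwalk}) and the labelling (\ref{rule}) the action $Te_{2j}=\gamma e_{2j-1}+\alpha e_{2j+2}$ and $Te_{2j+1}=\delta e_{2j-1}+\beta e_{2j+2}$, which shows that $T$ carries the block $\cH^{(j)}$ into $\mbox{span}\{e_{2j-1},e_{2j+2}\}$ and is characterized, in the sense of Remark \ref{tbloc}, by the local coin matrix $C_0$. Since $K^{-1}$ acts within each $\cH^{(j)}$ as $\kappa_j^{-1}$, the product $V=TK^{-1}$ sends $\cH^{(j)}$ into the same output span $\mbox{span}\{e_{2j-1},e_{2j+2}\}$; hence $V$ inherits the five-diagonal CMV-type structure of $T$ and, by composing the two local actions, is characterized by the product $C_0\kappa_k^{-1}=\begin{pmatrix}\alpha&\beta\cr\gamma&\delta\end{pmatrix}\kappa_k^{-1}$.

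The remaining and genuinely computational step, which I expect to be the main obstacle, is to reduce $C_0\kappa_k^{-1}$ to the explicit right-hand side of (\ref{simv}). Multiplying out and clearing $1/(g(1+g))$, the $(1,1)$ entry is $\alpha(1-|s|^2+g)+\beta q\bar s$; using $1-|s|^2=|q|^2+g^2$ this factors as $q(\alpha\bar q+\beta\bar s)+g\alpha(1+g)$. The crucial simplification comes from the unitarity of $\tilde C$: orthogonality of rows one and two of $\tilde C\tilde C^*=\I$ gives $\alpha\bar q+\beta\bar s=-gr$, while orthogonality of rows three and two gives $\gamma\bar q+\delta\bar s=-gt$. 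Substituting the first relation produces exactly $g(\alpha(1+g)-qr)$, and the analogous treatment of the other three entries, using in addition $1-|q|^2=|s|^2+g^2$ and the same two orthogonality relations, yields the claimed matrix after dividing by $g(1+g)$. That this coin matrix is moreover unitary need not be verified by hand: $V$ is unitary by Theorem \ref{t1}, and a translation-invariant operator of this band form is unitary iff its coin matrix is, by Lemma \ref{symlem}(ii).
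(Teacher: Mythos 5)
Your proposal follows essentially the same route as the paper: block-wise inversion of $K$ (the paper cites Proposition \ref{k2} plus the spectral theorem; your adjugate computation with $\det\kappa_k=g$ and $|q|^2+|s|^2=1-g^2=(1-g)(1+g)$ is an equivalent, correct variant), preservation of the band structure because $K^{-1}$ acts within each block $\cH^{(k)}$ while $T$ maps $\mbox{span}\{e_{2j},e_{2j+1}\}$ into $\mbox{span}\{e_{2j-1},e_{2j+2}\}$, and finally the entry-by-entry reduction of $C_0\kappa_k^{-1}$ via the row-orthogonality relations $\alpha\bar q+\beta\bar s=-gr$ and $\gamma\bar q+\delta\bar s=-gt$ from $\tilde C\tilde C^*=\I$ (with $g$ real). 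This is precisely the paper's ``systematically using constraint (\ref{const}) to simplify the factor $g$ in the denominator,'' and all four entries check out as you indicate.

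The one genuine flaw is your last step. This proposition, together with Proposition \ref{k2}, constitutes the proof of Theorem \ref{t1} --- the paper says so explicitly just before Proposition \ref{k2}. So you cannot invoke Theorem \ref{t1} to conclude that $V$, and hence the coin matrix in (\ref{simv}), is unitary: that is circular. (The same remark applies, harmlessly, to your citing Theorem \ref{t1} for $\sigma(K)=\{1,g\}$; the correct reference is Proposition \ref{k2}, which precedes this one.) The repair is cheap but should be made explicit: either verify unitarity of the $2\times 2$ matrix in (\ref{simv}) directly from orthonormality of the rows of $\tilde C$, or argue abstractly that $V^*V=K^{-1}T^*TK^{-1}=\un$ while $T=S(\I\otimes C_0)$ is invertible for $g>0$ (since $S$ is unitary and $|\det C_0|=g\neq 0$, see Remark \ref{detg}), so that $V=TK^{-1}$ is an invertible isometry, hence unitary; Lemma \ref{symlem} ii) then transfers unitarity of $V$ to its coin matrix. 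With that substitution your argument is complete and matches the paper's.
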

\begin{rem}
The unitary operator $V$ is well defined in the limit $g\ra 0$, with the constraint (\ref{const}), even though $K^{-1}$ is not.
\end{rem}
\begin{proof} The first statement is a consequence of Proposition \ref{k2} and of the spectral theorem. The invariance of the subspaces $\mbox{span}\{e_{2k}, e_{2k+1}\}$ under $K^{-1}$ and the matrix structure of $T$ imply that $V$ has the same structure as $T$. It is a matter of computation to check statement (\ref{simv}), systematically using constraint (\ref{const}) to simplify the factor $g$ in the denominator.\ep
\end{proof}

\subsection{ Structure of the Contraction $T_\omega$}\label{scnu}
 Recall that a contraction is said to be completely non-unitary, cnu for short,  if it possesses no non-trivial closed invariant subspace on which it is unitary, see {\em e.g.} \cite{SNF}.
\begin{lem}\label{lcnu} Let $0\leq g <1$. Then, for all  $\omega\in \Omega$, the operator $T_\omega$ is either cnu or it is  unitarily equivalent to the direct sum of a shift and of $g$ times a shift.
Consequently,
\be
\sigma_p(T_\omega)\cap \Ss =\emptyset, \ \ \mbox{and for $0<g<1$,}\ \ \sigma_p(T_\omega)\cap g\Ss=\emptyset.
\ee
\end{lem}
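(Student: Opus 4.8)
The plan is to work entirely with the polar decomposition $T_\omega=V_\omega K$ of Theorem \ref{t1}, in which $V_\omega$ is unitary and $K=P_1+gP_2$ with $\sigma(K)=\{1,g\}$ and $0\le g<1$, and to reduce every assertion to a question about the eigenvectors of the unitary $V_\omega$. First I would observe the norm identity $\|K\psi\|^2=\|\psi\|^2-(1-g^2)\|P_2\psi\|^2$. If $T_\omega\psi=\lambda\psi$ with $|\lambda|=1$, then $\|\psi\|=\|T_\omega\psi\|=\|V_\omega K\psi\|=\|K\psi\|$, and since $g<1$ this forces $P_2\psi=0$, i.e. $\psi\in\ran P_1$, $K\psi=\psi$, hence $V_\omega\psi=V_\omega K\psi=\lambda\psi$. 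The same computation applied to an eigenvalue $\mu$ with $|\mu|=g>0$ gives $\|K\phi\|=g\|\phi\|$, whence $P_1\phi=0$, $\phi\in\ran P_2$, $K\phi=g\phi$ and $V_\omega\phi=(\mu/g)\phi$ with $|\mu/g|=1$. Conversely, any eigenvector of $V_\omega$ lying in $\ran P_1$ (resp. $\ran P_2$) produces an eigenvalue of $T_\omega$ of modulus $1$ (resp. $g$). Thus both spectral claims are equivalent to the single statement that $V_\omega$ possesses no eigenvector inside $\ran P_1$ and none inside $\ran P_2$.

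For the dichotomy I would invoke the Sz.-Nagy--Foias canonical decomposition \cite{SNF}: the maximal reducing subspace $\cH_u$ on which the contraction $T_\omega$ acts unitarily satisfies, by the same norm identity, $\cH_u\subset\ran P_1$, with $T_\omega|_{\cH_u}=V_\omega|_{\cH_u}$; conversely every $V_\omega$-reducing subspace $\cM\subset\ran P_1$ is $T_\omega$-reducing with $T_\omega|_\cM=V_\omega|_\cM$ unitary, so $\cM\subset\cH_u$. Hence $\cH_u$ is exactly the largest $V_\omega$-reducing subspace contained in $\ran P_1$, and $T_\omega$ is cnu precisely when that subspace is $\{0\}$. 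The alternative is that such a subspace exists, which I will show happens only in the decoupled configuration $\beta=\gamma=0$; there Lemma \ref{sc} already identifies $T_\omega$ with $S_+\oplus gS_-$, a bilateral shift plus $g$ times a bilateral shift.

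The heart of the matter is to prove that, unless $\beta=\gamma=0$, neither $\ran P_1$ nor $\ran P_2$ contains any $V_\omega$-eigenvector (a fortiori no reducing subspace, since an eigenvector of the unitary $V_\omega$ spans a one-dimensional reducing subspace). I would exploit the rigidity of these subspaces: by (\ref{evek}) a vector $\psi\in\ran P_1$ obeys $\psi_{2k+1}=-(q/s)\,\psi_{2k}$ for every $k$, and a vector in $\ran P_2$ obeys the symmetric relation, so the ratio of the two components in each block $\cH^{(k)}$ is the same constant for all $k$. Substituting this constant-ratio ansatz into the eigenvalue equation $V_\omega\psi=\lambda\psi$ written in the band form (\ref{matv}) yields an overdetermined scalar recursion for $(\psi_{2k})_k$; I expect that when $\beta$ and $\gamma$ do not vanish, equivalently when the off-diagonal blocks $C=P_2V_\omega P_1$ and $B=P_1V_\omega P_2$ are nonzero, this recursion admits no nonzero solution in $l^2(\Z)$, forcing $\psi=0$. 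In other words, $B$ and $C$ are injective off the decoupled case. This explicit band/transfer-matrix computation, along with the care needed for the degenerate coefficient configurations ($s=0$ or $q=0$) where the constant-ratio relation degenerates, is the main obstacle; the soft localization properties of the CMV operator $V_\omega$ recorded in Remark \ref{detg}(iii) do not help here, since pure point spectrum does not constrain the local ratio of the components of an eigenvector.

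Finally I would assemble the conclusions. If $T_\omega$ is cnu, a unimodular eigenvalue would, by the first paragraph together with the elementary fact that a peripheral eigenvalue of a contraction is shared with its adjoint ($T_\omega^*\psi=\bar\lambda\psi$), furnish a one-dimensional reducing subspace on which $T_\omega$ is unitary, contradicting cnu; and, for $0<g<1$, an eigenvalue of modulus $g$ would furnish a $V_\omega$-eigenvector in $\ran P_2$, excluded by the third paragraph. If instead $\beta=\gamma=0$, then $T_\omega\cong S_+\oplus gS_-$ is normal with purely continuous spectrum, so $\sigma_p(T_\omega)=\emptyset$. In either branch $\sigma_p(T_\omega)\cap\Ss=\emptyset$ and, for $0<g<1$, $\sigma_p(T_\omega)\cap g\Ss=\emptyset$.
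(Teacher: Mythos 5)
Your overall strategy coincides with the paper's: use the polar decomposition $T_\omega=V_\omega K$ of Theorem \ref{t1}, the norm identity forcing eigenvectors with $|\lambda|=1$ (resp. $|\lambda|=g$) into $\mathrm{Ran}\,P_1$ (resp. $\mathrm{Ran}\,P_2$), reduce the cnu dichotomy to a question about $V_\omega$ relative to the splitting $\cH_1\oplus\cH_2$, and dispose of the decoupled case $\beta=\gamma=0$ via Lemma \ref{sc}. But the proposal has a genuine gap at exactly the point you flag yourself: the claim that, off the decoupled case, the blocks $P_2V_\omega P_1$ and $P_1V_\omega P_2$ are injective is announced as ``I expect'' and never proven. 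This is not a deferrable verification; it is the core of the argument, and it is what the paper supplies in Lemmas \ref{qlem} and \ref{loffdiag}. Concretely, the explicit computation (\ref{vjk}) shows that each block $P_iV_\omega P_j$ is, in the bases $\{v_j^{(k)}\}_{k\in\Z}$, a two-diagonal operator whose two coefficients have constant moduli along the diagonals; the kernel equation is then a two-term recursion whose solutions have geometrically growing or constant moduli, hence are never in $l^2(\Z)$ unless both coefficients vanish. This yields (\ref{eqkernul}), namely $\ker P_iV_\omega P_j\neq\{0\}\Leftrightarrow P_iV_\omega P_j=0$, and Lemma \ref{loffdiag} then uses unitarity of $\tilde C$ to identify the vanishing of an off-diagonal block with the two matrix forms (\ref{speci}), i.e. with $\beta=\gamma=0$. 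Without this step, neither the dichotomy nor the claim about $\sigma_p(T_\omega)\cap g\Ss$ is established.

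There is also a logical slip in how you pass between eigenvectors and reducing subspaces. You assert that absence of $V_\omega$-eigenvectors in $\mathrm{Ran}\,P_1$ gives ``a fortiori no reducing subspace, since an eigenvector of the unitary $V_\omega$ spans a one-dimensional reducing subspace''. That implication runs the wrong way: an eigenvector spans a reducing subspace, so absence of reducing subspaces excludes eigenvectors, not conversely. A unitary operator with purely continuous spectrum --- for instance a bilateral shift, which is precisely what $V_{11}$ can be here --- has an abundance of reducing subspaces and no eigenvectors at all, so the nontrivial reducing subspace whose existence would contradict cnu need not contain any eigenvector. What saves your architecture is the stronger statement you mention only in passing (injectivity of the off-diagonal blocks): if $\cM\subset\mathrm{Ran}\,P_1$ is $V_\omega$-invariant, then every $\psi\in\cM$ satisfies $V_\omega\psi\in\mathrm{Ran}\,P_1$, hence $\cM\subset\ker P_2V_\omega P_1$, and injectivity kills $\cM$ directly, eigenvector or not. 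So the correct chain is: injectivity of the blocks $\Rightarrow$ no $V_\omega$-reducing subspace in $\mathrm{Ran}\,P_1$ and no $V_\omega$-eigenvector in $\mathrm{Ran}\,P_j$ $\Rightarrow$ the lemma --- and injectivity is exactly the step left unproven above.
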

\begin{proof}
Assume there is a closed subspace $\h_0$  such that $T_\omega|_{\h_0}$  is unitary.
For $\psi\in \h_0$, we have $\|T_\omega\psi\|=\|\psi\|$. This implies with  $T_\omega=V_\omega(P_1+gP_2)$, that 
\be\label{cnu}
(\I-T_\omega^*T_\omega)^{1/2}\psi=\sqrt{1-g^2}P_2\psi=0.
\ee 
Hence, $\h_0\subset P_1 \cH_0$, and, $\h_0$ being invariant under $T_\omega$, $\h_o\subset \ker P_2 V_\omega P_1.$
The operator $ P_2 V_\omega P_1$ is studied in Lemmas \ref{qlem} and \ref{loffdiag} below, where it is shown that  $ \ker P_2V_\omega P_1\neq \{0\} \Leftrightarrow P_2V_\omega P_1=0$ and that this is 
equivalent to
\bea\label{speci} \tilde{C}\in \left\{ 
\begin{pmatrix}
\alpha & r & 0  \cr
q & g & 0  \cr
0 & 0 & \delta  
\end{pmatrix}, 
\begin{pmatrix}
\alpha & 0 & 0  \cr
0 & g & s  \cr
0 & t & \delta  
\end{pmatrix}
\right\}\subset U(3).
\eea Hence if (\ref{speci}) doesn't hold, $T_\omega$ is cnu, whereas in case
(\ref{speci}) holds, Lemma \ref{sc} finishes the proof of the first statement. The fact that eigenvalues cannot sit on the unit circle is thus immediate, whereas, for $g>0$, a similar argument applied to the contraction 
$
(gT_\omega^{-1})^*=V_\omega(gP_1+P_2)
$
 yields the last statement. \ep
\end{proof}
\begin{rem}\label{err} The operator $T_\omega$ is cnu if and only if $0\leq g<1$, $|\alpha|<1$ and $|\delta|<1$. Moreover, in case  (\ref{speci}) holds, the corresponding random quantum walk operator $U_\omega(C)$ is purely ac by a general argument, see eq. (66) \S 5.4 of \cite{HJ}.
\end{rem}
The fact that $T_\omega$ is completely non-unitary has immediate consequences on the spectrum of  $U_\omega(C)$. In particular, the following result extends the description of the spectral diagram discussed in paragraph 5.6 of \cite{HJ}.
 \begin{prop}\label{Hsing}
If $0\leq g<1$, then  
$
\sigma(U_\omega(C))=\sigma_{ac}(U_\omega(C)),
$
 for all $\omega\in \Omega$.
\end{prop}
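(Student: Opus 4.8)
The plan is to deduce purely absolutely continuous spectrum for $U_\omega(C)$ by combining the completely non-unitary structure of $T_\omega$ established in Lemma \ref{lcnu} with the spectral criteria \fer{critac} and the invariant subspace decomposition set up in Section \ref{qwnuo}. Recall that $\cK$ splits into the $U_\omega(C)$-invariant pieces already identified: the subspace $\cH_b$ defined in \fer{hbdef}, which is shown to lie in $\cH^{ac}(U_\omega(C))$ directly via the escape identity \fer{escape}, and the family of horizontal subspaces $\cH_0$ (attached to every vertex of the tree) whose analysis reduces, by Lemma \ref{contraction}, to iterating the contraction $T_\omega$. Since $\cK$ is spanned by vectors living either in $\cH_b$-type subspaces or in the $\cH_0$-type horizontal slices, it suffices to prove that every normalized $\psi\in\cH_0$ belongs to $\cH^{ac}(U_\omega(C))$, i.e. that $\sum_{n\in\N}|\bra\psi|U_\omega(C)^n\psi\ket|^2<\infty$.

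First I would use the identity $\bra\psi|U_\omega(C)^n\psi\ket=\bra\psi|T_\omega^n\psi\ket$ for $\psi\in\cH_0$, which holds because $P_0 U_\omega(C)^nP_0|_{\cH_0}=T_\omega^n$ by Lemma \ref{contraction}. Thus the whole question reduces to showing $\sum_{n}|\bra\psi|T_\omega^n\psi\ket|^2<\infty$ for all $\psi\in\cH_0$. The key input is that, for $0\leq g<1$, $T_\omega$ is a completely non-unitary contraction. Indeed, by Lemma \ref{lcnu} either $T_\omega$ is cnu outright, or it is unitarily equivalent to the direct sum of a shift and $g$ times a shift, as in the exceptional case \fer{speci}; in that exceptional case Remark \ref{err} already records that $U_\omega(C)$ is purely ac by the argument of \cite{HJ}, so one may assume $T_\omega$ is genuinely cnu. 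The plan is then to invoke the Sz.-Nagy--Foia\c s theory of contractions \cite{SNF}: for a completely non-unitary contraction the associated unitary dilation has purely absolutely continuous spectrum on the orthogonal complement of the (trivial) unitary part, equivalently the spectral measures $d\mu_\psi$ generated by $U_\omega(C)$ for $\psi\in\cH_0$ carry no singular part.

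Concretely I would argue as follows. Because $V_\omega$ is unitary and $T_\omega=V_\omega(P_1+gP_2)$ with $g<1$, the defect operator $(\I-T_\omega^*T_\omega)^{1/2}=\sqrt{1-g^2}\,P_2$ is nonzero with infinite-dimensional range, so $T_\omega$ has strictly positive defect indices and, being cnu, admits no nontrivial subspace on which it acts unitarily. The unitary operator $U_\omega(C)$, restricted appropriately, is a dilation of $T_\omega$ (this is exactly the content of Lemma \ref{contraction}, that $P_0 U_\omega(C)^nP_0=T_\omega^n$); the minimal unitary dilation of a cnu contraction has Lebesgue (absolutely continuous) spectrum, and the functional model identifies the cyclic subspaces generated from $\cH_0$ with subspaces of $H^2$-type on which the dilation acts by an absolutely continuous unitary. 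Hence each $d\mu_\psi$, $\psi\in\cH_0$, is purely absolutely continuous, giving $\cH_0\subset\cH^{ac}(U_\omega(C))$ for every attachment point, and therefore $\cK^\perp\subset\cH^{ac}(U_\omega(C))$; together with $\cH_b\subset\cH^{ac}(U_\omega(C))$ this yields $\cH^{ac}(U_\omega(C))=\cK$, i.e. $\sigma(U_\omega(C))=\sigma_{ac}(U_\omega(C))$.

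The main obstacle I anticipate is the dilation-theoretic step: one must verify that $U_\omega(C)$ really plays the role of (a piece of) the minimal unitary dilation of $T_\omega$, rather than merely a power dilation in the weak sense $P_0U_\omega^nP_0=T_\omega^n$, and then correctly cite the Sz.-Nagy--Foia\c s structure theorem asserting that the residual/unitary part of the dilation of a cnu contraction is absolutely continuous. A cleaner and more self-contained alternative, which I would fall back on if the dilation identification is delicate, is to prove the quantitative bound $\sum_{n\geq 0}|\bra\psi|T_\omega^n\psi\ket|^2<\infty$ directly: writing $\psi=P_1\psi+P_2\psi$ and using that the $P_2$-component is contracted by a factor $g<1$ at each application of the self-adjoint part $K$, one controls the $P_2$-mass geometrically, while the surviving $P_1$-component is transported by the \emph{unitary} walk $V_\omega$, whose autocorrelations are square-summable precisely on the a.c. subspace; the cnu hypothesis guarantees no stationary ($\ell^2$-persistent) component survives. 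The interplay between the unitary transport by $V_\omega$ and the strict contraction on $\ran P_2$ is where the real work lies, but the cnu property is exactly the structural fact that rules out any square-non-summable (hence singular) contribution.
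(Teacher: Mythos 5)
Your main argument is correct, but it follows a genuinely different route from the paper's. The paper's proof is elementary and self-contained: for $\psi$ in $\cH_0\cap\cH^{sing}$, where $\cH^{sing}$ denotes the singular subspace of $U_\omega(C)$ with spectral projection $P_{sing}$, it writes $U_\omega(C)\psi=P_{sing}\big(P_0U_\omega(C)\psi+(\I-P_0)U_\omega(C)\psi\big)$, observes that the leaked component $(\I-P_0)U_\omega(C)\psi$ lies in $\cH_b\subset\cH^{ac}(U_\omega(C))$ and is therefore annihilated by $P_{sing}$, and deduces from a norm argument that $\cH_0\cap\cH^{sing}$ is invariant under $U_\omega(C)$ and $U_\omega(C)^*$, with $U_\omega(C)$ coinciding with $T_\omega$ there. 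This exhibits a closed subspace on which the contraction $T_\omega$ acts unitarily, which must be trivial since $T_\omega$ is cnu; repeating over the horizontal subspaces attached to all vertices yields $\cH^{sing}=\{0\}$. You reach the same conclusion by treating $U_\omega(C)$ as a unitary power dilation of $T_\omega$ --- which is exactly the content of Lemma \ref{contraction} --- restricting to $\overline{\mbox{span}}\{U_\omega(C)^n\cH_0,\ n\in\Z\}$, which is a reducing subspace carrying by construction the minimal unitary dilation of $T_\omega$, and invoking the Sz.-Nagy--Foias theorem (Theorem II.6.4 of \cite{SNF}) that the minimal unitary dilation of a cnu contraction is absolutely continuous. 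The identification you worry about is not delicate: $P_0U_\omega(C)^nP_0|_{\cH_0}=T_\omega^n$ for $n\geq 0$ is precisely the definition of a power dilation, and uniqueness of minimal unitary dilations does the rest. Both proofs rest on the same inputs --- the cnu property from Lemma \ref{lcnu}, with the exceptional case (\ref{speci}) delegated to Remark \ref{err} exactly as you do, and $\cH_b\subset\cH^{ac}(U_\omega(C))$ --- but yours buys conceptual clarity and generality (it isolates the abstract principle of which the paper's computation is a hands-on instance) at the price of importing a nontrivial structure theorem, whereas the paper's argument needs nothing beyond the definition of complete non-unitarity and the leakage structure of the walk.

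One caution: your fallback alternative (a direct proof that $\sum_{n\geq 0}|\bra \psi | T_\omega^n\psi\ket|^2<\infty$) would not work as sketched, because it attributes square-summability of the $P_1$-part to absolute continuity of the unitary walk $V_\omega$; but $V_\omega$ is generically dynamically localized, hence pure point (see Remarks \ref{detg} and \cite{JM}), so it has no a.c.\ subspace and no decay can come from $V_\omega$ alone --- it must come from the interleaving of $V_\omega$ with the strict contraction on the range of $P_2$, which is precisely the step you leave open. Since this was offered only as a fallback, it does not affect the validity of your main proof.
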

\begin{proof} We drop the dependence on $\omega$ and $C$ in the notation for this proof, for simplicity.
By Lemma \ref{lcnu}, we can assume $T$ is completely non-unitary. Let $P_{sing}$ be the spectral  projection onto the subspace $\cH^{sing}=\cH^{pp}(U)\cap \cH^{sc}(U)$ and recall that $P_0$ is the orthogonal projection onto $\cH_0$.  We first show that the subspace 
$\cH_0\cap \cH^{sing}$ reduces the operator $U$.
Let  $\psi \in \cH_0\cap \cH^{sing}$,
\be
U\psi=U  P_{sing} \psi=P_{sing} U\psi=P_{sing}\big(P_0 U\psi+(\I-P_0)U\psi\big),
\ee
where $(\I-P_0)U\psi\in\cH_b$, see (\ref{hbdef}). Using $P_{sing}\cH_b=0$, we get that
$U\psi=P_{sing}P_0U\psi.$
But then $\|U\psi\| \leq \|P_0 U\psi\|\leq \|U\psi\|$ implies $U\psi=P_0U\psi=P_0P_{sing}U\psi$ as well. Hence $\cH_0\cap \cH^{sing}$ is invariant under $U$. By a similar argument, this subspace is invariant under $U^*$ as well. Consequently, $\cH^{sing}$ reduces $T=P_0 U|_{\cH_0}$, which shows that $\cH^{sing}\cap \cH_0=\{0\}$ since $T$ is cnu and $g<1$. Repeating the argument with $\cH_0$ replaced by the horizontal subspace attached to $x\in \cT_4$ arbitrary eventually yields $\cH^{sing}=\{0\}$. \ep
\end{proof} 
\begin{rem} 
In view of Lemma \ref{gmg}, one sees that Lemma \ref{lcnu} and Proposition \ref{Hsing} carry over to the cases described in Remark \ref{highdim}, in case $T_\omega$ is cnu..
\end{rem}
 \subsection{ Extensions to Further Contractive Quantum Walks}\label{sextension}

 We make use of a symmetry of the contractive quantum walk  $T_\omega=\D_\omega^0 T$ with $T$ given by (\ref{quasiwalk}) in order relate it to $\widetilde T_\omega$ given by (\ref{matrixtt}). 
 Let 
 \bea 
 \cH_{\bf e}&=&\overline{\mbox{span}}\{a^m\otimes\tau, \ \ m\in 2\Z, \tau\in\{\pm1\}\}, \nonumber \\
  \cH_{\bf o}&=&\overline{\mbox{span}}\{a^m\otimes\tau, \ \ m\in 2\Z+1, \tau\in\{\pm1\}\}
\eea
denote the supplementary subspaces of $\cH_0$ consisting in even and odd sites only in configuration space. The definition  (\ref{quasiwalk}) of $T$ makes it clear that $T \cH_{\bf e}\subset \cH_{\bf o}$ and $T \cH_{\bf o}\subset \cH_{\bf e}$, and since $\D^0_\omega$ is diagonal,  the same is true for $T_\omega$. Therefore $\cH_{\bf e}$ is invariant under $T^2_\omega$ and by Lemma 2 in \cite{CD}, $\sigma (T^2_\omega)\setminus \{0\}=\sigma(T^2_\omega |_{\cH_{\bf e}})\setminus \{0\}$. Actually we have
\begin{prop} 
For all $0\leq g \leq 1$, and with definitions (\ref{matrixt1}) and (\ref{matrixtt}), 
 \be \widetilde T_\omega \simeq T^2_\omega|_{\cH_{\bf e}} \ \Rightarrow \
\sigma(\widetilde T_\omega)=\sigma(T^2_\omega).
\ee
Moreover,
\be\label{sfi}
\widetilde T_\omega = \bigoplus_{k\in \Z} S_\omega({2k+1}) \bigoplus_{k\in \Z} S_\omega({2k})
\ee
where,  for all $k\in \Z$, we have in the basis $\{e_{2k}, e_{2k+1}\}$, resp. $\{e_{2k+1}, e_{2k+2}\}$
\be
S_\omega({2k})=\mbox{\em diag}(e^{i\omega_{4k-1}}, e^{i\omega_{4k+2}})\begin{pmatrix}
\gamma & \delta \cr
\alpha & \beta
\end{pmatrix}, \mbox{\em resp.} \  
S_\omega({2k+1})=\mbox{\em diag}(e^{i\omega_{4k+1}}, e^{i\omega_{4k+4}})\begin{pmatrix}
\gamma & \delta \cr
\alpha & \beta
\end{pmatrix}.
\ee
\end{prop}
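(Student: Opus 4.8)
The plan is to use the parity structure already recorded before the statement: since $T_\omega\cH_{\bf e}\subset\cH_{\bf o}$ and $T_\omega\cH_{\bf o}\subset\cH_{\bf e}$, with respect to $\cH_0=\cH_{\bf e}\oplus\cH_{\bf o}$ the operator is purely off-diagonal,
\be
T_\omega=\begin{pmatrix} 0 & B_\omega \cr A_\omega & 0 \end{pmatrix},\qquad A_\omega=T_\omega|_{\cH_{\bf e}}:\cH_{\bf e}\to\cH_{\bf o},\quad B_\omega=T_\omega|_{\cH_{\bf o}}:\cH_{\bf o}\to\cH_{\bf e},
\ee
so that $T^2_\omega=\mbox{diag}(B_\omega A_\omega,\,A_\omega B_\omega)$ and, in particular, $T^2_\omega|_{\cH_{\bf e}}=B_\omega A_\omega$. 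Thus the whole statement reduces to identifying the two factors $A_\omega$ and $B_\omega$ explicitly.

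First I would read off from (\ref{matrixt}) the action of $T_\omega$ on the bases $\{e_{4l},e_{4l+1}\}_{l\in\Z}$ of $\cH_{\bf e}$ and $\{e_{4l+2},e_{4l+3}\}_{l\in\Z}$ of $\cH_{\bf o}$. The key structural point is that, although the product $B_\omega A_\omega$ is a genuine five-diagonal band operator, each factor separately is \emph{block diagonal}: $A_\omega$ maps the pair $\{e_{4l},e_{4l+1}\}$ into $\{e_{4l-1},e_{4l+2}\}$, and $B_\omega$ maps $\{e_{4l+2},e_{4l+3}\}$ into $\{e_{4l+1},e_{4l+4}\}$, and as $l$ runs over $\Z$ these image pairs tile $\cH_{\bf o}$, resp. $\cH_{\bf e}$, exactly once. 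Factoring the common matrix $\begin{pmatrix}\gamma&\delta\cr\alpha&\beta\end{pmatrix}$ and the diagonal phase prefactors out of each $2\times2$ block, I would recognize them as exactly $S_\omega(2l)$ and $S_\omega(2l+1)$. After compressing $\cH_{\bf e}$ and $\cH_{\bf o}$ onto $l^2(\Z)$ by the natural relabellings that send these tilings to the canonical pairs $\{e_{2l},e_{2l+1}\}$ and $\{e_{2l+1},e_{2l+2}\}$, this gives at once the unitary equivalence $\widetilde T_\omega\simeq T^2_\omega|_{\cH_{\bf e}}$, the factorization (\ref{sfi}), and — on multiplying the two block-diagonal factors back out — the explicit matrix (\ref{matrixtt}).

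For the spectrum, the equivalence yields $\sigma(\widetilde T_\omega)=\sigma(T^2_\omega|_{\cH_{\bf e}})$, and it remains only to promote the relation $\sigma(T^2_\omega)\setminus\{0\}=\sigma(T^2_\omega|_{\cH_{\bf e}})\setminus\{0\}$, quoted from Lemma 2 of \cite{CD} (applied to the reducing decomposition $\cH_{\bf e}\oplus\cH_{\bf o}$, equivalently to the pair $A_\omega B_\omega$/$B_\omega A_\omega$), to a genuine equality at the origin. Here I would compute $|\det S_\omega(\cdot)|=\left|\det\begin{pmatrix}\gamma&\delta\cr\alpha&\beta\end{pmatrix}\right|=|\det C_0|=g$ by Remark \ref{detg}(i). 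For $g>0$ every $2\times2$ block is invertible with uniformly bounded inverse, hence $A_\omega,B_\omega$ and both products $B_\omega A_\omega$, $A_\omega B_\omega$ are invertible, so $0$ belongs to none of the spectra; for $g=0$ each block is singular, so $0\in\sigma(\widetilde T_\omega)$ and $0\in\sigma(T^2_\omega)$ together, consistently with $0\in\sigma(T^2_\omega)\Leftrightarrow 0\in\sigma(T_\omega)\Leftrightarrow g=0$ from Remark \ref{detg}(v). In either case $0\in\sigma(\widetilde T_\omega)\Leftrightarrow 0\in\sigma(T^2_\omega)$, which upgrades the equality to $\sigma(\widetilde T_\omega)=\sigma(T^2_\omega)$.

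I expect the main difficulty to be combinatorial rather than analytic: verifying that the image pairs of $A_\omega$ and $B_\omega$ tile $\cH_{\bf o}$ and $\cH_{\bf e}$ without overlap, and then choosing the two relabelling isometries $\cH_{\bf e}\cong l^2(\Z)$ and $\cH_{\bf o}\cong l^2(\Z)$ in a single consistent way so that both families of $2\times2$ blocks land on the offset canonical pairs $\{e_{2l},e_{2l+1}\}$ and $\{e_{2l+1},e_{2l+2}\}$ appearing in $S_\omega(2l)$ and $S_\omega(2l+1)$. Once this index bookkeeping is fixed, the factorization (\ref{sfi}), the matrix form (\ref{matrixtt}), and the spectral identity all follow by routine $2\times2$ computation.
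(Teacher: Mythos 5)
Your proposal is correct and follows essentially the same route as the paper: the parity decomposition $\cH_0=\cH_{\bf e}\oplus\cH_{\bf o}$, the relabelling $e_{4k}\mapsto e_{2k}$, $e_{4k+1}\mapsto e_{2k+1}$ with explicit $2\times 2$ block computations giving $\widetilde T_\omega\simeq T^2_\omega|_{\cH_{\bf e}}$ and (\ref{sfi}), then Lemma 2 of \cite{CD} together with the observation that $g\neq 0$ iff $T_\omega$ and $\widetilde T_\omega$ are boundedly invertible (and $0\in\sigma(\widetilde T_\omega)\cap\sigma(T^2_\omega)$ when $g=0$) to settle the point $z=0$. Your factorization of the restriction as $B_\omega A_\omega$ with each factor block diagonal in staggered pairings, and the determinant bound $\|S_\omega(\cdot)^{-1}\|\leq 1/g$, are simply a cleaner write-up of the "explicit computations" and the invertibility observation that the paper leaves implicit.
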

\begin{proof}
With the convention (\ref{rule}), $\cH_{\bf e}$ is spanned by $\{e_{4k}, e_{4k+1}, \ k\in \Z\}$. Relabelling these basis vectors according to $e_{4k}\mapsto e_{2k}$, $e_{4k+1}\mapsto e_{2k+1}$,  explicit computations yield $\widetilde T_\omega \simeq T^2_\omega|_{\cH_{\bf e}}$, as well as (\ref{sfi}). Observe that $ g\neq 0 $ iff $\widetilde T_\omega$ and $T_\omega$ are boundedly invertible and that if $g=0$, we have $0\in \sigma (\widetilde T_\omega)\cap  \sigma (T_\omega^2) $.  This yields isospectrality of $T_\omega^2$ and $\widetilde T_\omega$. \ep
\end{proof}
\begin{rems}
i) The restriction $T^2_\omega|_{\cH_{\bf o}}$ has an explicit form similar to $\widetilde T_\omega$ given by the composition (\ref{sfi}) in the reversed order.\\
ii) In particular, we deduce from the above that  $\widetilde T_\omega$ is unitary iff $g=1$, and that it is pure point  for $\beta\gamma\neq 0$, whereas it is absolutely continuous if $\beta=\gamma= 0$, \cite{JM}. \\
iii) All the spectral results we derive for $T_\omega$ hold for $\widetilde T_\omega$ via the spectral mapping theorem.
\end{rems}
\section{Spectral Analysis of $T_\omega$}

We use the following notations: $\sigma_p(A)$ denotes the set of eigenvalues of a bounded operator $A$ on $\cH$ and  $\sigma_{app}(A)$ denotes its approximate point spectrum. By definition, $\lambda\in \sigma_{app}(A)$ if and only if there exists a sequence of normalized vectors $\{\ffi_n\}_{n\in\N}$ such that $A\ffi_n-\lambda\ffi_n\ra 0$, as $n\ra\infty$. Recall that  $\sigma_p(A)\subset \sigma_{app}(A)$ and $\sigma(A)=\sigma_{app}(A)\cup {\overline{\sigma_p(A^*)}}$, where $\overline{X}=\{\bar x ,\ | \ x\in X\}$, for any $X\subset \C$. Also, $\sigma_{app}(A)$ is a nonempty closed set of $\C$ such that $\partial \sigma(A)\subset \sigma_{app}(A)$ and one has the disjoint union $\sigma(A)=\sigma_{app}(A)\cup {\overline{\sigma_{p_1}(A^*)}}$,  where ${\sigma_{p_1}(A^*)}=\{\lambda \in \C \ | \ \mbox{s.t.} \ \ker (A^*-\lambda)\neq \{0\}\ \mbox{and}\ \mbox{Ran}(A^*-\lambda)=\cH\}$ is open in $\C$, see \cite{Ku}.

The starting point of analysis of the contraction $T_\omega$ is Theorem 4.4 showing that $T_\omega$ admits a polar decomposition the components of which are bounded normal operators. We are thus naturally lead to the study of spectral properties of products of such operators. The only general result we are aware of in this direction, \cite{W}, provides estimates on the position of the spectrum of such products in terms of the numerical ranges of the components, which is however not strong enough for our purpose. We will use instead

\begin{thm}\label{gensigvk} 
Let $T=AB$, where $A$, $B$ are bounded normal operators on $\cH_0$ and let $B_c(r)$ denote the open disc of radius $r>0$ and center $c\in \C$. Then,
\bea \label{gensetres}
 B^{-1}\in \cB(\cH_0)&\Rightarrow & \bigcup_{\tau\in \rho(A)}\bigcap_{b \in \sigma(B)}B_{\tau b}(|b|\; \mbox{\em dist}(\tau, \sigma(A)) )\subset \rho(AB),
 \nonumber \\
A^{-1}\in \cB(\cH_0)&\Rightarrow &\bigcup_{\tau\in \rho(B)}\bigcap_{a \in \sigma(A)}B_{\tau a}(|a|\; \mbox{\em dist}(\tau, \sigma(B)) )\subset \rho(AB). 
\eea
\end{thm}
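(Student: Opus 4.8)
The plan is to prove the two inclusions by reducing the invertibility of $AB-z$ to a Neumann series argument, exploiting crucially that both factors are normal. I treat the first statement; the second follows by interchanging the roles of $A$ and $B$ verbatim. Fix $\tau\in\rho(A)$ and a point $z$ lying in the intersection $\bigcap_{b\in\sigma(B)}B_{\tau b}(|b|\,\mbox{dist}(\tau,\sigma(A)))$, and write $d=\mbox{dist}(\tau,\sigma(A))$. Since $B$ is boundedly invertible, $0\notin\sigma(B)$, so I may factor
\be
AB-z=(A-zB^{-1})B,
\ee
and $AB-z$ is invertible if and only if $A-zB^{-1}$ is. The first step is thus to analyze $A-zB^{-1}$.

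Next I would introduce $\tau$ into the picture by the algebraic identity
\be
A-zB^{-1}=(A-\tau)+(\tau\I-zB^{-1})=(A-\tau)\bigl(\I+(A-\tau)^{-1}(\tau\I-zB^{-1})\bigr),
\ee
which is legitimate because $(A-\tau)^{-1}\in\cB(\cH_0)$ by $\tau\in\rho(A)$. A Neumann series then shows $A-zB^{-1}$ is invertible provided $\|(A-\tau)^{-1}(\tau\I-zB^{-1})\|<1$, and it suffices to bound the product of norms. Here the normality hypotheses enter: since $A$ is normal, $\|(A-\tau)^{-1}\|=\mbox{dist}(\tau,\sigma(A))^{-1}=1/d$; and since $B$ is normal, so is $\tau\I-zB^{-1}$ (it is a scalar shift of a scalar multiple of $B^{-1}$, all commuting), whence its norm equals its spectral radius. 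By the spectral mapping theorem $\sigma(\tau\I-zB^{-1})=\{\tau-z/b:b\in\sigma(B)\}$, so
\be
\|\tau\I-zB^{-1}\|=\sup_{b\in\sigma(B)}\left|\tau-\frac{z}{b}\right|.
\ee

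It remains to translate the geometric disc condition into the required strict inequality. For each $b\in\sigma(B)$, membership $z\in B_{\tau b}(|b|\,d)$ reads $|z-\tau b|<|b|\,d$, i.e. $|\tau-z/b|<d$ after dividing by $|b|>0$. Because $\sigma(B)$ is compact and $b\mapsto|\tau-z/b|$ is continuous on it (as $0\notin\sigma(B)$), the supremum above is attained, so the pointwise strict inequalities give $\sup_{b\in\sigma(B)}|\tau-z/b|<d$. Combining the estimates yields $\|(A-\tau)^{-1}(\tau\I-zB^{-1})\|\le \tfrac1d\sup_{b}|\tau-z/b|<1$, so $A-zB^{-1}$, hence $AB-z$, is invertible and $z\in\rho(AB)$. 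Taking the union over $\tau\in\rho(A)$ gives the first inclusion, and the symmetric factorization $AB-z=A(B-zA^{-1})$ together with $\tau\in\rho(B)$ gives the second.

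The step I expect to be the genuine crux, rather than the routine Neumann estimate, is the exact matching in the last paragraph: one must verify that the seemingly ad hoc radii $|b|\,\mbox{dist}(\tau,\sigma(A))$ are precisely engineered so that $z\in\bigcap_b B_{\tau b}(|b|\,d)$ is equivalent to $\|\tau\I-zB^{-1}\|<\|(A-\tau)^{-1}\|^{-1}$. The two points requiring care are that $\tau\I-zB^{-1}$ is truly normal (so that norm equals spectral radius, which fails for general operators) and that compactness of $\sigma(B)$ upgrades the family of strict pointwise inequalities to a strict bound on the supremum; without attainment one could only conclude $\le 1$, which is not enough for the Neumann series.
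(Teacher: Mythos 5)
Your proof is correct and follows essentially the same route as the paper: your factorization $(A-\tau)\bigl(\I+(A-\tau)^{-1}(\tau\I-zB^{-1})\bigr)B$ is algebraically identical to the paper's, since $(\tau B-z)B^{-1}=\tau\I-zB^{-1}$, and both arguments then run the same Neumann-series estimate using $\|(A-\tau)^{-1}\|=1/\mbox{dist}(\tau,\sigma(A))$ for normal $A$. The only cosmetic difference is that you compute $\|\tau\I-zB^{-1}\|$ via normality, spectral radius and spectral mapping, whereas the paper applies the continuous functional calculus to $x\mapsto|\tau x-z|/|x|$ on $\sigma(B)$ — the same computation in different words.
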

\begin{proof}
Under our assumption on $\tau$, and since $B$ is invertible, we have
\be
T-z=(A-\tau)B+\tau B-z=(A-\tau)\left(\un + (A-\tau)^{-1}(\tau B-z)B^{-1}\right)B,
\ee
which shows that $T-z$ is boundedly invertible if $\|(A-\tau)^{-1}(\tau B-z)B^{-1}\|<1$, thanks to Neumann's series. By the spectral theorem for normal operators applied to the continuous function $x\ra \frac{|\tau x -z|}{|x|}$ defined on the compact set $\sigma (B)$, and using $\|(A-\tau)^{-1}\|=1/\mbox{ dist}(\tau, \sigma(A))$, this condition is met if
\be
\max_{b\in \sigma(B)}\frac{|z-\tau b|}{|b|}<\mbox{ dist}(\tau, \sigma(A)).
\ee 
Therefore, given $\tau\in \rho(A)$, if $z\in \bigcap_{b\in \sigma(B)}B_{\tau b}(b\, \mbox{dist}(\tau, \sigma(A)) )$, then $z\in\rho(AB)$. Taking the union over $\tau \in \rho(A)$ yields (\ref{gensetres}).  The second inclusion is proven analogously, using $A$ invertible and identity for $\tau\in \rho(B)$
 \be
T-z=A(B-\tau)+\tau A-z=A\left(\un + A^{-1}(\tau A-z)(B-\tau)^{-1}\right)(B-\tau). 
\ee 
\ep
\end{proof}
\begin{rem}\label{discrho}
In case $A$ and $B$ have bounded inverses, we get for $\tau=0$ that $B_0(r_{AB})\subset \rho(AB)$, where $r_{AB}=\mbox{\em dist}(0, \sigma(A))\mbox{\em dist}(o, \sigma(B))>0$.  
\end{rem}
Applied to our case $T=VK$ with $\sigma(K)=\{g,1\}$, $0< g<1$, 
(\ref{gensetres}) simplifies and yields more specific estimates on $\rho(T)$ as a function of the spectrum of the unitary operator $V$.
\begin{cor}\label{simplif} Let $T=VK$ with $V$ unitary and $0<K=(P_1+gP_2)$, $0< g<1$.  Then
\bea\label{setres}
&&\bigcup_{\tau\in \rho(V)}B_{\tau g}(g\, \mbox{\em dist}(\tau, \sigma(V)) )\cap B_{\tau }(\mbox{\em dist}(\tau, \sigma(V)) )\subset \rho(T),\\
&&\label{cor2}
\bigcup_{\tau\in \rho(K)}\bigcap_{v\in \sigma(V)}B_{\tau v}(\mbox{\em dist}(\tau, \sigma(K)) )\subset \rho(T).
\eea
In particular,  \vspace{-.5cm}
\be
B_0(g)\subset \rho(T). \label{87}
\ee
Moreover, assume the arc $(-\theta,\theta)$ belongs to $\rho(V)$, with $0< \theta <\pi$. Then,
\bea\label{symset}
&&\bigcup_{\tau\in \R_+ \atop \alpha \in [-\theta, \theta] }B_{e^{i\alpha}\tau}(d_{e^{i\alpha}\tau})\cap B_{ge^{i\alpha}\tau}(gd_{e^{i\alpha}\tau} )
 \subset \rho(T),\\ \label{symset2}
 &&\bigcup_{\tau\in \R_- \atop \alpha \in [-\pi/2, \pi/2] }\bigcap_{e^{i\nu}\in\sigma(e^{i\alpha}V)}B_{e^{i\nu}\tau}(\delta_{e^{i\alpha}\tau})  \subset \rho(T), \ \ \mbox{where }\\
\label{distalph}
d_{e^{\pm i\alpha}\tau} &=& \mbox{\em dist}(e^{\pm i\alpha}\tau, \sigma(V)) = \sqrt{\tau^2-2\tau\cos(\theta-\alpha)+1} \ \mbox{with } \ \tau>0, \ \alpha \in [0, \theta], \\
\label{deltalph}
\delta_{e^{\pm i\alpha}\tau}&=& \mbox{\em dist}(e^{\pm i\alpha}\tau, \sigma(K))= \sqrt{\tau^2+2|\tau|g\cos(\alpha)+g^2} \ \mbox{with } \ \tau<0, \ \alpha \in [0, \pi/2].
\eea
\end{cor}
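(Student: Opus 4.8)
The plan is to derive Corollary \ref{simplif} as a sequence of specializations of Theorem \ref{gensigvk}, using the concrete spectral data $\sigma(K)=\{g,1\}$ and $\sigma(V)\subset\Ss$. First I would establish the two master inclusions (\ref{setres}) and (\ref{cor2}). For (\ref{setres}) I apply the first line of (\ref{gensetres}) with $A=V$ and $B=K$, which is legitimate since $K>0$ is boundedly invertible for $0<g<1$. The intersection $\bigcap_{b\in\sigma(K)}$ runs over the two-point set $\{g,1\}$, so it collapses to the intersection of exactly two discs: the $b=g$ disc $B_{\tau g}(g\,\mbox{dist}(\tau,\sigma(V)))$ and the $b=1$ disc $B_{\tau}(\mbox{dist}(\tau,\sigma(V)))$. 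Taking the union over $\tau\in\rho(V)$ gives (\ref{setres}) verbatim. For (\ref{cor2}) I apply the second line of (\ref{gensetres}) with the roles reversed, $A=V$ still normal and now using $V$ invertible (which holds as $V$ is unitary), so the union is over $\tau\in\rho(K)$ and the intersection is over $v\in\sigma(V)$; since $|v|=1$ for unitary $V$, the radius factor $|a|$ drops out, yielding exactly (\ref{cor2}).

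Next I would extract (\ref{87}). This is the special case $\tau=0$ of Remark \ref{discrho}: since both $V$ and $K$ are boundedly invertible, $B_0(r_{VK})\subset\rho(T)$ with $r_{VK}=\mbox{dist}(0,\sigma(V))\cdot\mbox{dist}(0,\sigma(K))$. Here $\mbox{dist}(0,\sigma(V))=1$ because $\sigma(V)\subset\Ss$, and $\mbox{dist}(0,\sigma(K))=g$ because $\sigma(K)=\{g,1\}$ with $0<g<1$; hence $r_{VK}=g$, giving $B_0(g)\subset\rho(T)$. Alternatively one checks that setting $\tau=0$ directly in (\ref{setres}) produces $B_0(g)\cap B_0(1)=B_0(g)$, recovering the same statement.

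The substantive part is the gap-exploiting inclusions (\ref{symset}) and (\ref{symset2}) together with the explicit distance formulas (\ref{distalph}) and (\ref{deltalph}). Assuming the open arc $(-\theta,\theta)\subset\rho(V)$ with $0<\theta<\pi$, I would specialize (\ref{setres}) by restricting the parameter $\tau$ to the rays $\tau=e^{i\alpha}t$ with $t>0$ and $\alpha\in[-\theta,\theta]$; for such $\tau$ one has $\tau\in\rho(V)$ (the whole open sector swept by the arc avoids $\sigma(V)\subset\Ss$, since a point of modulus $t$ at angle $\alpha$ with $|\alpha|<\theta$ is off the unit circle unless $t=1,\alpha\in(-\theta,\theta)$, and that circular segment lies in the gap). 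This yields (\ref{symset}) with $d_{e^{i\alpha}\tau}=\mbox{dist}(e^{i\alpha}\tau,\sigma(V))$. Similarly, (\ref{symset2}) follows from (\ref{cor2}) by taking $\tau\in\R_-$ (so $\tau\in\rho(K)$ since $\sigma(K)=\{g,1\}\subset\R_+$) and rotating the unitary $V$ by $e^{i\alpha}$, with $\delta_{e^{i\alpha}\tau}=\mbox{dist}(e^{i\alpha}\tau,\sigma(K))$.

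The main obstacle, and the only genuine computation, is verifying the two closed-form distance formulas. For (\ref{distalph}), the nearest point of $\sigma(V)\subset\Ss$ to $e^{i\alpha}\tau$ is the point $e^{i\theta}$ on the boundary of the gap (for $\tau>0$, $\alpha\in[0,\theta]$ the closest edge of the arc is at angle $\theta$), so by the law of cosines $\mbox{dist}(e^{i\alpha}\tau,\sigma(V))=|e^{i\alpha}\tau-e^{i\theta}|=\sqrt{\tau^2-2\tau\cos(\theta-\alpha)+1}$; here I must check that $e^{i\theta}$ rather than an interior circle point realizes the minimum, which holds precisely because the intervening arc lies in $\rho(V)$. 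For (\ref{deltalph}), since $\sigma(K)=\{g,1\}$ and $\tau<0$, the nearest spectral point of $K$ is $g$ (the smaller positive value), and $\mbox{dist}(e^{i\alpha}\tau,\{g,1\})=|e^{i\alpha}\tau-g|=\sqrt{\tau^2+2|\tau|g\cos\alpha+g^2}$, using $\tau=-|\tau|$ so that $-2\tau g\cos\alpha=+2|\tau|g\cos\alpha$. Confirming that $g$ and not $1$ is the closer point for the stated parameter range is the one inequality I would verify carefully; everything else is direct substitution into the already-proved Theorem \ref{gensigvk}.
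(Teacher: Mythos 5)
Your proof is correct and follows essentially the same route as the paper's: (\ref{setres}) and (\ref{cor2}) are obtained as direct specializations of Theorem \ref{gensigvk} with $A=V$, $B=K$, (\ref{87}) from Remark \ref{discrho}, and (\ref{symset})--(\ref{symset2}) by restricting the parameter to rays $e^{i\alpha}\tau$ and computing the distances to the gap endpoint $e^{\pm i\theta}$ and to the spectral point $g$, exactly as in the paper's (terse) argument. One wording caution: ``rotating the unitary $V$ by $e^{i\alpha}$'' must be understood as the change of variables $\tau\mapsto e^{i\alpha}\tau$ in the parameter of (\ref{cor2}) applied to $T$ itself---which is what your radius $\delta_{e^{i\alpha}\tau}=\mbox{dist}(e^{i\alpha}\tau,\sigma(K))$ indicates---and not as applying the theorem to the operator $e^{i\alpha}VK$, since that would place the resulting sets in $\rho(e^{i\alpha}T)=e^{i\alpha}\rho(T)$ rather than in $\rho(T)$.
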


\begin{rems}\label{remset} i) The points $\tau\in \rho(V)$ in (\ref{setres}) such that $\mbox{dist}(\tau, \sigma(V))=|1-\tau|$ do not yield more information than (\ref{87}): $\tau<1$ implies $\bigcap_{k\in \sigma(K)}B_{\tau k}(k (1-|\tau|) )\subset B_0(g)$ and $\tau>1$ implies $\bigcap_{k\in \sigma(K)}B_{\tau k}(k (|\tau|-1) )\subset \C\setminus \overline{B_0(1)}$. This is the case when $\sigma(V)=\Ss$.\\
ii) At the expense of a rotation, we can associate to any arc in $\rho(V)$ two sets (\ref{symset}) and (\ref{symset2}) that belong to $\rho(T)$. The corresponding sets are both symmetrical with respect to the bisector of that arc. \\
iii)  Lemma \ref{sc} or 
 Remark \ref{remopt} shows that (\ref{87}) is optimal.
\end{rems}
\begin{proof} 
The first statements are mere rewritings of (\ref{gensetres}) and 
Remark \ref{discrho} implies (\ref{87}). For (\ref{symset}), we note that $w\in \C$ is such that $\mbox{dist}(w, \sigma(V))=|w-e^{\pm i\theta}|$ if $w=\tau e^{\pm i\alpha}$, with $\alpha \in [0, \theta]$ and $\tau\geq 0$, which establishes (\ref{distalph}). Whereas for (\ref{symset2}),  $w=-|\tau|e^{\pm i\alpha}$ with $\alpha \in [0, \pi/2]$  satisfies $\mbox{dist}(w, \sigma(K))=|w-g|=||\tau|e^{\pm i\alpha}+g|$ which yields (\ref{deltalph}). Then a change of variables allows us to express  (\ref{cor2}) as (\ref{symset2}) under our assumptions.
\ep
\end{proof}

Without attempting to provide a complete analysis, we describe (\ref{symset}) and  (\ref{symset2}) in some more details and show that (\ref{symset2}) provides less information in case $\sigma(V)$ displays one gap only. The proofs of the statements are provided in an Appendix. Let $C_c(r)$ denote the circle of center $c\in \C$ and radius $r>0$ and $\partial S$ denote the boundary of a set $S$. 
First consider (\ref{symset}) for $\alpha=0$. 
Because the intersection of discs can be non-empty when the intersection of their boundary is empty, there is a difference between (\ref{symset}) and the set $D(\theta)$ such that 
\be\label{deltad}
\partial D(\theta)=\bigcup_{\tau\in \R_+ }C_{\tau}(d_{\tau})\cap C_{g\tau}(gd_{\tau} ),
\ee 
and $D(\theta)$ contains the vertical segment between the intersection of two circles. We also set 
$R_\gamma(\theta)=\{z\in \C \, | \, \Re z>\gamma\cos(\theta)\}$.
\begin{lem}\label{form} With the notations above, and assuming $\alpha=0$, the LHS of (\ref{symset}) is given by
 \be\label{set1}
\bigcup_{\tau\in \R_+ }B_{\tau}(d_{\tau})\cap B_{g\tau}(gd_{\tau} )=D(\theta)\cup B_0(g)\cup R_1(\theta),  \ \  \mbox{for $\theta\in ]0,\pi/2[$ },
\ee
see Fig. \ref{gaps}, where
$\partial D(\theta)$ is given by the cubic curve
\bea\label{cubicurve}
y^2&=&\frac{x(x^2-x(1+g)\cos(\theta) +g)}{(1+g)\cos(\theta)-x} \ \mbox{with} \nonumber\\  \label{xtau}
x&=&-\frac{1+g}{2\tau}+(1+g)\cos(\theta)\in [0, (1+g)\cos(\theta)[, \ \ \mbox{for}\ \tau\in [1/(2\cos(\theta)), \infty[.
\eea
For $\pi/2\leq \theta<\pi$, 
\be\label{set11}
\bigcup_{\tau\in \R_+ }B_{\tau}(d_{\tau})\cap B_{g\tau}(gd_{\tau} )= B_0(g)\cup R_g(\theta).
\ee
Moreover, for fixed $0<\alpha<\theta$, assuming $0<\theta<\pi$,  we have
\be\label{alphaind}
\bigcup_{\tau\in \R_+ }B_{e^{ i\alpha}\tau}(d_{e^{ i\alpha}\tau})\cap B_{ge^{ i\alpha}\tau}(gd_{e^{ i\alpha}\tau} )\subset\bigcup_{\tau\in \R_+ }B_{\tau}(d_{\tau})\cap B_{g\tau}(gd_{\tau} ).
\ee
\end{lem}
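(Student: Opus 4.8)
The plan is to exploit a scaling symmetry of the two discs together with a one–parameter membership test. Since $d_\tau=\mbox{dist}(\tau,\sigma(V))$ and the inner disc has center $g\tau$ and radius $g\,d_\tau$, the two discs are homothetic: $B_{g\tau}(g\,d_\tau)=g\,B_\tau(d_\tau)$, the image of $B_\tau(d_\tau)$ under $z\mapsto gz$. Writing $B^\tau:=B_\tau(d_\tau)$, each lens is $L_\tau=B^\tau\cap gB^\tau$, so (for $\alpha=0$) a point $z$ lies in $\bigcup_\tau L_\tau$ iff there is a single $\tau>0$ with $z\in B^\tau$ and $z/g\in B^\tau$. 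Expanding $|w-\tau|^2<d_\tau^2=\tau^2-2\tau\cos\theta+1$ linearizes in $\tau$: $|w|^2-1<2\tau(\Re w-\cos\theta)$. Hence $I(w):=\{\tau>0:w\in B^\tau\}$ is always a subinterval of $(0,\infty)$ — of type $(\tau_-(w),\infty)$ if $\Re w>\cos\theta$, of type $(0,\tau_+(w))$ if $\Re w<\cos\theta$ and $|w|<1$, and empty if $\Re w\le\cos\theta$, $|w|\ge 1$ — and $z$ lies in the union iff $I(z)\cap I(z/g)\neq\emptyset$.

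For $\cos\theta>0$ the three pieces of (\ref{set1}) then fall out by cases. If $|z|<g$ then $|z/g|<1$ and both intervals contain a right neighborhood of $0$, giving $B_0(g)$ (and re-deriving (\ref{87})). If $\Re z>\cos\theta$ then $\Re(z/g)>\Re z>\cos\theta$, both intervals are of type $(\cdot,\infty)$ and meet for large $\tau$, giving $R_1(\theta)$. In the intermediate range $g\cos\theta<\Re z<\cos\theta$, $|z|<1$, one has $I(z)=(0,\tau_+(z))$ and $I(z/g)=(\tau_-(z/g),\infty)$; the overlap condition $\tau_-(z/g)<\tau_+(z)$ simplifies, after clearing the (positive) denominators and setting $\rho^2=|z|^2$, to $\Re z\,(g+\rho^2)>(1+g)\cos\theta\,\rho^2$, whose boundary equality, solved for $y^2$, is exactly the cubic (\ref{cubicurve}); this identifies the intermediate region with $D(\theta)$, while points with $\Re z\le\cos\theta$, $|z|\ge g$ outside it have disjoint or empty $I(z),I(z/g)$ and are excluded. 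As an independent check of $\partial D(\theta)$ via (\ref{deltad}), subtracting the two circle equations gives the vertical radical axis $x=(1+g)\cos\theta-(1+g)/(2\tau)$, i.e. (\ref{xtau}), and eliminating $\tau$ returns the same cubic; one verifies it runs from $ge^{\pm i\theta}\in\partial B_0(g)$ to $e^{\pm i\theta}$ on $\{\Re z=\cos\theta\}$, pinning the junctions with $B_0(g)$ and $R_1(\theta)$.

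For $\pi/2\le\theta<\pi$ one has $\cos\theta\le 0$ and $D(\theta)$ degenerates; (\ref{set11}) follows from the same test using $\Re z>g\cos\theta\Leftrightarrow\Re(z/g)>\cos\theta$. If $\Re z>g\cos\theta$ then also $\Re z>\cos\theta$ (as $g\cos\theta\ge\cos\theta$), so $I(z),I(z/g)$ are both of type $(\cdot,\infty)$ and overlap, yielding $R_g(\theta)$; if $\Re z\le g\cos\theta$ and $|z|\ge g$ then $\Re(z/g)\le\cos\theta$ and $|z/g|\ge 1$, so $I(z/g)=\emptyset$ and $z$ is excluded. Together with $B_0(g)$ this gives $B_0(g)\cup R_g(\theta)$.

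Finally, for the monotonicity (\ref{alphaind}) I would reduce to the $\alpha=0$ case by a rotation. Because $|w-e^{i\alpha}\tau|=|e^{-i\alpha}w-\tau|$ and, by (\ref{distalph}), $d_{e^{i\alpha}\tau}=\sqrt{\tau^2-2\tau\cos(\theta-\alpha)+1}$ is precisely the $\alpha=0$ radius for the gap half-angle $\theta-\alpha$, a point $z$ lies in the $\alpha$-slice iff $e^{-i\alpha}z$ lies in the $\alpha=0$ set built with $\theta$ replaced by $\theta-\alpha$. Thus (\ref{alphaind}) is equivalent to a containment between the two explicitly computed sets — for gap half-angles $\theta-\alpha$ and $\theta$ — after a rotation by $e^{i\alpha}$; the common factor $B_0(g)$ is rotation invariant, so the content is the comparison of the cubic-plus-half-plane parts. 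I expect this comparison — tracking how the region moves as the gap half-angle shrinks while the whole picture is simultaneously rotated — to be the main obstacle, the one step where the geometry rather than the algebra must be controlled carefully.
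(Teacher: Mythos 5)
Your proof of (\ref{set1}) and (\ref{set11}) is correct and follows a genuinely different route from the paper's. The paper argues geometrically: it takes $D(\theta)\cup B_0(g)$ as included ``by construction'', uses the identities (\ref{idset}) to capture $R_1(\theta)$ and to exclude the complementary region, and gets the cubic by eliminating $\tau$ from (\ref{xtau}). You instead combine the homothety $B_{g\tau}(gd_\tau)=gB_\tau(d_\tau)$ with the observation that membership $|w-\tau|^2<d_\tau^2$ is affine in $\tau$, so that the whole question becomes the nonemptiness of an intersection of two explicit subintervals $I(z)$, $I(z/g)$ of $(0,\infty)$. I checked the key algebra: in the strip $g\cos\theta<\Re z<\cos\theta$, $|z|<1$, the overlap condition $\tau_-(z/g)<\tau_+(z)$ does reduce to $\Re z\,(g+|z|^2)>(1+g)\cos\theta\,|z|^2$, whose boundary is exactly (\ref{cubicurve}), and your radical-axis computation correctly ties this region to the paper's definition (\ref{deltad}) of $D(\theta)$ (the region is vertically convex, matching the ``filled vertical segments'' description). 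This is cleaner and more systematic than the paper's case analysis, and it re-derives (\ref{87}) for free; up to routine boundary cases these two parts are fully proved.

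Part (\ref{alphaind}) is a genuine gap: you reduce it to the containment $e^{i\alpha}\left[D(\theta-\alpha)\cup B_0(g)\cup R_1(\theta-\alpha)\right]\subset D(\theta)\cup B_0(g)\cup R_1(\theta)$ and then explicitly defer that containment as ``the main obstacle''. Worse, this containment is false as stated, so your route cannot be completed literally. Every point of the right-hand side lies in $\{|z|<1\}\cup R_1(\theta)$: membership in some $B_\tau(d_\tau)$, $\tau>0$, forces $|z|^2-1<2\tau(\Re z-\cos\theta)$, hence $|z|<1$ or $\Re z>\cos\theta$. But $e^{i\alpha}R_1(\theta-\alpha)$ is a tilted half-plane containing points with arbitrarily negative real part and large modulus. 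Concretely, take $\theta=\pi/3$, $\alpha=\pi/6$, $g=1/2$ and $z=e^{i\pi/6}(2+100i)$: writing $w=e^{-i\alpha}z=2+100i$, both disc conditions read $|w|^2-1<\tau(4-\sqrt3)$ and $|w|^2-\tfrac14<\tau(2-\tfrac{\sqrt3}{4})$, which hold for all large $\tau$, so $z$ is in the left side of (\ref{alphaind}); yet $\Re z=\sqrt3-50<\cos\theta$ and $|z|\approx 100>1$, so $z$ is not in the right side. Thus (\ref{alphaind}) can only be correct after intersecting with a neighborhood of $\overline{B_0(1)}\supset\sigma(T_\omega)$, which is the only region that matters for the application.

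For fairness, the paper's own proof stumbles on the same point: its first step asserts that the lenses with $\tau\geq\sin\theta/\sin\alpha$ lie in $R_1(\theta)$, which fails for exactly the reason above (a disc whose boundary passes through $e^{i\theta}$ but whose center sits high above the line $\Im z=\sin\theta$ spills to the left of $\Re z=\cos\theta$). What the paper does have, and what is missing from your proposal, is the mechanism that handles all lenses with $\tau<\sin\theta/\sin\alpha$ at once: let $\tau'>0$ be the point where the line through $e^{i\theta}$ and $e^{i\alpha}\tau$ meets the real axis; collinearity gives $d_{\tau'}=d_{e^{i\alpha}\tau}+|e^{i\alpha}\tau-\tau'|$, and the triangle inequality then shows the entire $\alpha$-lens at $\tau$ is contained in the real-centered lens at $\tau'$ (and likewise for the $g$-scaled discs). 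To repair the third part along your lines you would need to prove the restricted containment inside $\overline{B_0(1)}$ — the rotated-cap piece $e^{i\alpha}R_1(\theta-\alpha)\cap\overline{B_0(1)}\subset R_1(\theta)$ is easy by convexity of the cap, but the rotated $D(\theta-\alpha)$ piece (equivalently, the large-$\tau$ lenses) still requires a genuine geometric argument of the lens-in-lens type, and neither your sketch nor, strictly speaking, the paper supplies it in that regime.
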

\begin{figure}[htbp]
\begin{center}
      \includegraphics[scale=.21]{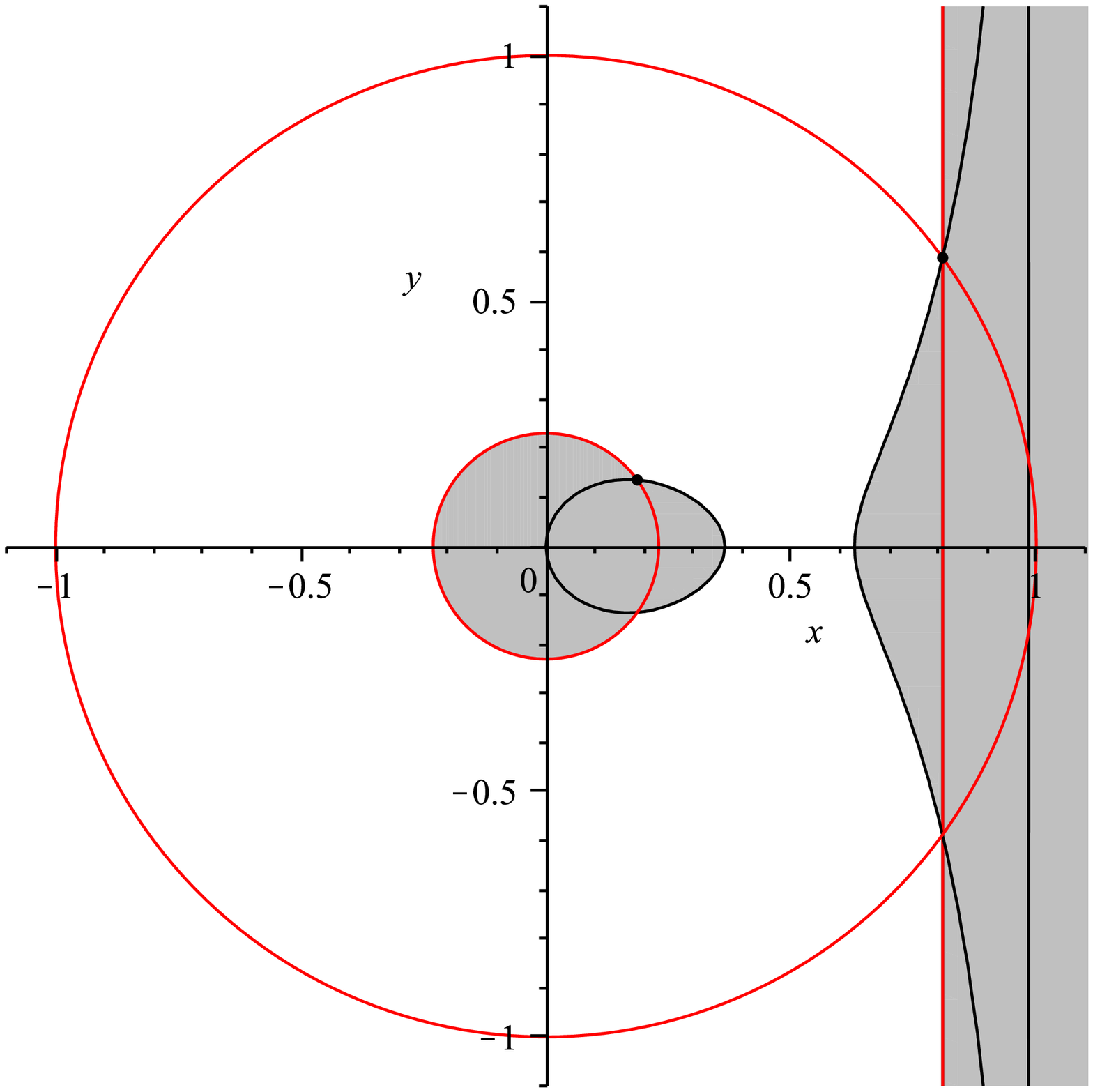} \includegraphics[scale=.21]{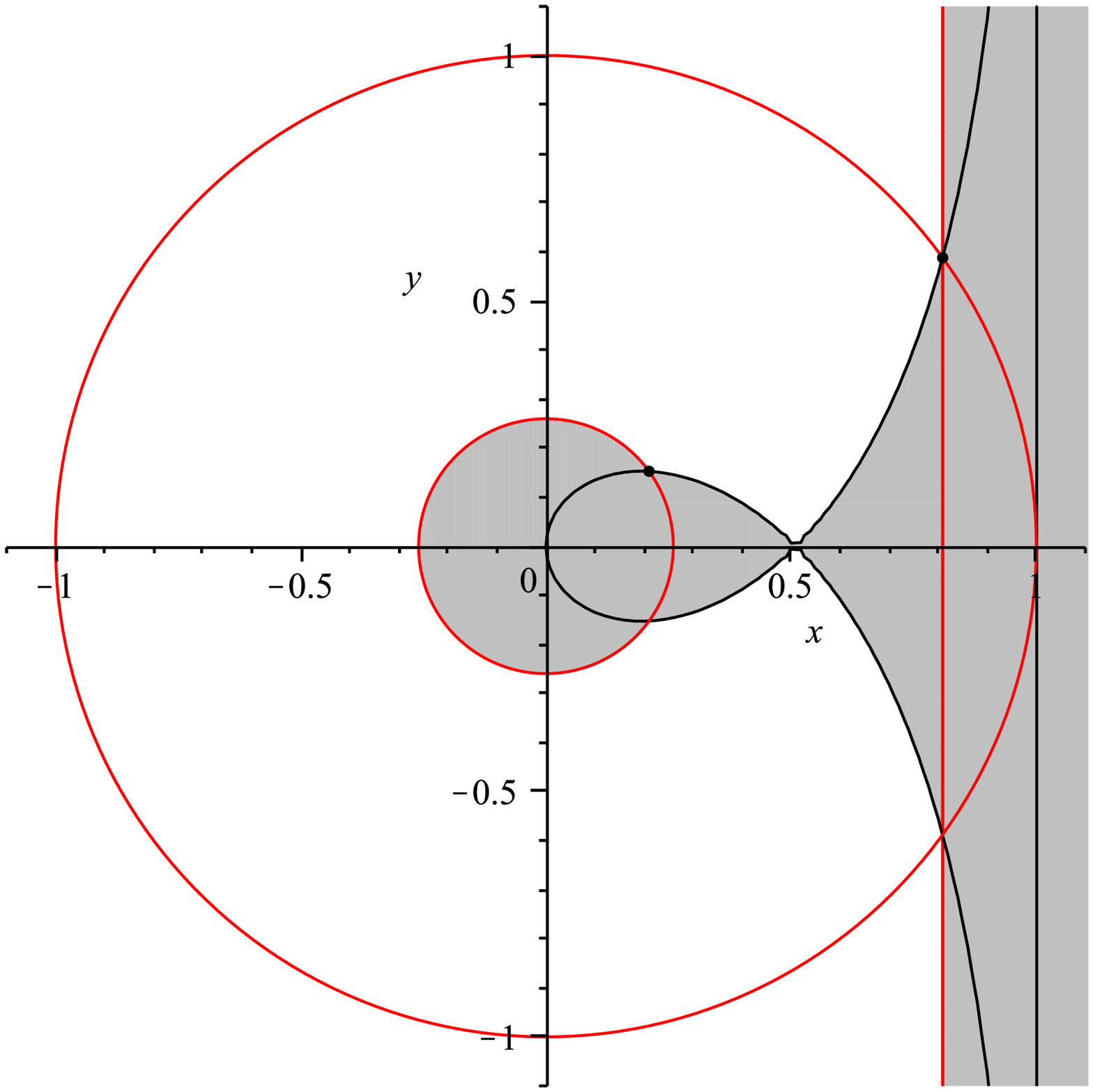} \includegraphics[scale=.21]{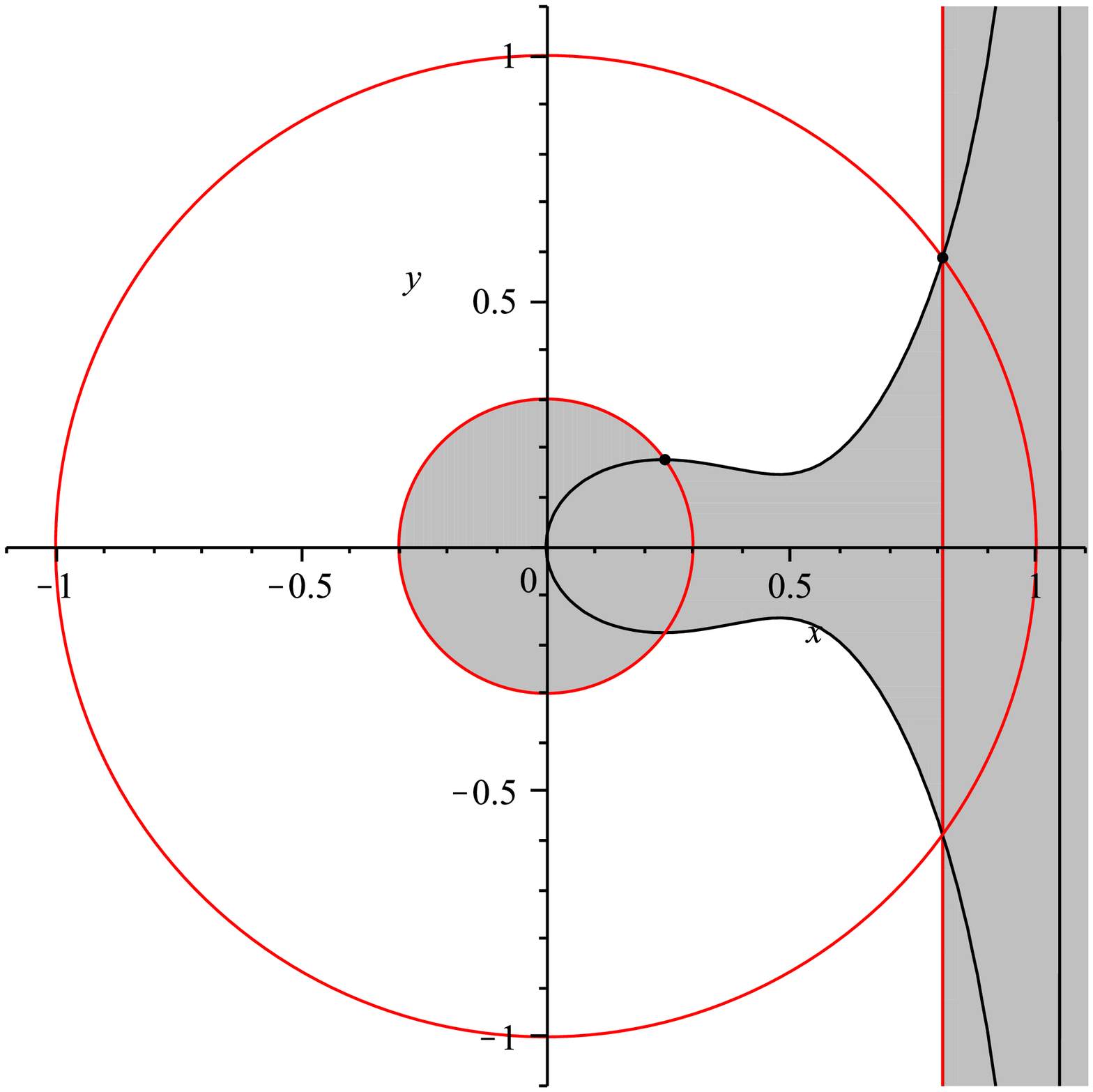}
   \end{center}\vspace{-.5cm}
   \caption{\footnotesize The sets $D\cup B_0(g)\cup R_1(\theta)$ for $0<\theta<\pi/2$ fixed and increasing values of $g$. The unit circle $\mathbb S$ and $g\mathbb S$ are indicated in red, whereas the black curves denote $\partial D$. The vertical red line corresponds to $\partial R_1(\theta)$.}\label{gaps}
\end{figure}
\begin{rems}\label{alpha1} i) In particular, under our assumptions,  the segment $[0,1]\subset \rho(VK)$ if $\cos^2(\theta)<\frac{4g}{(1+g)^2}\in ]0,1[$, see Figure \ref{gaps}.That this condition is necessary in general can be seen on the matrix case 
\be
V=\begin{pmatrix}e^{i\theta} & 0\cr 0& e^{-i\theta} \end{pmatrix}, \ K=\frac12\begin{pmatrix}1+g & 1-g \cr 1-g & 1+g \end{pmatrix}
\ee
such that $\sigma(VK)=\{\frac12(\cos(\theta)(1+g)\pm\sqrt{\cos^2(\theta)(1+g)^2-4g}) \}\subset \R_+^*$, if $\cos^2(\theta)\geq \frac{4g}{(1+g)^2}$.
\\
ii) The points $0, ge^{i\theta}$ and $e^{i\theta}$ belong to $\partial D(\theta)$ and correspond to the values of $\tau$ given by $1/(2\cos(\theta)), (1+g)/(2\cos(\theta))$ and $(1+g)/(2g\cos(\theta))$ respectively.
\end{rems}

To discuss the set (\ref{symset2}), we need some notations. For $\rho,\rho'>0$, we define, see Figure \ref{gamma},
\be
\Gamma_{\rho,\rho'}(\theta)=(B_{-e^{+i\theta}\rho}(\rho+\rho')\cap B_{-e^{-i\theta}\rho}(\rho+\rho')\cap R_{\rho'}(\theta))\cup B_0(\rho').
\ee
\begin{figure}[htbp]
\begin{center}
      \includegraphics[scale=.2]{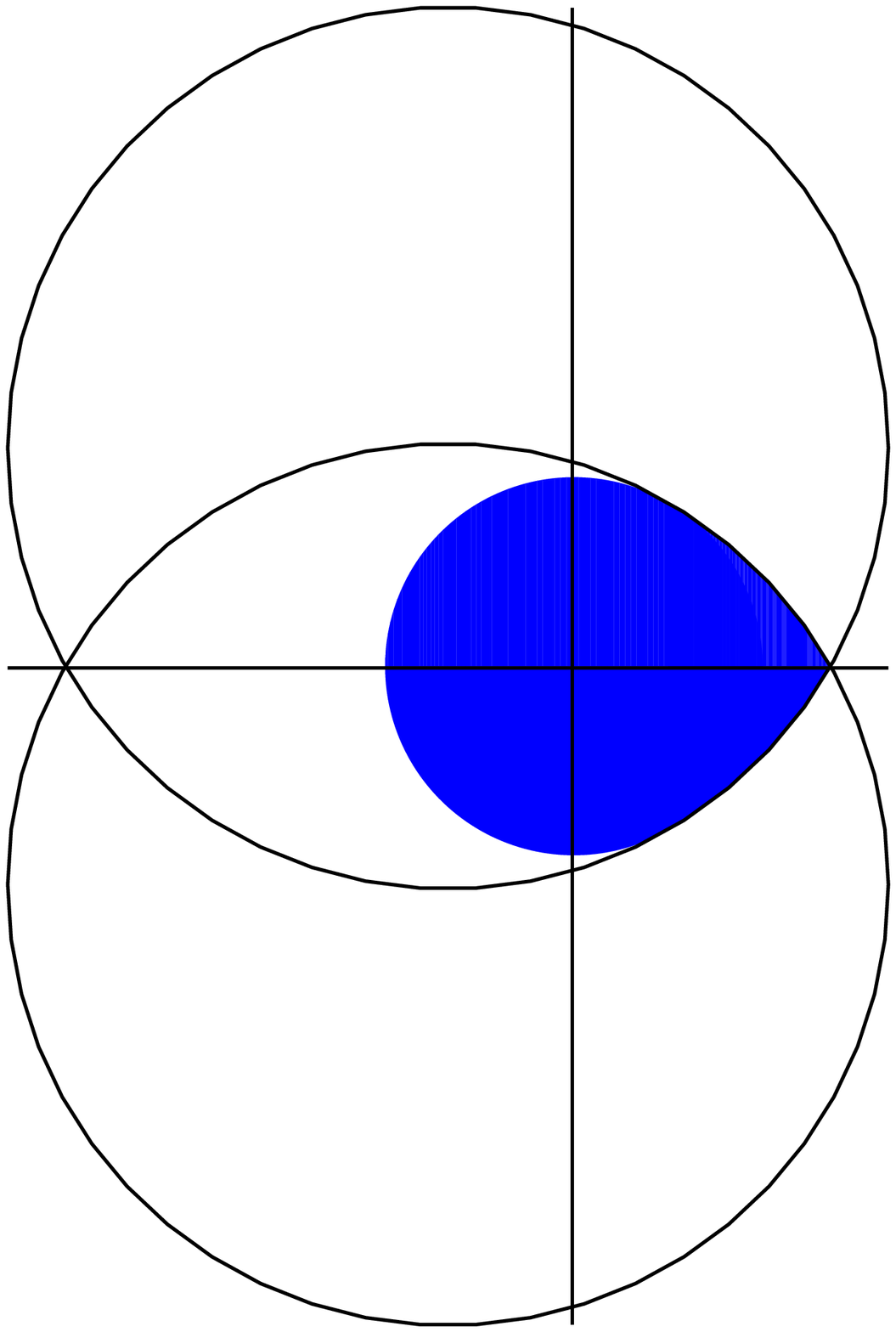}
   \end{center}\vspace{-.5cm}
   \caption{\footnotesize The set $\Gamma_{\rho,\rho'}(\theta)$.}\label{gamma}
\end{figure}
where the two discs $B_{-e^{\pm i\theta}\rho}(\rho+\rho')$  tangent to $B_0(\rho')$ at $\rho'e^{\pm i\theta}$.
We prove the following in an Appendix.
\begin{lem}\label{form2} Assume $\sigma(V)=\{e^{i\nu} \ \mbox{s.t.} \ \nu \in [\theta,\pi]\cup[-\pi,-\theta] \}$, with $\theta\in ]0,\pi[$. We have
\be\label{set2}
\bigcup_{\tau\in \R_-}\bigcap_{e^{i\nu}\in\sigma(V)}B_{e^{i\nu}\tau}(\delta_{\tau})  = B_0(g)\cup \Delta_g(\theta),
\ee
where $\Delta_g(\theta)$ denotes either the triangle defined by the points $ge^{i\theta}, ge^{-i\theta}, g/\cos(\theta)$  whenever $\theta<\pi/2$,  or $\Delta_g(\theta)$ denotes the set delimited by the two non-vertical lines passing by these points and the condition $\Re z\geq g\cos(\theta)$  whenver $\theta\in [\pi/2, \pi[$. 
Then, for each $\alpha\in ]0,\pi/2[$ fixed, 
\bea\label{reduct}
\bigcup_{\tau \in \R_-}\bigcap_{e^{i\nu}\in\sigma(V)}B_{e^{i\nu}e^{i\alpha}\tau}(\delta_{e^{i\alpha}\tau})  &=& e^{i\alpha}\bigcup_{|\tau| \in \R_+}\Gamma_{|\tau| , \delta_{\tau e^{ i\alpha}}-|\tau|}(\theta).
\eea
For any $\theta \in [0,\pi[$,  and all $\alpha\in ]0,\pi/2[$,
 \be\label{included}
 \bigcup_{\tau \in \R_-}\bigcap_{e^{i\nu}\in\sigma(V)}B_{e^{i\nu}e^{i\alpha}\tau}(\delta_{e^{i\alpha}\tau}) \subset \bigcup_{\tau\in \R_+ }B_{\tau}(d_{\tau})\cap B_{g\tau}(gd_{\tau} ).
 \ee
\end{lem}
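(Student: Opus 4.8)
The plan is to derive the whole lemma from a single computation at fixed $\tau$ and then take unions over $\tau$. Fix $\tau<0$, write $s=|\tau|$ and $\rho'=\delta_{e^{i\alpha}\tau}-s$; by (\ref{deltalph}) one has $\rho'>0$, with $\rho'=g$ exactly when $\alpha=0$, and every disc entering the intersection has centre $e^{i\nu}\tau=-se^{i\nu}$ at distance $s$ from the origin and common radius $s+\rho'$. Since the origin then lies at depth $\rho'$ inside each disc, $B_0(\rho')$ is contained in all of them; this is the source of the $B_0(\rho')$ summand of $\Gamma_{s,\rho'}(\theta)$ and, at $\alpha=0$, of the $B_0(g)$ summand in (\ref{set2}). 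Pulling the factor $e^{i\alpha}$ out of all centres, the heart of the matter is the fixed-$\tau$ identity
\[
\bigcap_{e^{i\nu}\in\sigma(V)}B_{-se^{i\nu}}(s+\rho')=\Gamma_{s,\rho'}(\theta),
\]
after which (\ref{reduct}) is immediate by taking the union over $\tau<0$ and restoring $e^{i\alpha}$, and (\ref{set2}) follows by specialising to $\alpha=0$.

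To prove this identity I would analyse which disc is binding. A point $z$ lies in $B_{-se^{i\nu}}(s+\rho')$ iff $|z|^2+2s|z|\cos(\arg z-\nu)\le 2s\rho'+\rho'^2$, a constraint that is tightest for the $\nu\in\sigma(V)$ closest to $\arg z$. Since $\sigma(V)$ omits the arc $(-\theta,\theta)$, two regimes appear: if $|\arg z|\ge\theta$ the admissible $\nu=\arg z$ is binding and the condition collapses to $|z|\le\rho'$, i.e. $z\in\overline{B_0(\rho')}$; if $|\arg z|<\theta$ the nearest admissible $\nu$ is $\pm\theta$, so the binding constraints are the two extreme discs $B_{-se^{\pm i\theta}}(s+\rho')$. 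The key structural fact, checked by a one-line distance computation, is that each extreme disc is tangent to the line $\Re(ze^{\mp i\theta})=\rho'$ at the point $\rho'e^{\pm i\theta}$ and lies on the origin side of it; hence the lens of the two extreme discs is contained in $\{\Re(ze^{-i\theta})\le\rho',\ \Re(ze^{i\theta})\le\rho'\}$. The one delicate point, which I expect to be the main obstacle, is to reconcile the cone cut $\{|\arg z|<\theta\}$ coming from this analysis with the half-plane cut $R_{\rho'}(\theta)=\{\Re z>\rho'\cos\theta\}$ appearing in $\Gamma_{s,\rho'}(\theta)$. This rests on the elementary inequality $\cos\phi\le\cos\theta\,\cos(\phi-\theta)$ valid for $\phi\ge\theta$, which forces the lens to meet $R_{\rho'}(\theta)$ only inside the cone, so that the two descriptions of the intersection coincide up to the common core $B_0(\rho')$; the cases $\theta<\pi/2$ and $\theta\ge\pi/2$ differ only by the signs of $\cos\theta$ and require separate but routine bookkeeping.

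For (\ref{set2}) I would then set $\alpha=0$, where $\rho'=g$ is \emph{independent of $s$}. The tangent lines $\Re(ze^{\mp i\theta})=g$ and the points $ge^{\pm i\theta}$ through which the extreme circles pass are therefore the same for every $s$, while the rightmost intersection point of the two extreme circles moves monotonically along $\R_+$ from $g$ (at $s=0$) to $g/\cos(\theta)$ (as $s\to\infty$). Taking the union of the lenses over $s$ thus fills exactly the triangle $\Delta_g(\theta)$ bounded by these two lines and by $\partial R_g(\theta)=\{\Re z=g\cos(\theta)\}$ when $\theta<\pi/2$, and degenerates to the stated unbounded strip when $\theta\ge\pi/2$ (the apex $g/\cos(\theta)$ escaping to infinity). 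Adjoining $B_0(g)$ gives (\ref{set2}); restoring the rotation by $e^{i\alpha}$ in the general computation gives (\ref{reduct}).

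Finally, for the inclusion (\ref{included}) I would use that $\delta_{e^{i\alpha}\tau}$ is decreasing in $\alpha$ by (\ref{deltalph}), so $\rho'=\delta_{e^{i\alpha}\tau}-|\tau|\le g$ and hence $B_0(\rho')\subset B_0(g)$, which already sits in the right-hand set. For the lens part, each rotated set $e^{i\alpha}\Gamma_{|\tau|,\rho'}(\theta)$ produced by (\ref{reduct}) is, by the tangent-line description above, confined between two lines through $\rho'e^{i(\alpha\pm\theta)}$; comparing these boundary curves with the explicit region $D(\theta)\cup B_0(g)\cup R_1(\theta)$ furnished by Lemma \ref{form}, and invoking the same rotation-monotonicity already used there in (\ref{alphaind}), shows every point of the left-hand union lies in the right-hand one. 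Both sides being symmetric about $\R$, this reduces to a one-sided comparison of boundary arcs rather than a genuine optimisation.
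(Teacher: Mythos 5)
Your treatment of (\ref{set2}) and (\ref{reduct}) is sound and in fact organizes the argument more cleanly than the paper does: the fixed-$\tau$ identity $\bigcap_{e^{i\nu}\in\sigma(V)}B_{-se^{i\nu}}(s+\rho')=\Gamma_{s,\rho'}(\theta)$, proved by identifying the binding disc (the admissible $\nu$ nearest to $\arg z$, which collapses the constraint to $|z|<\rho'$ outside the cone and to the two extreme discs inside it), together with the inequality $\cos\phi\le\cos\theta\cos(\phi-\theta)$ to reconcile the cone cut with the half-plane $R_{\rho'}(\theta)$, accomplishes in one stroke what the paper does in two separate steps: a direct quadratic-root computation for (\ref{set2}) (the roots $x_+(\nu)$ of $\rho^2+2\rho|\tau|\cos(\beta-\nu)-g(2|\tau|+g)$) and the tangency observation $\delta_{\tau e^{i\alpha}}>|\tau|$ for (\ref{reduct}). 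The monotone-union argument at $\alpha=0$ (discs increasing in $s$ by internal tangency at $ge^{\pm i\theta}$, lenses filling the wedge with apex $g/\cos\theta$) is also correct.

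The genuine gap is (\ref{included}), which is where the paper's proof does almost all of its work, and your proposal does not contain an argument for it. Two specific problems. First, invoking the ``rotation-monotonicity already used in (\ref{alphaind})'' is a category error: (\ref{alphaind}) and its proof concern the family $B_{e^{i\alpha}\tau}(d_{e^{i\alpha}\tau})\cap B_{ge^{i\alpha}\tau}(gd_{e^{i\alpha}\tau})$ built from distances $d$ to $\sigma(V)$, and the proof works by sliding the center $e^{i\alpha}\tau$ along the line through $e^{i\theta}$ down to a real point $\tau'$, using that both radii of that pair grow by exactly $|e^{i\alpha}\tau-\tau'|$ and $g|e^{i\alpha}\tau-\tau'|$. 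Nothing analogous is available for the sets $e^{i\alpha}\Gamma_{|\tau|,\,\delta_{\tau e^{i\alpha}}-|\tau|}(\theta)$, whose radii are governed by the distance $\delta$ to $\sigma(K)$; the inclusion (\ref{included}) is a comparison \emph{between} the two families, not a statement internal to one of them, so no intra-family monotonicity can settle it. Second, ``comparing these boundary curves'' is precisely the hard step, not a reduction of it: in the relevant region the boundary of the right-hand set is the cubic curve (\ref{cubicurve}) bounding $D(\theta)$, and one must show that the straight-line boundaries of the rotated sets $e^{i\alpha}\Gamma$ (whose envelope over $|\tau|$ is the tangent line $t_\infty$ at $ge^{i\theta}$, the slope of $t_{|\tau|}$ being increasing in $|\tau|$) never cross this cubic on $]g\cos(\theta),\min(g\cos^2(\alpha)/\cos(\theta),\cos(\theta))[$. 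The paper devotes the bulk of its proof to exactly this: bounding $e^{i\alpha}\Gamma$ by a limiting triangle via convexity, computing $y^2(x)-t_\infty^2(x)$ as a rational function, and controlling the sign of its quadratic numerator $p(x)$ through its values at $g\cos^2(\alpha)/\cos(\theta)$ and $\cos(\theta)$, its discriminant, and the product of its roots. Your proposal supplies no substitute for this computation. Finally, your closing symmetry claim is false: for $\alpha\neq 0$ the left-hand side of (\ref{included}) is symmetric about the rotated line $e^{i\alpha}\R$, not about $\R$ (only the right-hand side is $\R$-symmetric), so there is no one-sided reduction; relatedly, you also omit the necessary case distinction between $\alpha>\theta$ (where the left side collapses into $B_0(g)$) and $\alpha\le\theta$ (where the genuine comparison must be made).
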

\begin{exple}\label{ex1}{\em
Let us  illustrate the use of Theorem \ref{gensigvk}. Consider
\be\label{exple}
\tilde C(\xi,\eta)=
\begin{pmatrix} \cos(\eta) & \cos(\xi)\sin(\eta) &-\sin(\xi)\sin(\eta) \cr
0 &\sin(\xi) &  \cos(\xi)\cr
\sin(\eta) & -\cos(\xi)\cos(\eta)  &  \sin(\xi)\cos(\eta) \end{pmatrix}\in O(3), \ \ \ 
 \xi, \eta\in [0,\pi/2],
\ee
where $(\xi,\eta)$ is restricted to $[0,\pi/2]^2$ for simplicity. We thus compute that 
\bea
T, \ \mbox{resp.} \  V, \  \mbox{is characterized by} \ \begin{pmatrix} \cos(\eta) & -\sin(\eta)\sin(\xi) \cr
\sin(\eta) & \cos(\eta)\sin(\xi)  \end{pmatrix},
\mbox{resp.} \ \begin{pmatrix} \cos(\eta) & -\sin(\eta) \cr
\sin(\eta) & \cos(\eta)  \end{pmatrix}.
\eea
Moreover, Fourier methods yield 
\be
\sigma(V)=\{z\in \Ss \ | \ \arg z \in [\eta, \pi-\eta]\cup [-\pi+\eta, -\eta]\}.
\ee
Assuming the common distribution $d\nu$ of phases has support given by 
\be\label{suppdnu}
\mbox{supp }d\nu=[-\eps, \eps], \ \mbox{with}\  \eps<\eta,
\ee we have thanks to the general almost sure relation $\sigma(V_\omega)=\sigma(V)e^{i\, \mbox{\scriptsize supp}(d\nu)}$ which holds for products of unitary operators of that sort, see Section 5.1 of \cite{J1}, for example,
\be
\sigma(V_\omega)=\{z\in \Ss \ | \ \arg z \in [\eta-\eps, \pi-\eta+\eps]\cup [-\pi+\eta-\eps, -\eta+\eps]\}, \ \mbox{a.s.}
\ee
Hence, Corollary \ref{simplif} applies with $\theta=\eta-\eps$ and $g=\sin(\xi)$, and gives rise to two regions of $\rho(T_\omega)$: one described in Lemma \ref{form}, and its symmetric image with respect to the vertical axis. In particular, the spectrum of the corresponding $T_\omega$ is separated into two disjoint parts if 
\be\label{splits}
\cos^2(\eta-\eps)\leq\frac{4\sin(\xi)}{(1+\sin(\xi))^2}.
\ee
}
\end{exple}

Let us continue with some general links between the spectral properties of $T_\omega$ and $U_\omega(C)$.
\begin{lem} \label{kerzero}
Let $U$ be unitary on $\cH$ and $P_0$ be an orthogonal projector. For any $\ffi\in \cH$
\bea\label{evpu}
UP_0\ffi&=&e^{i\theta}\ffi \Rightarrow \ffi=P_0\ffi \ \mbox{and }\ e^{i\theta}\ffi=U\ffi=P_0UP_0 \ffi,
\\
\label{evpup}
P_0U\ffi&=&e^{i\theta}\ffi \Rightarrow \ffi=P_0\ffi \ \mbox{and }\ e^{i\theta}\ffi=U\ffi=P_0UP_0 \ffi.
\eea
Moreover, writing $Q_0=\un -P_0$, we get
\be
\ker  Q_0UP_0 =\{0\} \Rightarrow \sigma_p(UP_0)\cap \Ss =\sigma_p(P_0U)\cap \Ss=\sigma_p(P_0UP_0)\cap \Ss=\emptyset.
\ee 
Furthermore, let $T=P_0U P_0|_{P_0\cH}$. If $e^{i\theta}\in \sigma_{app}(T)\setminus \sigma_p(T)$, then $e^{i\theta}\in \sigma_{app}(U)$.
\end{lem}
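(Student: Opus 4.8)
The plan is to convert an approximate eigenvector sequence for $T$ into one for $U$, exploiting unitarity to force the component that leaks out of $P_0\cH$ to vanish. Since $e^{i\theta}\in\sigma_{app}(T)$, I would first select a sequence of normalized vectors $\{\ffi_n\}_{n\in\N}\subset P_0\cH$ with $T\ffi_n-e^{i\theta}\ffi_n\ra 0$. As each $\ffi_n$ satisfies $P_0\ffi_n=\ffi_n$, the definition $T=P_0UP_0|_{P_0\cH}$ gives $T\ffi_n=P_0U\ffi_n$, so the defining relation reads $P_0U\ffi_n-e^{i\theta}\ffi_n\ra 0$.

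Writing $Q_0=\un-P_0$ and decomposing orthogonally $U\ffi_n=P_0U\ffi_n+Q_0U\ffi_n$, I would estimate the two summands separately. The first is controlled directly by the relation above. For the second, the key point is that unitarity gives $\|U\ffi_n\|=\|\ffi_n\|=1$, while the convergence $P_0U\ffi_n-e^{i\theta}\ffi_n\ra 0$ together with $\|e^{i\theta}\ffi_n\|=1$ yields $\|P_0U\ffi_n\|\ra 1$ by the triangle inequality; Pythagoras then forces $\|Q_0U\ffi_n\|^2=1-\|P_0U\ffi_n\|^2\ra 0$, so the leakage out of $\ran P_0$ disappears.

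Combining the two estimates through the orthogonality of $P_0\cH$ and $Q_0\cH$ would then give
\be
\|U\ffi_n-e^{i\theta}\ffi_n\|^2=\|P_0U\ffi_n-e^{i\theta}\ffi_n\|^2+\|Q_0U\ffi_n\|^2\ra 0,
\ee
so $\{\ffi_n\}$ is an approximate eigenvector sequence for $U$ at $e^{i\theta}$, establishing $e^{i\theta}\in\sigma_{app}(U)$. I note that the hypothesis $e^{i\theta}\notin\sigma_p(T)$ is not actually needed for this chain of estimates; it merely isolates the genuinely new content, since the complementary case $e^{i\theta}\in\sigma_p(T)$ already delivers the stronger conclusion $e^{i\theta}\in\sigma_p(U)$ through (\ref{evpup}). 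No serious obstacle arises here: the one delicate point is the norm limit $\|P_0U\ffi_n\|\ra 1$, which rests entirely on the isometry identity $\|U\ffi_n\|=1$ --- without unitarity the leakage term $Q_0U\ffi_n$ need not vanish and the argument would break down.
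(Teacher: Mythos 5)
Your argument for the final assertion is correct and is essentially identical to the paper's own proof of that part: the same approximate eigenvector sequence $\ffi_n\in P_0\cH$, the same orthogonal splitting $U\ffi_n=P_0U\ffi_n+Q_0U\ffi_n$, and the same use of unitarity plus Pythagoras to force $Q_0U\ffi_n\ra 0$. Your side remark is also accurate: the hypothesis $e^{i\theta}\notin\sigma_p(T)$ plays no role in the estimates (the paper's proof does not use it either), and in the complementary case $e^{i\theta}\in\sigma_p(T)$ one gets $e^{i\theta}\in\sigma_p(U)\subset\sigma_{app}(U)$ anyway.

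However, as a proof of the statement as a whole your proposal is incomplete: the lemma also asserts (\ref{evpu}), (\ref{evpup}), and the implication $\ker Q_0UP_0=\{0\}\Rightarrow \sigma_p(UP_0)\cap\Ss=\sigma_p(P_0U)\cap\Ss=\sigma_p(P_0UP_0)\cap\Ss=\emptyset$, none of which you prove; indeed you invoke (\ref{evpup}) as if it were already available. These parts are short but not vacuous, and unitarity enters essentially. For (\ref{evpu}), taking norms in $UP_0\ffi=e^{i\theta}\ffi$ gives $\|\ffi\|=\|P_0\ffi\|$, which by Pythagoras forces $Q_0\ffi=0$, hence $\ffi=P_0\ffi$, $U\ffi=UP_0\ffi=e^{i\theta}\ffi$ and $P_0UP_0\ffi=e^{i\theta}P_0\ffi=e^{i\theta}\ffi$. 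For (\ref{evpup}), the identity $\ffi=e^{-i\theta}P_0U\ffi$ shows $\ffi\in P_0\cH$, and then $\|P_0U\ffi\|=\|\ffi\|=\|U\ffi\|$ forces $Q_0U\ffi=0$, whence $U\ffi=e^{i\theta}\ffi$. For the kernel statement, any unimodular eigenvector $\ffi$ of $UP_0$, of $P_0U$, or of $P_0UP_0$ (the latter lies in $P_0\cH$ automatically) satisfies, by the above, $\ffi=P_0\ffi$ and $Q_0UP_0\ffi=Q_0U\ffi=e^{i\theta}Q_0\ffi=0$, so it is a nonzero element of $\ker Q_0UP_0$; a trivial kernel therefore excludes all eigenvalues on $\Ss$ for the three operators. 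Supplying these three short arguments would complete your proof.
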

\begin{proof} Taking the norm of the left hand side of (\ref{evpu}) yields  
$P_0\ffi=\ffi$, $Q_0UP_0\ffi=0$ and the first identities follow.  For (\ref{evpup}),
$P_0U\ffi=e^{i\theta}\ffi =P_0e^{i\theta}\ffi$ gives the results directly. Now,
$
P_0U\ffi=e^{i\theta}\ffi \ \Leftrightarrow \ UP_0\psi=e^{i\theta}\psi  \ \mbox{where} \ \psi=U\ffi
$
shows with  (\ref{evpu}) that (\ref{evpup}) implies $Q_0UP_0\psi=0$. Similarly, $P_0UP_0\ffi=e^{i\theta}\ffi$ implies $Q_0UP_0\ffi=0$. Thus, if $\ker Q_0UP_0 =\{0\} $, we get the absence of eigenvalue of modulus one for $UP_0$, $P_0U$ and $P_0UP_0$. 
Finally, let $e^{i\theta}\in \sigma_{app}(T)\setminus \sigma_p(T)$ and $\ffi_n\in P_0\cH$ s.t. $\| \ffi_n \|=1$ and $T\ffi_n-e^{i\theta}\ffi_n\ra 0.$  By assumption, 
$
\|U\ffi_n\|^2=\|e^{i\theta}\ffi_n+(P_0U\ffi_n-e^{i\theta}\ffi_n)\|^2+\|Q_0U\ffi_n\|^2,
$ 
where the parenthesis in the right hand side tends to zero, as $n\ra \infty$. As $U$ is unitary, we have $\lim_{n\ra \infty}Q_0U\ffi_n=0$. Consequently,  $e^{i\theta}\in \sigma_{app}(U)$ since
$
U\ffi_n-e^{i\theta}\ffi_n = T\ffi_n-e^{i\theta}\ffi_n+Q_0U\ffi_n\ra 0, \ \mbox{as} \ n\ra \infty.
$
\ep
\end{proof}
\begin{rem} i) The same result holds with $T^*$ and $U^*$ in place of $T$ and $U$.\\
ii)  If $\ker (Q_0UP_0)=\{0\}$, $\lim_{n\ra \infty}Q_0U\ffi_n=0$ implies that the operator $[Q_0UP_0]^{-1}:\ran  Q_0UP_0\subset Q_0\cH\ra P_0\cH$ is not bounded. 
\end{rem}
Let us also recall the following properties.
\begin{lem}\label{easyspect} Let $T=V(P_1+gP_2)$ and
$\ffi \in \cH_0$ such that  $T \ffi =\lambda \ffi$. Then for all $0<g<1$,
\bea
|\lambda|=1& \Rightarrow & \begin{matrix} \ffi=P_1\ffi 
\ \mbox{and } \ V \ffi=P_1V P_1\ffi=\lambda \ffi \end{matrix},\nonumber \\
|\lambda|=g& \Rightarrow & \begin{matrix} \ffi=P_2\ffi 
\ \mbox{and } \ V \ffi =P_2V P_2 \ffi =(\lambda/g) \ffi \end{matrix}.
\eea
Consequently, \vspace{-.80cm}
\bea
\ker P_2V P_1 =\{0\} & \Rightarrow & \sigma_p(T)\cap \Ss =\emptyset, \ \mbox{and}  \nonumber \\  
\ker P_1V P_2 =\{0\} & \Rightarrow & \sigma_p(T)\cap g\Ss =\emptyset.
\eea
If $g=0$,  \vspace{-.80cm}
\bea
\sigma(T)=\sigma(P_1 V P_1|_{P_1\cH_0})\cup \{0\}.\label{88}
\eea
\end{lem}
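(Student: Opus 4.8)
The plan is to exploit that $V$ is unitary, so that $\|T\ffi\|=\|K\ffi\|$ for $K=P_1+gP_2$, together with the fact that $P_1$ and $P_2$ project onto mutually orthogonal subspaces, which gives the Pythagorean identities $\|K\ffi\|^2=\|P_1\ffi\|^2+g^2\|P_2\ffi\|^2$ and $\|\ffi\|^2=\|P_1\ffi\|^2+\|P_2\ffi\|^2$. The entire eigenvector dichotomy for $0<g<1$ then falls out of comparing these two expressions.

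First I would treat $|\lambda|=1$. Taking norms in $T\ffi=\lambda\ffi$ gives $\|K\ffi\|=\|\ffi\|$, which by the identities above reads $(g^2-1)\|P_2\ffi\|^2=0$; since $0<g<1$ this forces $P_2\ffi=0$, hence $\ffi=P_1\ffi$ and $K\ffi=\ffi$, so $V\ffi=T\ffi=\lambda\ffi$. That $V\ffi=P_1VP_1\ffi$ is then immediate: using $\ffi=P_1\ffi$ on the right and $V\ffi=\lambda\ffi\in\ran P_1$ on the left, both projectors act as the identity, so $P_1VP_1\ffi=P_1V\ffi=\lambda\ffi=V\ffi$. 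The case $|\lambda|=g$ is entirely symmetric: now $\|K\ffi\|=g\|\ffi\|$ yields $(1-g^2)\|P_1\ffi\|^2=0$, whence $P_1\ffi=0$, $\ffi=P_2\ffi$, $K\ffi=g\ffi$, and $gV\ffi=\lambda\ffi$; since $V\ffi=(\lambda/g)\ffi\in\ran P_2$ the same reasoning gives $V\ffi=P_2VP_2\ffi$.

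For the two consequences I would argue by contraposition. If $\lambda\in\sigma_p(T)$ with $|\lambda|=1$, the eigenvector produced above satisfies $\ffi=P_1\ffi$ and $P_2V\ffi=0$ (because $V\ffi\in\ran P_1$), so $P_2VP_1\ffi=P_2V\ffi=0$ with $\ffi\neq 0$, contradicting $\ker P_2VP_1=\{0\}$; hence $\sigma_p(T)\cap\Ss=\emptyset$. Interchanging the roles of $P_1,P_2$ and replacing $P_2VP_1$ by $P_1VP_2$ gives $\sigma_p(T)\cap g\Ss=\emptyset$ in the same way.

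Finally, for $g=0$ one has $K=P_1$, so $T=VP_1$, and the clean norm argument no longer isolates the eigenvector; this is the step I expect to be the main obstacle. Here I would pass to the block decomposition $\cH_0=P_1\cH_0\oplus P_2\cH_0$, in which $T=VP_1$ has vanishing right column, hence is block lower triangular with diagonal blocks $A:=P_1VP_1|_{P_1\cH_0}$ and the zero operator on $P_2\cH_0$. For $z\neq 0$ the corner block $-z$ is invertible, and a block lower triangular operator is invertible precisely when both its diagonal blocks are, so $z\in\rho(T)\Leftrightarrow z\in\rho(A)$; this yields $\sigma(T)\setminus\{0\}=\sigma(A)\setminus\{0\}$. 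Since $\ran P_2=\ker P_1\subset\ker(VP_1)=\ker T$ is nonzero, $0\in\sigma_p(T)$, and combining the two facts gives $\sigma(T)=\sigma(P_1VP_1|_{P_1\cH_0})\cup\{0\}$, as claimed. The only care needed is to justify the triangular invertibility criterion on the infinite-dimensional components $P_j\cH_0$, equivalently to invoke the standard identity $\sigma(AB)\setminus\{0\}=\sigma(BA)\setminus\{0\}$ with $A=V$, $B=P_1$.
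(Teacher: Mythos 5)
Your proof is correct and follows essentially the same route as the paper: the eigenvector dichotomy is the paper's defect-operator computation $(\I-T^*T)^{1/2}\ffi=\sqrt{1-g^2}\,P_2\ffi$ from Lemma \ref{lcnu} recast as your Pythagorean norm identity, and your block lower-triangular invertibility criterion for $g=0$ is exactly the Schur complement (\ref{scco})--(\ref{critschur}) of Theorem \ref{suff} specialized to $g=0$, which is what the paper invokes. The only cosmetic differences are that you handle the $|\lambda|=g$ case by a direct norm computation where the paper passes to the auxiliary contraction $(gT^{-1})^*=V(gP_1+P_2)$, and your argument is self-contained rather than by cross-reference.
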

\begin{proof} All statements except the last one are consequences of the proof of Lemma \ref{lcnu}.
If $g=0$, $T=V P_1$, so that $\ker \ T=P_2\cH_0$. Statement (\ref{88}) is a consequence of (\ref{scco}) and (\ref{critschur}) in the proof of Theorem \ref{suff} below. 
\ep
\end{proof}
\begin{rems} 
i) Analogous statements hold when $T$ is replaced by $(P_1+gP_2)V=V^*T V$. In particular, the results hold for $T^*$.
\end{rems}
Next, we come back to our random setting and make further use of the structure of $K$ to apply the Feschbach-Schur method in order to obtain conditions on the coefficients of $\tilde C$ (\ref{const}) that ensure that for all realizations $\omega\in \Omega$,  $\mbox{spr }(T_\omega)<\|T_\omega\|=1$, in case $g<1$.
\begin{thm} \label{suff} Let $T_\omega=V_\omega(P_1+gP_2)$, where $P_j$ are defined in (\ref{specdeck}) and $0\leq g<1$. Consider $P_jVP_k=V_{jk}$, $j,k\in \{1,2\}$, as operators on $P_k\cH$. If $\|V_{11}\|<1$, 
then, for all realizations $\omega\in\Omega$
\be\label{nonper}
g<\frac{1-\|V_{11} \|}{\|V_{21} \|\|V_{12} \|+\|V_{22} \|(1-\|V_{11} \|)}\ \ \Rightarrow \ \  \mbox{\em spr }(T_\omega)<1.
\ee
Moreover, the  set $\{|z|<g\}\cup\{r(V)<|z|\leq 1\}\subset \rho(T_\omega)$ for all $\omega\in\Omega$, where
\be\label{annulus}
r(V)=\frac{1}{2}\left(\|V_{11} \|+g\|V_{22}\| +\sqrt{(\|V_{11} \|-g\|V_{22} \|)^2+4g\|V_{21} \|\|V_{12} \|}\|\right).
\ee
\end{thm}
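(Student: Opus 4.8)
The plan is to prove invertibility of $T_\omega-z$ in the outer region by a Schur complement (Feshbach) reduction adapted to the orthogonal splitting $\cH_0=P_1\cH_0\oplus P_2\cH_0$ furnished by Theorem \ref{t1}, treating the inner disc separately. The inner part $\{|z|<g\}\subset\rho(T_\omega)$ comes for free: since $V_\omega$ is unitary and $K=P_1+gP_2$ is boundedly invertible with $\mbox{dist}(0,\sigma(K))=g$, Remark \ref{discrho} (equivalently Corollary \ref{simplif}, eq. (\ref{87})) applies verbatim to $T_\omega=V_\omega K$. For the outer region I would write, using $KP_1=P_1$ and $KP_2=gP_2$,
\be
T_\omega-z=\begin{pmatrix} V_{11}^\omega-z & g V_{12}^\omega \\ V_{21}^\omega & gV_{22}^\omega-z\end{pmatrix},\qquad V_{jk}^\omega=P_jV_\omega P_k .
\ee
For $|z|>\|V_{11}^\omega\|$ the $(1,1)$ entry is invertible, so the Feshbach criterion reduces invertibility of $T_\omega-z$ to that of the Schur complement $S(z)=(gV_{22}^\omega-z)-V_{21}^\omega(V_{11}^\omega-z)^{-1}gV_{12}^\omega$ on $P_2\cH_0$. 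Writing $S(z)=-(z-gV_{22}^\omega)\bigl(\un-(z-gV_{22}^\omega)^{-1}V_{21}^\omega(V_{11}^\omega-z)^{-1}gV_{12}^\omega\bigr)$ and bounding each factor in norm, a Neumann series gives $z\in\rho(T_\omega)$ as soon as
\be
\bigl(|z|-\|V_{11}^\omega\|\bigr)\bigl(|z|-g\|V_{22}^\omega\|\bigr)>g\,\|V_{21}^\omega\|\,\|V_{12}^\omega\| .
\ee
The larger root of the corresponding quadratic in $|z|$ is exactly the quantity (\ref{annulus}), but written with the \emph{random} block norms $\|V_{jk}^\omega\|$.

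The crux of the argument, and the step I expect to be the main obstacle, is therefore to show that these block norms are in fact $\omega$-independent, i.e. $\|P_jV_\omega P_k\|=\|P_jVP_k\|=\|V_{jk}\|$ for all $\omega$. I would identify $\cH_0\simeq l^2(\Z)\otimes\C^2$ so that, by Proposition \ref{k2}, $P_j=\un\otimes\pi_j$ with $\pi_j=|v_j\ket\bra v_j|$ the site-independent rank-one projectors, while $V_\omega=\D_\omega^0 V$ is a phase-decorated one-dimensional walk whose site index changes by $\pm1$ at each step. A direct computation then identifies $P_jV_\omega P_k$ with an operator $L_\omega$ on $l^2(\Z)$ acting by $L_\omega e_m=c_+\,e^{i\phi_m^+}e_{m+1}+c_-\,e^{i\phi_m^-}e_{m-1}$, with $\omega$-independent scalars $c_\pm$ (contractions of $v_j,v_k$ through the coin of $V$) and phases $\phi_m^\pm$ built from the $\omega$'s. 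The key point is that $L_\omega^*L_\omega$ is self-adjoint and couples only $m\leftrightarrow m\pm2$, with constant diagonal $|c_+|^2+|c_-|^2$ and off-diagonal entries of constant modulus $|c_+||c_-|$; on each of its two invariant sublattices it is thus a bi-infinite Jacobi operator whose off-diagonal phases can be removed by a diagonal gauge transformation (telescoping along the path). Gauge invariance of the norm of such a Jacobi operator gives $\|L_\omega^*L_\omega\|=(|c_+|+|c_-|)^2$, whence $\|V_{jk}^\omega\|=|c_+|+|c_-|$ is independent of $\omega$ and equals the deterministic value $\|V_{jk}\|$.

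With $\omega$-independence established, the quadratic inequality holds with the fixed norms $\|V_{jk}\|$, and its relevant root is precisely $r(V)$ of (\ref{annulus}). Evaluating $(t-\|V_{11}\|)(t-g\|V_{22}\|)-g\|V_{21}\|\|V_{12}\|$ at $t=\|V_{11}\|$ and at $t=g\|V_{22}\|$ yields $-g\|V_{21}\|\|V_{12}\|\le0$, so both $\|V_{11}\|$ and $g\|V_{22}\|$ lie below $r(V)$; hence the two Neumann conditions $|z|>\|V_{11}\|$, $|z|>g\|V_{22}\|$ are automatic once $|z|>r(V)$. This gives $\{|z|>r(V)\}\subset\rho(T_\omega)$, which together with $\|T_\omega\|=1$ (Corollary \ref{normt}) and the inner disc yields the set $\{|z|<g\}\cup\{r(V)<|z|\le1\}\subset\rho(T_\omega)$ for all $\omega$. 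In particular $\mbox{spr}(T_\omega)\le r(V)$, and the elementary factorization $(2-\|V_{11}\|-g\|V_{22}\|)^2-(\|V_{11}\|-g\|V_{22}\|)^2=4(1-\|V_{11}\|)(1-g\|V_{22}\|)$ shows that, under $\|V_{11}\|<1$, the inequality $r(V)<1$ is equivalent to condition (\ref{nonper}); hence (\ref{nonper}) forces $\mbox{spr}(T_\omega)<1$ uniformly in $\omega$.
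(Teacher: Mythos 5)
Your proposal is correct and follows essentially the same route as the paper: a Feshbach--Schur reduction with respect to the splitting $\cH_0=P_1\cH_0\oplus P_2\cH_0$, a Neumann-series bound yielding the quadratic condition $(|z|-\|V_{11}\|)(|z|-g\|V_{22}\|)>g\|V_{12}\|\|V_{21}\|$ whose larger root is $r(V)$, and a reduction of the random case to the deterministic one by showing $\|P_jV_\omega P_k\|=\|V_{jk}\|$ (the paper's Lemma \ref{qlem} does this via the explicit bidiagonal form (\ref{coefvjk}) and the gauge relation $P_jV_\omega P_k=\D_\eta^{(j)}V_{jk}\D_\xi^{(k)}$, which is the same phase-removal idea as your $L_\omega^*L_\omega$ Jacobi argument). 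The only cosmetic differences are that the paper inverts the $(2,2)$ block $gV_{22}-z$ (valid for $|z|>g$) and takes the Schur complement on $\cH_1$, whereas you invert $V_{11}-z$ and work on $\cH_2$; both lead to the same annulus.
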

\begin{rems}
i) The result is deterministic and holds for any operator $T=V(P_1+gP_2)$, where $V$ is unitary and $\{P_j\}_{j=1,2}$ are supplementary orthogonal projectors. \\
ii) In case $V_\omega$ is given by Theorem \ref{t1},  (\ref{nonper}) yields a somehow implicit condition since the norms $\|V_{jk}\|$ depend on $g$, see Lemma \ref{qlem} and Example \ref{ex2} below. \\
iii) Remark \ref{remopt} below shows that $r(V)$ is optimal.\\
iv) This infinite dimensional result is reminiscent of the works \cite{WF, B}, which consider matrices of the form $T_\omega=V_\omega K$ where $V_\omega$ is a unitary, Haar distributed matrix and $K>0$ is given.  It is shown under various assumptions that a density of eigenvalues of $T_\omega$ can be defined, which is supported in a deterministic ring.
\end{rems}
\begin{proof} It is enough to prove the second statement. We start with the deterministic case.
Given $K=P_1+gP_2$, we split $\cH_0$ as $\cH_0=\cH_1\bigoplus \cH_2$ where $\cH_j=P_j\cH_0$. Writing $T=VK$ as a bloc structure according to this decomposition, we have for any $z\in \C$
\be
T-z\un=\begin{pmatrix}V_{11} -z\un_1& g V_{12} \cr
V_{21} & g V_{22}-z\un_2
\end{pmatrix},
\ee
where $\un_j=P_j|_{\cH_j}$ is the identity operator in $\cH_j$ and $V_{jk}=P_jVP_k$ are understood as operators from $\cH_k$ to $\cH_j$, $j,k\in\{1,2\}$.
For any $z\in \rho(g V_{22})$, we consider the Schur complement $F(z)\in \cB(\cH_1)$ defined by
\be\label{scco}
F(z)=(V_{11} -z\un_1)-gV_{12} (g V_{22}-z\un_2)^{-1}V_{21},
\ee
such that 
\be\label{critschur}
z\in \rho(T)\cap  \rho(g V_{22}) \Leftrightarrow 0\in \rho(F(z)).
\ee 
As $V$ is unitary, we have $g \|V_{22}\|\leq g<1$, so that $F: \{|z|>g\}\ra \cB(\cH_1)$ is well defined. 
If $z\in \rho(V_{11})\cap \Ss$, we can write
\be
F(z)=(V_{11} -z\un_1)\left(\un_1-g(V_{11} -z\un_1)^{-1}V_{12} (g V_{22}-z\un_2)^{-1}V_{21}\right),
\ee
which has a bounded inverse if
$
g\|(V_{11} -z\un_1)^{-1}V_{12} (g V_{22}-z\un_2)^{-1}V_{21}\|<1.
$
Assuming that $\|V_{11} \|<1$, we have $\{|z|>\|V_{11} \|\}\subset \rho(V_{11})$ and for $|z|>\max{(g \|V_{22} \|, \|V_{11} \|)}$, 
\bea
g\|(V_{11} -z\un_1)^{-1}V_{12} (g V_{22}-z\un_2)^{-1}V_{21}\| \leq
\frac{g\|V_{12} \|\|V_{21} \|}{(|z|-\|V_{11} \|)(|z|-g\|V_{22} \|)}.
\eea
The inner radius $r(V)$ of the ring (\ref{annulus}) is defined so that the right hand side above is strictly smaller than one and it satisfies $\max{(g\|V_{22} \|, \|V_{11} \|)}\leq r(V) < 1$ whenever
$g<\frac{1-\|V_{11} \|}{\|V_{21} \|\|V_{12} \|+\|V_{22} \|(1-\|V_{11} \|)}$. Thus, according to (\ref{critschur}), this implies that the ring (\ref{annulus}) belongs to the resolvent set of $T$, which yields the result for $T$ in place of $T_\omega$. 

To get the result for the random case with $V$ replaced by $V_\omega$, it  is enough to show that 
\be
\|P_jV_\omega P_k\|=\|P_j V P_k\|=\|V_{jk}\|, \ \ \forall \ j,k\in \{1,2\}. 
\ee
This is a consequence of the following lemma, which ends the proof of the theorem. \ep
\end{proof}

\begin{lem}\label{qlem} Let  $\{v_j^{(k)}\}_{k\in \Z}$
be the orthonormal basis of $\cH_j$, $j=1,2$ given by (\ref{evek}). Then
\bea�\label{vjk}
P_1V_\omega P_1v_1^{(k)}
&=&\frac{1}{1-g^2}\left(- e^{i\omega_{2k-1}}\bar q(s\gamma-q\delta)v_1^{(k-1)}+e^{i\omega_{2k+2}}\bar s(s\alpha-q\beta)v_1^{(k+1)}\right)\\ \nonumber
P_2V_\omega P_2v_2^{(k)}&=&\frac{1}{1-g^2}\left( -e^{i\omega_{2k-1}}st v_2^{(k-1)}-e^{i\omega_{2k+2}}qrv_2^{(k+1)}\right)\\ \nonumber
P_2V_\omega P_1 v_1^{(k)}&=&\frac{1}{1-g^2}\left(s(s\gamma-q\delta)e^{i\omega_{2k-1}}v_2^{(k-1)}+
q(s\alpha-q\beta)e^{i\omega_{2k+2}}v_2^{(k+1)}\right)\\ \nonumber
P_1V_\omega P_2 v_2^{(k)}&=&\frac{1}{1-g^2}\left( e^{i\omega_{2k-1}}\bar q t   v_1^{(k-1)}-\bar s r e^{i\omega_{2k+2}}v_1^{(k+1)} \right). \nonumber
\eea
Defining coefficients $w_\pm^{(ij)}$ by 
\be\label{coefvjk}
P_iV_\omega P_jv_j^{(k)}=e^{i\omega_{2k-1}}w_+^{(ij)}v_i^{(k-1)}+e^{i\omega_{2k+2}}w_-^{(ij)}v_i^{(k+1)},
\ee
we have \vspace{-.8cm}
\bea\label{normvjk}
\|P_jV_\omega P_k\|&=&|w_+^{(jk)}|+|w_-^{(jk)}|=\|V_{jk}\|
\eea
and, for all $i,j\in \{1,2\}$,
\be\label{eqkernul} 
\ker P_iV_\omega P_j \neq \{0\} \Leftrightarrow P_iV_\omega P_j=0.
\ee
Let  $\D_\eta^{(j)}$ and $\D_\xi^{(j)}$ be defined in
the orthonormal basis $\{v_j^{(k)}\}_{k\in \Z}$ of $\cH_j$ by
$
\D_\eta^{(j)}=\mbox{ \em diag }(e^{i\eta^{(j)}_k})$, and $\D_\xi^{(j)}=\mbox{\em diag }(e^{i\xi^{(j)}_k}),
$ 
where, for $p\geq 1$
\bea
\eta^{(j)}_{2p}&=&\sum_{l=0}^p \omega_{4l}-\sum_{l=0}^{p-1}\omega_{4l+1}, \ \ \ \
\eta^{(j)}_{2p+1}= \sum_{l=0}^p \omega_{4l+2}-\sum_{l=0}^{p-1}\omega_{4l+3}\\
\xi^{(j)}_{2p}&=&\sum_{l=0}^{p-1} \omega_{4l+3}-\sum_{l=0}^{p-1}\omega_{4l+2}, \ \
\xi^{(j)}_{2p+1}= \sum_{l=0}^p \omega_{4l+1}-\sum_{l=0}^{p}\omega_{4l},
\eea
and, for $p\leq 0$
\bea
\eta^{(j)}_{2p}&=&-\sum_{l=p+1}^1 \omega_{4l}+\sum_{l=p}^{1}\omega_{4l+1}, \ \ \ \
\eta^{(j)}_{2p+1}= -\sum_{l=p+1}^1 \omega_{4l+2}+\sum_{l=p}^{0}\omega_{4l+3}\\
\xi^{(j)}_{2p}&=&-\sum_{l=p}^{0} \omega_{4l+3}+\sum_{l=p}^{1}\omega_{4l+2}, \ \
\xi^{(j)}_{2p+1}= -\sum_{l=p+1}^1 \omega_{4l+1}+\sum_{l=p+1}^{1}\omega_{4l}.
\eea
Then, \vspace{-.55cm}
\be\label{rel}
P_jV_\omega P_k= \D_\eta^{(j)} V_{jk}  \D_\xi^{(k)} \simeq \D_\xi^{(k)} \D_\eta^{(j)} V_{jk}.
\ee
\end{lem}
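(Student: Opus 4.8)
The plan is to establish the four explicit action formulas (\ref{vjk}) first, read off (\ref{coefvjk}), and then deduce the norm identity (\ref{normvjk}), the kernel dichotomy (\ref{eqkernul}) and the gauge relation (\ref{rel}) from a single structural observation: in the bases $\{v_j^{(k)}\}$ the deterministic block $V_{jk}=P_jVP_k$ is a translation-invariant two-diagonal operator, and reinstating the randomness only inserts site-dependent phases. For the action formulas I would compute directly. By Theorem \ref{t1}, $V$ has the band structure of $T$ and is characterised by the matrix (\ref{simv}); writing its entries $\alpha',\beta',\gamma',\delta'$ one has $Ve_{2k}=\alpha'e_{2k+2}+\gamma'e_{2k-1}$ and $Ve_{2k+1}=\beta'e_{2k+2}+\delta'e_{2k-1}$. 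I would substitute the expression (\ref{evek}) of $v_j^{(k)}$ in the canonical basis, apply $V$, and then $\D_\omega^0$: since $Vv_j^{(k)}\in\mbox{span}\{e_{2k-1},e_{2k+2}\}$ with $e_{2k-1}\in\cH^{(k-1)}$ and $e_{2k+2}\in\cH^{(k+1)}$, the diagonal $\D_\omega^0$ merely produces the two scalars $e^{i\omega_{2k-1}}$ and $e^{i\omega_{2k+2}}$. Projecting with $P_i=\bigoplus_mQ_i^{(m)}$ onto $v_i^{(k\pm1)}$ and using $|q|^2+|s|^2=1-g^2$ from Proposition \ref{k2} then yields the coefficients. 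The needed simplifications are of two kinds: the $g$-dependent corrections in (\ref{simv}) cancel identically (for instance $s\gamma'-q\delta'=s\gamma-q\delta$, the $t$-terms dropping out), while the remaining relations — e.g. $\bar q\alpha'+\bar s\beta'=-r$ — follow from the orthonormality of the rows and columns of $\tilde C$ imposed by (\ref{const}). This produces (\ref{vjk}) and hence (\ref{coefvjk}).

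To obtain (\ref{rel}) I would exploit that the weights $w_\pm^{(jk)}$ are \emph{independent of $k$}. In the bases $\{v_k^{(m)}\}$ and $\{v_j^{(m)}\}$, $V_{jk}$ sends $v_k^{(m)}\mapsto w_+^{(jk)}v_j^{(m-1)}+w_-^{(jk)}v_j^{(m+1)}$, whereas $P_jV_\omega P_k$ inserts the phases $e^{i\omega_{2m-1}}$ and $e^{i\omega_{2m+2}}$ on the two summands. To absorb these into diagonal unitaries I would seek real sequences $\eta_\bullet,\xi_\bullet$ solving
\be
\xi_m+\eta_{m-1}=\omega_{2m-1},\qquad \xi_m+\eta_{m+1}=\omega_{2m+2}\quad(m\in\Z),
\ee
so that $\D_\eta^{(j)}V_{jk}\D_\xi^{(k)}v_k^{(m)}=e^{i\omega_{2m-1}}w_+^{(jk)}v_j^{(m-1)}+e^{i\omega_{2m+2}}w_-^{(jk)}v_j^{(m+1)}$ matches $P_jV_\omega P_kv_k^{(m)}$. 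This system is always solvable: eliminating $\xi$ gives the two-step recursion $\eta_{m+1}-\eta_{m-1}=\omega_{2m+2}-\omega_{2m-1}$, which determines $\eta$ from two initial phases, and then $\xi_m:=\omega_{2m-1}-\eta_{m-1}$ automatically satisfies the second equation. The explicit sums displayed in the statement are one such solution, and substituting them verifies (\ref{rel}); the only labour is the index bookkeeping across the parities of $m$ and the ranges $p\geq1$, $p\leq0$.

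Finally, (\ref{normvjk}) and (\ref{eqkernul}) follow from (\ref{rel}) for every pair $j,k\in\{1,2\}$. Since $\D_\eta^{(j)}$ and $\D_\xi^{(k)}$ are unitary, $\|P_jV_\omega P_k\|=\|V_{jk}\|$ and $\ker P_jV_\omega P_k=\D_\xi^{(k)*}\ker V_{jk}$, so both quantities are computed on the deterministic operator. Identifying $\cH_k$ and $\cH_j$ with $l^2(\Z)$ through the bases $\{v_k^{(m)}\}$, $\{v_j^{(m)}\}$ and Fourier transforming to $L^2(\T)$, $V_{jk}$ becomes multiplication by the symbol $a(x)=w_+^{(jk)}e^{-ix}+w_-^{(jk)}e^{ix}$. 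Hence $\|V_{jk}\|=\max_{x\in\T}|a(x)|=|w_+^{(jk)}|+|w_-^{(jk)}|$, which is (\ref{normvjk}); and $\ker V_{jk}=\{0\}$ unless $a\equiv0$ on $\T$, i.e. unless $w_+^{(jk)}=w_-^{(jk)}=0$, i.e. $V_{jk}=0$, which is (\ref{eqkernul}).

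I expect the main obstacle to be purely computational rather than conceptual: keeping the correct unitarity relations of $\tilde C$ in play during the projection step, and carrying out the phase-sum bookkeeping in the verification of (\ref{rel}). Once the explicit formulas and the gauge factorisation are in hand, the norm and kernel statements are immediate consequences of the elementary spectral theory of the scalar Laurent symbol $a$.
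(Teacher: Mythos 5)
Your proof is correct and follows essentially the same route as the paper's: the formulas (\ref{vjk}) by explicit computation from (\ref{evek}), the characterization (\ref{simv}) of $V$ and the unitarity relations of $\tilde C$ (your cancellations $s\gamma'-q\delta'=s\gamma-q\delta$ and $\bar q\alpha'+\bar s\beta'=-r$ do check out), the relation (\ref{rel}) by direct verification of the phase bookkeeping, and the norm and kernel statements by reduction to a deterministic translation-invariant two-diagonal operator. The only, harmless, difference is organizational: you obtain (\ref{normvjk}) and (\ref{eqkernul}) as corollaries of (\ref{rel}) via the Fourier symbol $w_+^{(jk)}e^{-ix}+w_-^{(jk)}e^{ix}$, which is precisely the ``classical argument'' the paper invokes, so nothing is lost.
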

\begin{proof} The expressions of $P_iV_\omega P_j$ in the bases $\{v_j^{(k)}\}_{k\in \Z}$ are obtained by 
explicit computations making use of (\ref{evek}), 
\be
e_{2k}=\frac{(\bar s v_1^{(k)}+q v_2^{(k)})}{\sqrt{|q|^2+|s|^2}}, \ \ e_{2k+1}=\frac{-\bar q v_1^{(k)}+s v_2^{(k)}}{\sqrt{|q|^2+|s|^2}},
\ee
and of the constraint (\ref{const}). Identity (\ref{normvjk}) is established by a classical argument and (\ref{eqkernul})  is a direct consequence of this identity. Relation (\ref{rel}) is also a matter of verification.
\ep
\end{proof}\\
\begin{rem} \label{nicexp} With $\det\begin{pmatrix}\alpha & \beta \cr \gamma & \delta \end{pmatrix}=ge^{i\chi}$, see Remark \ref{detg}, and constraint (\ref{const}), we have
\be
\|V_{11}\|=\frac{|\delta-\bar \alpha ge^{i\chi}|+|\alpha -\bar \delta ge^{i\chi}|}{1-g^2},
\ee
where the first / second term is the modulus of the coefficient of $v_1^{(k-1)}$ / $v_1^{(k+1)}$ in (\ref{coefvjk}).
\end{rem}

We establish further properties of $V_{jk}$ and $P_jV_\omega P_k$ as operators from $\cH_k$ to $\cH_j$, that we present in an abstract form.
\begin{prop}\label{spw}  Let $W$ be an operator that takes a  tridiagonal form in an orthonormal basis of $l^2(\Z)$ whose sole non zero coefficients satisfy 
\be
|W_{j, j+1}|=W_-, \ \mbox{and }\  |W_{j, j-1}|=W_+, \ \forall j\in \Z.
\ee 
Assume, without loss, that $W_+\geq W_->0$. Then, $\|W\|=W_++W_-$ and
\bea
\mbox{If \ } W_+\leq 1,&& \{|z|< W_+-W_-\}\subset \rho(W) \nonumber \\
\mbox{If \ } W_+ > 1,&&  \{|z|< (W_+-W_-)/(2W_+-1)\}\subset \rho(W).
\eea
If $W$ is further translation invariant, 
$
W_{j, j+1}=w_-, \ \mbox{and }\  W_{j, j-1}=w_+, \forall j\in \Z,
$ 
then $W$ is normal and
$\mbox{\em spr }(W)=\|W\|=|w_+|+|w_-|$.

\end{prop}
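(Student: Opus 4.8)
The plan is to prove the three assertions in turn: the norm identity, the resolvent disc, and the translation-invariant statement.

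First I would establish $\|W\|=W_++W_-$. For the upper bound, write a general $\psi=\sum_k c_k e_k$; the tridiagonal form gives $(W\psi)_m=W_{m,m+1}c_{m+1}+W_{m,m-1}c_{m-1}$, whence $|(W\psi)_m|\le W_-|c_{m+1}|+W_+|c_{m-1}|$. Squaring, summing over $m$, and bounding the cross term by $2W_+W_-|c_{m-1}||c_{m+1}|\le W_+W_-(|c_{m-1}|^2+|c_{m+1}|^2)$ yields $\|W\psi\|^2\le(W_++W_-)^2\|\psi\|^2$. For the matching lower bound I would use approximate eigenvectors: take $\psi^{(N)}=(2N+1)^{-1/2}\sum_{|j|\le N}e^{i\beta_j}e_j$ and choose the phases $\beta_j$ so that the two contributions to each interior component align, i.e. solve the two-step recursion $\beta_{m+1}-\beta_{m-1}=\arg W_{m,m-1}-\arg W_{m,m+1}$ (which fixes $\beta$ on the even and odd sublattices from $\beta_0,\beta_1$). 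Then $|(W\psi^{(N)})_m|=(W_++W_-)(2N+1)^{-1/2}$ for all interior indices $|m|\le N-1$, and as only $O(1)$ boundary terms are missing, $\|W\psi^{(N)}\|^2\to(W_++W_-)^2$, giving $\|W\|\ge W_++W_-$.

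Next I would treat the resolvent disc by splitting $W=W_{\mathrm{sub}}+W_{\mathrm{sup}}$ into its subdiagonal and superdiagonal parts. Each is a bilateral weighted shift on $l^2(\Z)$ with constant weight modulus, hence a nonnegative scalar times a unitary: $W_{\mathrm{sub}}$ is $W_+$ times a unitary, so it is boundedly invertible with $\|W_{\mathrm{sub}}^{-1}\|=1/W_+$, while $\|W_{\mathrm{sup}}\|=W_-$. Factoring $W-z=W_{\mathrm{sub}}\big(\un+W_{\mathrm{sub}}^{-1}(W_{\mathrm{sup}}-z)\big)$ and invoking the Neumann series, $W-z$ is a product of invertibles as soon as $\|W_{\mathrm{sub}}^{-1}\|\,\|W_{\mathrm{sup}}-z\|\le\frac1{W_+}(W_-+|z|)<1$, that is whenever $|z|<W_+-W_-$. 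This gives $\{|z|<W_+-W_-\}\subset\rho(W)$ for every choice of the phases of the entries; since $(W_+-W_-)/(2W_+-1)\le W_+-W_-$ for $W_+\ge1$ (with equality at $W_+=1$), both stated inclusions follow.

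For the translation-invariant case I would pass to Fourier variables, $l^2(\Z)\simeq L^2(\T)$ via $e_k\mapsto e^{ik\theta}$, which conjugates $W$ to multiplication by $\hat W(\theta)=w_+e^{i\theta}+w_-e^{-i\theta}$. A multiplication operator is normal, so $W$ is normal; its spectrum is the closed range of $\hat W$, namely the ellipse traced by $\hat W$, and both $\|W\|$ and $\mathrm{spr}(W)$ equal $\sup_\theta|\hat W(\theta)|=|w_+|+|w_-|$, the maximum being attained where the two terms align. This description also identifies $W_+-W_-$, the minimal distance from the origin to that ellipse, as the optimal radius for the disc above. The clean part of the argument is the resolvent estimate, which relies only on the fact that constant-modulus bilateral weighted shifts are scalar multiples of unitaries; I expect the main obstacle to be the norm lower bound, where one must track the phases of the entries carefully and check that the in-phase recursion for $\beta_j$ is solvable consistently and uniformly in $N$, so that the approximate eigenvectors saturate $W_++W_-$ regardless of the (possibly random) phases.
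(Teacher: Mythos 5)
Your proposal is correct, and on the central point -- the resolvent disc -- it takes a genuinely different factorization from the paper's, with a better outcome. The paper also decomposes $W=W^+S_++W^-S_-$ into constant-modulus shift parts with $S_\pm$ unitary, but it factors $W-z=W^+(S_+-z)\big(\un+\frac{(S_+-z)^{-1}}{W^+}(W^-S_--z(1-W^+))\big)$ and bounds $\|(S_+-z)^{-1}\|\le 1/|1-|z||$; the leftover term $z(1-W^+)$ is precisely what forces the case distinction $W_+\le 1$ versus $W_+>1$ in the statement. You instead invert the large shift part alone, $W-z=W_{\mathrm{sub}}\big(\un+W_{\mathrm{sub}}^{-1}(W_{\mathrm{sup}}-z)\big)$, using only that $W_{\mathrm{sub}}$ is $W_+$ times a unitary, so that all of $-z$ sits in the Neumann perturbation. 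This yields the single uniform inclusion $\{|z|<W_+-W_-\}\subset\rho(W)$ for every value of $W_+$, which implies both inclusions of the proposition (since $(W_+-W_-)/(2W_+-1)\le W_+-W_-$ once $W_+\ge 1$) and shows that the paper's second, smaller disc is an artifact of its factorization rather than a genuine constraint; your disc is moreover optimal, since in the translation-invariant case the spectrum is the ellipse $\{e^{ix}w_++e^{-ix}w_-\}$, whose distance to the origin is exactly $W_+-W_-$. On the remaining claims you and the paper coincide: the translation-invariant statement is the same Fourier argument, and the norm identity, which the paper dispatches as a ``classical argument'' (cf.\ Lemma \ref{qlem}), is carried out correctly in your proposal via the Cauchy--Schwarz upper bound and phase-aligned approximate eigenvectors, whose two-step phase recursion is solvable separately on the even and odd sublattices and is independent of the truncation parameter $N$.
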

\begin{rem} 
The radius of both disks contained in $\rho(W)$ is smaller than one.
\end{rem}
\begin{proof} The norm of $W$ was already mentioned above.
The structure of $W$ is such that we can write $W=W^+ S_+ + W^- S_-$, where the non zero matrix elements of the operator $S_+/S_-$ lie on the diagonal immediately above/below the main diagonal, and all have modulus one;  $S_\pm$ are unitarily equivalent to standard shifts. Thus, for any $|z|\neq 1$, we can write
\bea
W-z&=&W^+ (S_+-z) + W^- S_- -z(1-W^+)\nonumber\\
&=&W^+ (S_+-z)\Big(\un + \frac{(S_+-z)^{-1}}{W^{+}}\Big(W^- S_- -z(1-W^+)\Big)\Big).
\eea
Since 
\be\label{rsha}
\left\|  \frac{(S_+-z)^{-1}}{W^{+}}\Big(W^- S_- -z(1-W^+)\Big)  \right\| \leq \frac{W^-+|z||1-W^+|}{W^+|1-|z||},
\ee
the Neumann series implies that $W-z$ admits a bounded inverse if the right hand side of (\ref{rsha}) is bounded above by one. Considering small values of $|z|$ and dealing with the different cases for $W^+$, we get the result.
In case $W$ is translation invariant, we obtain by Fourier methods that $W$ is unitarily equivalent to a scalar multiplication operator
\be
W\simeq W(x)=e^{ix}w_+ + e^{-ix}w_- \ \ \mbox{on $L^2(\T; \C)$}.
\ee
This operator is obviously normal, which ends the proof. 
\ep \end{proof}

Hence, the translation invariant contractions $P_jVP_j|_{\cH_j}=V_{jj}$ with  tri-diagonal representations  in the orthonormal basis of $\cH_j$ given by  $\{v_j^{(k)}\}_{k\in \Z}$, $j=1,2$, for $0\leq g<1$, with coefficients $w_\pm^{(jj)}$ defined by (\ref{coefvjk})
is normal and satisfies
$
\mbox{spr }(V_{jj})=\|V_{jj}\|=| w_+^{j(j)}|+|w_-^{(jj)}|.
$

\begin{exple}\label{ex2}
{\em
Let us apply the results above to Example \ref{ex1} where $\tilde C$ defined by equation (\ref{exple}). Recall that in this case $g=\sin(\xi)$, and $\xi,\eta\in [0,\pi/2]$. We get 
\be
\|V_{11}\|=\cos(\eta), \ \  \|V_{21}\|=\sin(\eta), \ \ \|V_{22}\|=\cos(\eta),\ \ \|V_{12}\|=\sin(\eta).
\ee
Thus, for $\eta,\xi \in ]0,\pi/2[$ so that $g>0$, $\|V_{11}\|<1$ and for $\xi$ small enough so that
\be\label{crxi}
\sin(\xi)<\frac{1-\cos(\eta)}{\sin^2(\eta)+\cos(\eta)(1-\cos(\eta))},
\ee
condition (\ref{nonper}) holds and we get
\be\label{rxi}
r(V)=\frac12\left(\cos(\eta)(1+\sin(\xi))+\sqrt{\cos^2(\eta)(1-\sin(\xi))^2+4\sin(\xi)\sin^2(\eta)}\right).
\ee
Actually, all corresponding operators $P_jV_\omega P_k$ in this case map the basis vector $v_k^{(n)}$ to one of $v_j^{(n\pm 1)}$ only.
In particular, $P_1V_\omega P_1|_{\cH_1}$ and $P_2V_\omega P_2|_{\cH_2}$ are unitarily equivalent to $\cos(\eta)S_1$ and $\cos(\eta)S_2$ respectively, where $S_j$ is the standard shift on $P_j\cH_j$. Hence, 
\be
\sigma(P_1V_\omega P_1|_{\cH_1})=\cos(\eta)\Ss \ \ \mbox{and} \ \
 \sigma(P_2V_\omega P_2|_{\cH_2})=\cos(\eta)\Ss.
\ee
Thus, assuming a phase distribution satisfying (\ref{suppdnu}) and parameters such that condition (\ref{crxi}) holds, we have excluded the presence of spectrum of the corresponding non-unitary operator $T_\omega$ in the union of the ring of inner radius (\ref{rxi}) and of the symmetric sets characterized by Lemma \ref{form}. Moreover, for suitable values of the parameters condition (\ref{splits}) holds as well and $\sigma(T_\omega)$ is contained in two disjoint sets separated by the real axis.
}
\end{exple}
The following more specific properties hold.
\begin{lem}\label{loffdiag} We have
\bea\label{offdiag}
\|V_{11}\|=0   \Leftrightarrow \ \|V_{22}\|=0 \ 
 \Leftrightarrow \  \tilde C\in \left\{ 
\begin{pmatrix}
0 & 0 & \beta  \cr
q & g & 0  \cr
\gamma & t & 0  
\end{pmatrix}, 
\begin{pmatrix}
0 & r & \beta  \cr
0 & g & s  \cr
\gamma & 0 & 0  
\end{pmatrix}
\right\}\subset U(3),
\eea
and, \vspace{-.8cm}
\bea\label{diag}
V_{jk}=0 \ \mbox{for some } k\neq j  &\Leftrightarrow &  V_{jj}\ \mbox{unitary for all $j\in\{1, 2\}$} \\  \nonumber 
\Leftrightarrow\ \mbox{$V_{jj}\simeq S_j$, $S_j$ a shift on $\cH_j$
}& \Leftrightarrow &
\tilde{C}\in \left\{ 
\begin{pmatrix}
\alpha & r & 0  \cr
q & g & 0  \cr
0 & 0 & \delta  
\end{pmatrix}, 
\begin{pmatrix}
\alpha & 0 & 0  \cr
0 & g & s  \cr
0 & t & \delta  
\end{pmatrix}
\right\}\subset U(3).
\eea
\end{lem}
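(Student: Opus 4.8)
The plan is to reduce both equivalences to the explicit block coefficients of $V_{jk}=P_jVP_k$ recorded in Lemma \ref{qlem}, and then run a finite case analysis that exploits the orthonormality of the columns and rows of $\tilde C\in U(3)$ together with the nondegeneracy $|q|^2+|s|^2=|r|^2+|t|^2=1-g^2>0$ (valid since $g<1$). By relation (\ref{rel}) each $P_jV_\omega P_k$ differs from its deterministic counterpart $V_{jk}$ only by conjugation with diagonal unitaries, so the random phases affect neither the norms nor the unitarity/shift structure, and I may argue with $V_{jk}$ throughout. Writing $a=s\alpha-q\beta$ and $b=s\gamma-q\delta$, reading (\ref{coefvjk}) off the four formulas in Lemma \ref{qlem} and using (\ref{normvjk}) gives the criteria $\|V_{11}\|=0\Leftrightarrow qb=sa=0$, $\|V_{22}\|=0\Leftrightarrow st=qr=0$, $\|V_{21}\|=0\Leftrightarrow sb=qa=0$, $\|V_{12}\|=0\Leftrightarrow qt=sr=0$. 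A useful structural remark is that $(a,b)^{T}=C_0(s,-q)^{T}$, and since $(s,-q)^T/\sqrt{1-g^2}=v_1^{(k)}$ is a normalized eigenvector of $C_0^*C_0$ for the eigenvalue $1$ (see (\ref{evek}) and Proposition \ref{k2}), one has $\|(a,b)\|^2=|q|^2+|s|^2=1-g^2>0$, so $(a,b)\neq(0,0)$.

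For (\ref{offdiag}) I would first treat $\|V_{22}\|=0$, i.e. $st=qr=0$. The possibility $s=q=0$ is excluded by $|q|^2+|s|^2>0$ and $r=t=0$ by $|r|^2+|t|^2>0$; the two surviving branches are $\{s=0,r=0\}$ and $\{t=0,q=0\}$. In the first, $s=0$ forces $q\neq0$ and $r=0$ forces $t\neq0$; then $c_2\!\cdot\! c_3=\bar t\delta=0$ gives $\delta=0$, whence $|\beta|=1$ and $c_1\!\cdot\! c_3=\bar\alpha\beta=0$ gives $\alpha=0$, producing the first matrix of (\ref{offdiag}). The second branch is symmetric (using $r_2\!\cdot\! r_3$ and $r_1\!\cdot\! r_3$) and yields the second matrix. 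The same scheme applied to $\|V_{11}\|=0$, i.e. $qb=sa=0$, has four branches: $q=s=0$ and $a=b=0$ are excluded by the two nondegeneracy facts above, while $\{q=0,a=0\}$ forces $\alpha=0$ (as then $a=s\alpha$, $s\neq0$) and then $t=\delta=0$ from $c_1\!\cdot\! c_2=c_1\!\cdot\! c_3=0$, and $\{b=0,s=0\}$ forces $\delta=0$ (as then $b=-q\delta$, $q\neq0$) and then $\alpha=r=0$ from $c_3\!\cdot\! c_1=c_3\!\cdot\! c_2=0$; these are exactly the two matrices of (\ref{offdiag}). Conversely, a direct substitution shows each of the two matrices makes all of $qb,sa,st,qr$ vanish, so all three conditions in (\ref{offdiag}) are equivalent.

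For (\ref{diag}) the same case analysis applied to $\|V_{21}\|=0$ ($sb=qa=0$) and to $\|V_{12}\|=0$ ($qt=sr=0$) shows, after discarding the two degenerate branches, that each of $V_{21}=0$ and $V_{12}=0$ is equivalent to $\tilde C$ being one of the two matrices in (\ref{diag}) (which are exactly (\ref{speci})); in particular $V_{21}=0\Leftrightarrow V_{12}=0$, so the hypothesis that $V_{jk}=0$ for some $k\neq j$ already forces both off-diagonal blocks to vanish. When this holds $V=V_{11}\oplus V_{22}$ is block diagonal, and as $V$ is unitary each $V_{jj}$ is unitary; conversely, if $V_{11}$ is unitary it is in particular an isometry, so for $\psi\in\cH_1$ the identity $\|\psi\|^2=\|V_{11}\psi\|^2+\|V_{21}\psi\|^2$ forces $V_{21}=0$, closing the loop between these two conditions. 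Finally, unitarity of $V_{jj}$ is equivalent to $V_{jj}\simeq S_j$: from the tridiagonal form $V_{jj}v_j^{(k)}=w_+^{(jj)}v_j^{(k-1)}+w_-^{(jj)}v_j^{(k+1)}$, the vectors $V_{jj}v_j^{(k)}$ and $V_{jj}v_j^{(k+2)}$ have inner product $\overline{w_-^{(jj)}}\,w_+^{(jj)}$, so $V_{jj}$ is an isometry iff $w_+^{(jj)}\overline{w_-^{(jj)}}=0$ and $|w_+^{(jj)}|^2+|w_-^{(jj)}|^2=1$, i.e. one coefficient vanishes and the other is unimodular; then $V_{jj}$ is a unimodular-weighted bilateral shift, hence unitarily equivalent to $S_j$ (cf. Proposition \ref{spw}).

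The main obstacle is the first equivalence $\|V_{11}\|=0\Leftrightarrow\|V_{22}\|=0$ in (\ref{offdiag}): it cannot be obtained from the block decomposition of the unitary $V$ alone, since for a unitary $V$ the vanishing of one diagonal block does not in general force vanishing of the other, and it genuinely relies on the ambient constraint $\tilde C\in U(3)$. The care is therefore entirely in the bookkeeping of the $U(3)$ orthogonality relations and in using $1-g^2>0$ together with $(a,b)=C_0(s,-q)^T\neq(0,0)$ to eliminate precisely the degenerate branches; everything else is routine substitution.
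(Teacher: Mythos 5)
Your proof is correct and follows essentially the same route as the paper: both read the vanishing criteria for the blocks $V_{jk}$ off the explicit tridiagonal coefficients of Lemma \ref{qlem} (via (\ref{normvjk})) and then run a finite case analysis using the orthogonality relations of $\tilde C\in U(3)$ together with the nondegeneracy facts $|q|^2+|s|^2=1-g^2>0$ and $|r|^2+|t|^2=1-g^2>0$. The one point where you genuinely diverge is the exclusion of the degenerate branch $a=b=0$, i.e. $s\alpha-q\beta=s\gamma-q\delta=0$: the paper rules this out (in the case $qs\neq 0$) by expanding $\det\tilde C$ along the second column, obtaining $\det\tilde C=g(\alpha\delta-\gamma\beta)$ of modulus one and hence the contradiction $g=1$; you instead observe that $(a,b)^T=C_0(s,-q)^T$ with $(s,-q)^T$ proportional to the eigenvector of $C_0^*C_0$ for the eigenvalue $1$, so that $\|(a,b)\|^2=1-g^2>0$. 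Both devices are sound; yours is arguably more structural, since it recycles the spectral data of Proposition \ref{k2} that underlies the whole polar decomposition, while the paper's determinant argument is more self-contained. You also flesh out what the paper compresses into ``the same type of considerations and the fact that $V_\omega$ is unitary'': the explicit case analysis for $V_{21}$ and $V_{12}$, the isometry identity $\|\psi\|^2=\|V_{11}\psi\|^2+\|V_{21}\psi\|^2$ showing a unitary diagonal block forces the corresponding off-diagonal block to vanish, and the inner-product computation $\langle V_{jj}v_j^{(k)},V_{jj}v_j^{(k+2)}\rangle=\overline{w_-^{(jj)}}\,w_+^{(jj)}$ identifying unitary tridiagonal blocks with shifts; these details are exactly what the paper's terse closing sentence alludes to.
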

\begin{rems} i) In case $V_\omega$ is off-diagonal with respect to $\cH_0=\cH_1\bigoplus\cH_2$, so that (\ref{offdiag}) and Lemma \ref{sc} hold, we saw that for all $0\leq g<1$ and all $\omega$, 
$\sigma(T_\omega)\subset \{z\in \C\ | |z|=\sqrt{g}\}$. We recover this result  by noting that  $V_\omega$ off-diagonal implies for $z\neq 0$
\be
F(z)=-z\left(\un_1-P_1V_\omega^2P_1g/z^2\right),
\ee
where $P_1V_\omega^2 P_1|_{\cH_1}$ is unitary. Hence $F(z)$ is boundedly invertible iff $z^2\in \sigma(gP_1V_\omega^2 P_1|_{\cH_1})$. 
\\
ii) In case $V_\omega$ is diagonal with respect to $\cH_0=\cH_1\bigoplus\cH_2$, so that (\ref{diag}) and Lemma \ref{sc} hold, we saw that for all $0\leq g<1$ and all $\omega$,
$\sigma(T_\omega)=\Ss\cup g\Ss. $
\end{rems}
\begin{proof}
The tridiagonal matrix representation of $V_{jj}$ stems from (\ref{vjk}), 
which yields the first statement.
The last statements are obtained by discussing the conditions $w_-^{(jj)}=w_+^{(jj)}=0$ 
depending on the fact that $q,s$ are zero or not. We first note that the condition $g<1$ forbids $q=s=0$ or $r=t=0$.
For $\|V_{11}\|=0$, the case $qs\neq 0$, is impossible: the expansion of $\det(\tilde C)$ with respect to the second column and $w_-=w_+=0$ imply $\det(\tilde C)=g(\alpha \delta-\gamma \beta)$, which is of modulus 1. This implies $g=|(\alpha \delta-\gamma \beta)|=1$ and $q=s=0$, a contradiction.  If $qs=0$, one gets that $\alpha$ or $\delta$ equals 1, which with condition
(\ref{lcm}) yield the result. Similarly, $\|V_{22}\|=0$ imply $q=t=0$ or $s=r=0$ and condition
(\ref{lcm}) again yields the result. The assertions regarding the off diagonal parts of $V_\omega$ are readily obtained by the same type of considerations and the fact that $V_\omega$ is unitary. \ep
\end{proof}
\subsection{Ergodicity}

We briefly recall here a spectral consequences of our hypothesis on the way the randomness enters the operator $T_\omega$. Ergodicity provides a tool to estimate from below the spectrum of $T_\omega$, almost surely. Our setup actually enters the more general theory of pseudo-ergodic operators, as developed in \cite{D1, D2}, of which ergodic operators are special cases.

The definition (\ref{deft}) of $\D^0_\omega$ makes the operator ergodic under 2-shifts with respect to the matrix representation (\ref{matrixt}). If $\Sigma$ denotes both the map from  $\Omega \ra \Omega$ such that $(\Sigma \omega)_j=\omega_{j+2}$, and the operator defined  on $\cH_0$ by $\Sigma e_j =e_{j+2},\  \forall j\in \Z, $ we have
\be
T_{\Sigma ^k\omega}=\Sigma^{-k}T_\omega \Sigma^k, \ \forall k\in \Z.
\ee
Following \cite{D1, D2} in making use of independence of the random phases and Borel-Cantelli Lemma, we get
\begin{prop}\label{xxx} Let $l\in 2\N$ and $\theta^{(l)}=(\theta_1, \theta_2, \cdots,\theta_l)\in (\mbox{supp }d\nu)^l\subset\T^l$. Set
$
T_{\theta^{(l)}}:=\D^0_\omega T, \ \mbox{where} \ \omega= (\dots, \theta^{(l)}, \theta^{(l)}, \dots)\in \Omega.
$
Then, 
\be
\cup_{l\in 2\N}\cup_{\theta^{(l)}\in\T^l}\sigma(T_{\theta^{(l)}})\subset \sigma (T_\omega), \ \mbox{almost surely.}
\ee
\end{prop}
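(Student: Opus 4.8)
The plan is to reduce the uncountable inclusion to a statement about a single fixed $\lambda$, using that the spectrum of $T_\omega$ is almost surely a deterministic set. Indeed, since the $2$-shift $\Sigma$ acts ergodically on $\Omega$ (the phases being i.i.d.) and $T_{\Sigma^k\omega}=\Sigma^{-k}T_\omega\Sigma^k$ is unitarily equivalent to $T_\omega$, the event $\{\lambda\in\sigma(T_\omega)\}$ is $\Sigma$-invariant for every fixed $\lambda\in\C$, hence has probability $0$ or $1$; consequently there is a deterministic closed set $\Sigma_{\mathrm{det}}\subset\C$ with $\sigma(T_\omega)=\Sigma_{\mathrm{det}}$ almost surely. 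Because the desired inclusion $\bigcup_{l,\theta^{(l)}}\sigma(T_{\theta^{(l)}})\subset\Sigma_{\mathrm{det}}$ is now deterministic, it suffices to prove, for each fixed $l\in 2\N$, each $\theta^{(l)}\in(\mbox{supp }d\nu)^l$ and each $\lambda\in\sigma(T_{\theta^{(l)}})$, that $\P(\lambda\in\sigma(T_\omega))=1$; this forces $\lambda\in\Sigma_{\mathrm{det}}$, and the uncountable union is thereby absorbed into the single deterministic set $\Sigma_{\mathrm{det}}$.

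For such a fixed $\lambda$ I would first produce finitely supported approximate eigenvectors of the periodic operator. As $l$ is even, $T_{\theta^{(l)}}$ is periodic of period $l$ (compatibly with the period-two skeleton of $T$), so Floquet--Bloch theory represents it as a direct integral of $l\times l$ matrices and gives $\sigma(T_{\theta^{(l)}})=\sigma_{app}(T_{\theta^{(l)}})$. Truncating a Bloch wave, for every $\epsilon>0$ there is a normalized $\psi$ supported in a window $[-M,M]$ with $M\in l\N$ and $\|(T_{\theta^{(l)}}-\lambda)\psi\|<\epsilon$, the truncation error being small because the band structure localizes the action of $T_{\theta^{(l)}}-\lambda$ to the boundary of $\mathrm{supp}\,\psi$.

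The core is then a transplantation argument via Borel--Cantelli. Since the nonzero entries of $T_\omega$ are the fixed coefficients $\alpha,\beta,\gamma,\delta$ multiplied by the phases $e^{i\omega_j}$, they depend Lipschitz-continuously on the $\omega_j$. Fix $\epsilon>0$ and choose $\delta>0$ so small that matching every phase in a window to the periodic pattern within $\delta$ perturbs the restricted operator by less than $\epsilon$ in norm. The event that, in the window translated by $kl$, all relevant phases $\omega_j$ lie within $\delta$ of the corresponding $\theta_i$ has a probability $p_\delta>0$ independent of $k$: each factor is positive because $\theta_i\in\mbox{supp }d\nu$, and the product is positive by independence. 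Choosing the translates so that their windows are pairwise disjoint makes these events independent, so by the second Borel--Cantelli lemma almost surely infinitely many of them occur. On any such matching window, transplanting $\psi$ by the corresponding shift $\Sigma^{k}$ (a multiple of $l$, hence compatible with $T_{\Sigma^{k}\omega}=\Sigma^{-k}T_\omega\Sigma^{k}$) yields a normalized vector $\psi_k$ with $\|(T_\omega-\lambda)\psi_k\|<2\epsilon$. Applying this with $\epsilon=1/m$ and intersecting the resulting full-measure events over $m\in\N$ produces, almost surely, a Weyl sequence for $T_\omega$ at $\lambda$, whence $\lambda\in\sigma_{app}(T_\omega)\subset\sigma(T_\omega)$ almost surely, as required.

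The main obstacle I anticipate is conceptual rather than computational, and stems from the non-normality of $T_\omega$: an approximate null vector of error $\epsilon$ does \emph{not} place $\lambda$ within distance $\epsilon$ of $\sigma(T_\omega)$, since for non-normal operators the resolvent norm may blow up far from the spectrum (the pseudospectrum phenomenon). This is why the argument must drive the error genuinely to zero to build a true Weyl sequence, and must be phrased through $\sigma_{app}$ rather than through resolvent-norm or distance-to-spectrum estimates. The remaining care points are bookkeeping: ensuring the transplantation shift is a multiple of the even period $l$ so that the periodic pattern and the even/odd lattice structure realign, and absorbing the uncountable union of periodic spectra into the single deterministic set $\Sigma_{\mathrm{det}}$ via the ergodicity reduction of the first step.
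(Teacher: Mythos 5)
Your proof is correct, and its engine --- independence of the phases, the second Borel--Cantelli lemma, and transplantation of finitely supported truncated Bloch waves of the periodic operator into almost-surely-occurring matching windows of $T_\omega$, with the conclusion routed through $\sigma_{app}(T_\omega)\subset\sigma(T_\omega)$ precisely because $T_\omega$ is not normal --- is exactly the mechanism the paper invokes. Note that the paper never writes the argument out: the Appendix announces a proof of Proposition \ref{xxx} but contains only the proofs of Lemmas \ref{form} and \ref{form2}, and the text simply appeals to the pseudo-ergodic theory of \cite{D1, D2}. Where you genuinely diverge is in how the uncountable union over $l$ and $\theta^{(l)}$ is absorbed. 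The route of \cite{D1, D2} is to show that almost surely $T_\omega$ is pseudo-ergodic --- a \emph{single} almost-sure event, obtained by intersecting countably many Borel--Cantelli events indexed by window lengths, rational precisions, and a countable dense set of patterns --- after which the inclusion $\sigma(T_{\theta^{(l)}})\subset\sigma(T_\omega)$ becomes a deterministic statement holding simultaneously for all patterns, with no ergodic theorem and no $0$--$1$ law needed. You instead prove the inclusion one $\lambda$ at a time and then absorb the uncountable union into a deterministic set via ergodicity of the $2$-shift. Both organizations work; Davies' buys uniformity over all admissible (not only periodic) comparison operators for free, while yours keeps the probabilistic input localized to a single spectral point at a time.

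One step deserves repair, though it is fixable within your own scheme. The implication ``the event $\{\lambda\in\sigma(T_\omega)\}$ has probability $0$ or $1$ for each fixed $\lambda$, \emph{consequently} there is a deterministic closed set $\Sigma_{\mathrm{det}}$ with $\sigma(T_\omega)=\Sigma_{\mathrm{det}}$ a.s.'' does not follow from the fixed-$\lambda$ $0$--$1$ law alone: the standard proof of the almost-sure deterministic spectrum runs the $0$--$1$ law over a countable basis of discs and requires measurability of the events $\{\sigma(T_\omega)\cap D\neq\emptyset\}$. You can sidestep this (and drop the ergodic theorem entirely) as follows: set $\Sigma_{\mathrm{det}}:=\{\lambda\in\C \ : \ \P(\lambda\in\sigma(T_\omega))=1\}$. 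Your Borel--Cantelli argument shows the union in Proposition \ref{xxx} is contained in $\Sigma_{\mathrm{det}}$; the set $\Sigma_{\mathrm{det}}$ is closed (if $\lambda_n\to\lambda$ with each $\lambda_n\in\Sigma_{\mathrm{det}}$, then almost surely all $\lambda_n$ lie in the closed set $\sigma(T_\omega)$, hence so does $\lambda$); and taking a countable dense subset $D\subset\Sigma_{\mathrm{det}}$, almost surely $D\subset\sigma(T_\omega)$, whence $\Sigma_{\mathrm{det}}=\overline{D}\subset\sigma(T_\omega)$ almost surely. With this substitution your proof is complete; the remaining ingredients (Floquet representation and $\sigma=\sigma_{app}$ for the period-$l$ band operator, the Lipschitz dependence of the entries on the phases, the alignment of the transplantation shift with the even period, and the insistence on genuine Weyl sequences rather than resolvent-distance estimates) are all sound.
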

\begin{rem}\label{remopt} In particular, if $d\nu(\theta)=d\theta/(2\pi)$, $\cup_{\theta\in [0,2\pi]}e^{i\theta}(\mbox{\em Ran }\lambda_+\cup \mbox{\em Ran }\lambda_-)\subset \sigma(T_\omega)$, where $\lambda_\pm$ are defined in (\ref{evti}). This shows that statements (\ref{87})  and Theorem (\ref{suff}) on the location of $\sigma(T_\omega)$ are optimal, as we argue below.
\end{rem}
Considering Example \ref{ex1}, one checks that when $\xi\ra 0$, condition (\ref{crxi}) holds,  $\lambda_+(0)=\frac{1}{2}\left(\cos(\eta)(1+\sin(\xi))+\sqrt{\cos^2(\eta)(1+\sin(\xi))^2-4\sin(\xi)}\right)>0$ and  the value $r(V)$ given in (\ref{rxi}) becomes arbitrarily close to $\lambda_+(0)$. Also, when $\cos^2(\eta)<4\sin(\xi)/(1+\sin(\xi))^2$ we have $|\lambda_+(0)|=g=\sin (\xi)$. Since $|\lambda_+(0)|\in \sigma(T_\omega)$ almost surely, Proposition \ref{xxx} shows that statement (\ref{87})  and Theorem (\ref{suff}) on the location of $\sigma(T_\omega)$ are optimal.

\section{Special Case $g=0$}
This section is devoted to a more thorough analysis of the case $g=0$ 
\be\label{kg0}
T_\omega=V_\omega P_1 \ \mbox{corresponding to }\ 
\tilde C=
\begin{pmatrix}\alpha & r & \beta \cr
q & 0 & s \cr
\gamma & t & \delta \end{pmatrix}\in U(3). 
\ee

According to Lemmas \ref{symlem} and \ref{easyspect}, $T_\omega=V_\omega P_1$ is far from being unitary, $\ker \ T_\omega=\cH_2$, for all $\omega\in \Omega$, and $\sigma(T_\omega)=\sigma(P_1V_\omega P_1)\cup\{0\}$. More precisely:
\begin{prop}\label{spwp} If $g=0$, we have for all $\omega\in\Omega$
\be
\sigma(T_\omega )\setminus\{0\}\subset \big\{\big| |\alpha|-|\delta| \big|\leq |z|\leq |\alpha|+|\delta|\big\}.
\ee
If $\alpha=0$, resp. $\delta=0$, then $P_1V_\omega P_1|_{\cH_1}$ is unitarily equivalent to $|\delta| S^+$, resp. $|\alpha| S^-$, and 
\bea
&&\sigma(T_\omega)=\max (|\alpha|, |\delta|)\Ss \cup\{0\}\ \mbox{ and }\  \sigma_{p}(T_\omega)=\sigma_{p}(T_\omega^*)=\{0\}. 
\eea
Moreover, \vspace{-.8cm}
\bea
\gamma\neq qt \Leftrightarrow \beta\neq sr \ &\Rightarrow& \ V_\omega \ \mbox{is pure point a.s.} \\
\gamma= qt \Leftrightarrow \beta= sr \ &\Rightarrow& \ V_\omega \ \mbox{is purely ac, } \ \forall \omega\in\Omega. 
\eea
\end{prop}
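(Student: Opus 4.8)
The plan is to reduce every claim to the single tridiagonal operator $V_{11}:=P_1V_\omega P_1|_{\cH_1}$, since (\ref{88}) in Lemma \ref{easyspect} gives $\sigma(T_\omega)=\sigma(V_{11})\cup\{0\}$ when $g=0$. The linchpin is an explicit computation of the two band weights of $V_{11}$. Setting $g=0$ in (\ref{vjk}) and using the notation (\ref{coefvjk}) yields $w_+^{(11)}=-\bar q(s\gamma-q\delta)$ and $w_-^{(11)}=\bar s(s\alpha-q\beta)$. By Lemma \ref{gmg} at $g=0$ the vector $(\bar q,\bar s)^{T}$ spans $\ker C_0$, i.e. $\alpha\bar q+\beta\bar s=0$ and $\gamma\bar q+\delta\bar s=0$, while $|q|^2+|s|^2=1$; substituting these collapses the two expressions to the clean identities $w_+^{(11)}=\delta$ and $w_-^{(11)}=\alpha$ (equivalently, this is Remark \ref{nicexp} at $g=0$).

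With the weights in hand the annulus is immediate and deterministic. The outer bound is $\mbox{spr}(V_{11})\le\|V_{11}\|=|w_+^{(11)}|+|w_-^{(11)}|=|\alpha|+|\delta|$ by (\ref{normvjk}). For the inner bound I would feed $V_{11}$ into Proposition \ref{spw}: its nonzero band moduli are the constants $W_+=\max(|\alpha|,|\delta|)\le1$ and $W_-=\min(|\alpha|,|\delta|)$, so the first case of that proposition gives $\{\,|z|<W_+-W_-=\big|\,|\alpha|-|\delta|\,\big|\,\}\subset\rho(V_{11})$. Since $\sigma(T_\omega)\setminus\{0\}\subset\sigma(V_{11})$ avoids this disc, the stated ring follows for every $\omega$.

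For the degenerate cases, if $\alpha=0$ (resp. $\delta=0$) one weight vanishes, so $V_{11}$ is a weighted shift whose weights all have modulus $|\delta|$ (resp. $|\alpha|$); absorbing the phases into a diagonal unitary—exactly the gauge identity (\ref{rel})—shows $V_{11}\simeq|\delta|S^+$ (resp. $|\alpha|S^-$). The bilateral shift has spectrum $\Ss$ and no eigenvalues, so (\ref{88}) gives $\sigma(T_\omega)=\max(|\alpha|,|\delta|)\,\Ss\cup\{0\}$. For the point spectra, $\ker T_\omega=\cH_2$ and $\ker T_\omega^*=V_\omega\cH_2$ are nonzero, so $0\in\sigma_p(T_\omega)\cap\sigma_p(T_\omega^*)$; and if $T_\omega\phi=\lambda\phi$ with $\lambda\neq0$, applying $P_1$ gives $V_{11}(P_1\phi)=\lambda P_1\phi$, hence $P_1\phi=0$ as the shift has no eigenvalue, whence $\phi=P_2\phi\in\ker T_\omega$ and $\phi=0$; the transposed argument with $V_{11}^{*}\simeq|\delta|(S^+)^*$ gives $\sigma_p(T_\omega)=\sigma_p(T_\omega^*)=\{0\}$.

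Finally the dichotomy: by Remark \ref{detg} ii) the unitary $V_\omega$ is the one-dimensional random walk with coin (\ref{simv}), which at $g=0$ reads $\left(\begin{smallmatrix}\alpha-qr & \beta-sr\\ \gamma-qt & \delta-st\end{smallmatrix}\right)\in U(2)$. The decisive observation is that a triangular $2\times2$ unitary is automatically diagonal, so the vanishing of one off-diagonal entry forces the vanishing of the other; this is precisely the equivalence $\beta=sr\Leftrightarrow\gamma=qt$. When it holds the coin is diagonal and \cite{JM} yields purely absolutely continuous spectrum for all $\omega$; when it fails the coin is genuinely off-diagonal and \cite{JM} (cf. Remark \ref{detg} iii)) yields dynamical localization, i.e. pure point spectrum almost surely. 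The only genuinely new step is the weight identification $w_+^{(11)}=\delta$, $w_-^{(11)}=\alpha$, which is what converts every subsequent norm and resolvent estimate into the stated functions of $|\alpha|$ and $|\delta|$; the rest is assembling the lemmas already in place.
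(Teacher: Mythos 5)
Your proposal is correct and takes essentially the same route as the paper's own proof: the paper likewise reduces to the tridiagonal block $P_1V_\omega P_1$ via (\ref{88}), identifies the band moduli $|\alpha|,|\delta|$ (Remark \ref{nicexp} at $g=0$), applies Proposition \ref{spw} for the annulus, uses the shift equivalence to rule out nonzero eigenvalues of $T_\omega$ and $T_\omega^*$, and checks $\gamma=qt\Leftrightarrow\beta=sr$ on the coin matrix of $V$ before invoking \cite{JM} for the pure point/absolutely continuous dichotomy. Your explicit recomputation of $w_\pm^{(11)}=\delta,\alpha$ from (\ref{vjk}) together with the kernel vector $(\bar q,\bar s)^T$ of Lemma \ref{gmg} is simply an unwinding of Remark \ref{nicexp}, and your filling in of the small step $P_1\ffi\neq 0$ in the eigenvalue exclusion is a harmless elaboration, not a different method.
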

\begin{exple}{\em
Let us consider an explicit parametrization of a $\tilde C\in O(3)$ of the kind (\ref{kg0})
\be\label{ctino3}
\tilde C(\xi,\eta)=
\begin{pmatrix}\cos(\xi)\sin(\eta) & \cos(\eta) & -\sin(\xi)\sin(\eta) \cr
\sin(\xi) & 0 & \cos(\xi)\cr
-\cos(\xi)\cos(\eta) & \sin(\eta) &  \sin(\xi)\cos(\eta) \end{pmatrix}\in O(3), \ \xi, \eta\in [0,\pi/2],
\ee
where $(\xi,\eta)$ is restricted to $[0,\pi/2]^2$ for simplicity. Then, $|\alpha|+|\delta|<1$ is equivalent to $\sin(\xi+\eta)\neq 1$, i.e. $\xi+\eta\neq \pi/2$, and $\gamma= at$ is equivalent to $\cos(\xi-\eta)=0$, i.e. $(\xi, \eta)=(\pi/2, 0)$, or $(\xi,\eta)=(0,\pi/2)$.
}
\end{exple}

\begin{proof}
Remark \ref{nicexp} implies for $g=0$ that the modulus of the coefficients of the tridiagonal operator $P_1V_{\omega}P_1$ are $|\alpha|$ and $|\delta|$, so Proposition \ref{spw} yields the first statement. We know that $0\in \sigma_p(T_\omega)$. Further assuming that $\alpha\delta=0$, the same remark yields that $P_1V_\omega P_1$ is unitarily  equivalent to a shift and consequently, Lemma \ref{easyspect} yields the spectrum of $T_\omega$. Finally, the eigenvalue equation $T_\omega\ffi=\lambda \ffi$, $\lambda\neq 0$, implies that $\ffi_1=P_1\ffi$ satisfies $P_1V_{\omega}P_1\ffi_1=\lambda \ffi_1$, which cannot hold for a shift. The same argument applies to $T_\omega^*$. Then one checks on the unitary operator (\ref{matv}) that $\gamma = qt $ is equivalent to $\beta=sr $. In turn, this implies that $V_\omega$ is unitarily equivalent to a direct sum of two shifts. In all other cases, $V_\omega$ is pure point almost surely as shown in \cite{JM}. \ep
\end{proof}

From the foregoing we know that  when $g=0$, 
$P_1V_\omega P_1=\D_\eta^{(1)}V_{11}\D_\xi^{(1)}$, where 
\bea
&&V_{11}=\begin{pmatrix}
\ddots & \delta & &  \cr
\alpha & 0 & \delta   &  \cr
 & \alpha & 0 & \delta   \cr
 & &\alpha &  \ddots
\end{pmatrix} \ \simeq \ e^{i(\arg \alpha - \arg \beta)/2}(e^{iy}|\alpha|+e^{-iy}|\delta|), \ \mbox{on }\ L^2(\T), 
\eea
and 
\be
e^{-i(\arg \alpha - \arg \delta)/2}\sigma(V_{11})= E(|\alpha|,|\delta|),
\ee
where $E(|\alpha|,|\delta|)$ denotes the ellipse centered at the origin, with horizontal major axis of length $|\alpha|+|\delta|$ and vertical
minor axis of length $||\alpha|-|\delta||$. When the random phases are iid and uniform, we have a complete description of the spectral properties of $T_\omega$ when $g=0$.
\begin{prop}\label{g=0}
Assume $g=0$ and  $d\nu(\theta)=d\theta/2\pi$. Then, $T_\omega=V_\omega P_1$ satisfies
\be
\sigma(T_\omega)=\{0\}\cup\big\{\big| |\alpha|-|\delta| \big|\leq |z|\leq |\alpha|+|\delta|\big\}, \ \mbox{a.s.}
\ee
When $|\alpha|+|\delta|=1$, the peripheral spectra of the relevant operators coincide with $\Ss$,
\be
\sigma(T_\omega)\cap\Ss=\sigma(P_1V_\omega P_1|_{\cH_1})\cap\Ss=\sigma(V_\omega)=\Ss, \ \mbox{a.s.}
\ee
However, the nature of the peripheral spectra of $T_\omega$ and $V_\omega$ differs   for $\gamma\neq qt$,
\be
\sigma_p(T_\omega)\cap\Ss=\sigma_p(T_\omega^*)\cap\Ss=\emptyset, \ \mbox{whereas} \ \sigma_c(V_\omega)=\emptyset \mbox{ a.s.}
\ee
\end{prop}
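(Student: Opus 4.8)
The plan is to prove the three displayed assertions in turn, the first being the substantive one. For the inclusion $\sigma(T_\omega)\subseteq\{0\}\cup\{\,||\alpha|-|\delta||\le|z|\le|\alpha|+|\delta|\,\}$ I would simply invoke Proposition \ref{spwp} together with the fact, recalled after (\ref{kg0}), that $0\in\sigma_p(T_\omega)$ for every $\omega$ because $\ker T_\omega=\cH_2$. The work lies entirely in the reverse inclusion, which need only hold almost surely and which uses the uniformity of $d\nu$ in an essential way. I would first record that, since the family $T_\omega=\D^0_\omega T$ is ergodic under the $2$-shift (as noted before Proposition \ref{xxx}), $\sigma(T_\omega)$ is almost surely a deterministic closed set $\Sigma$: for each fixed $z$ the event $\{z\in\rho(T_\omega)\}$ is shift invariant, hence of probability $0$ or $1$, and this determines $\Sigma$ in the usual way, this being the non-normal ergodic setting of \cite{D1,D2}.

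The key geometric input is a rotation symmetry. Adding a common constant $c$ to all phases, $\omega_j\mapsto\omega_j+c$, replaces $\D^0_\omega$ by $e^{ic}\D^0_\omega$ and hence $T_\omega$ by $e^{ic}T_\omega$, so that $\sigma(T_{\omega+c\mathbf 1})=e^{ic}\sigma(T_\omega)$; since $d\nu=d\theta/2\pi$ is invariant under $\theta\mapsto\theta+c$ this transformation preserves $\P$, and therefore $\Sigma=e^{ic}\Sigma$ for every $c$. Thus $\Sigma$ is a union of origin-centered circles, together with $0$. It then suffices to exhibit, for every $r\in[\,||\alpha|-|\delta||,\,|\alpha|+|\delta|\,]$, one point of modulus $r$ in $\Sigma$. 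For this I would apply Proposition \ref{xxx} to the trivial periodic configuration $\omega=(\dots,0,0,\dots)$, which lies in $(\mbox{supp }d\nu)^l$ because $\mbox{supp }d\nu=\T$; it contributes $\sigma(T)\subseteq\Sigma$, and by (\ref{88}) together with the Fourier diagonalisation of $V_{11}$ recorded just before the proposition, $\sigma(T)=\{0\}\cup\sigma(V_{11})$ with $\sigma(V_{11})$ the ellipse $E(|\alpha|,|\delta|)$ up to a fixed rotation. Since the moduli of the points of this ellipse cover $[\,||\alpha|-|\delta||,\,|\alpha|+|\delta|\,]$ continuously, it meets every circle of radius $r$ in that range; combined with the rotation invariance and $0\in\Sigma$ this yields the full set, proving the first assertion.

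For the second assertion I would specialize the first to $|\alpha|+|\delta|=1$, giving $\sigma(T_\omega)\cap\Ss=\Ss$ almost surely, and then read off $\sigma(P_1V_\omega P_1|_{\cH_1})\cap\Ss=\Ss$ from (\ref{88}), since $\Ss$ avoids $0$. The identity $\sigma(V_\omega)=\Ss$ follows from the same symmetry argument applied to the \emph{unitary} ergodic family $V_\omega=\D^0_\omega V$: its almost sure spectrum is a deterministic, rotation-invariant, nonempty closed subset of $\Ss$, hence all of $\Ss$.

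Finally, for the third assertion the two halves go separately. That $\sigma_c(V_\omega)=\emptyset$ almost surely is exactly the pure point statement of Proposition \ref{spwp}, i.e.\ the dynamical localization for one-dimensional random quantum walks of \cite{JM}, valid when $\gamma\neq qt$. For the absence of peripheral eigenvalues of $T_\omega$ I would argue directly, since Lemma \ref{easyspect} records the relevant implication only for $0<g<1$: if $T_\omega\ffi=\lambda\ffi$ with $|\lambda|=1$, then $\|\ffi\|=\|V_\omega P_1\ffi\|=\|P_1\ffi\|$ forces $\ffi=P_1\ffi\in\cH_1$, whence $V_\omega\ffi=\lambda\ffi\in\cH_1$ gives $P_2V_\omega P_1\ffi=0$; the same computation applied to $T_\omega^*=P_1V_\omega^*$ places any peripheral eigenvector of $T_\omega^*$ in $\ker P_2V_\omega P_1$ as well. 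Now $\gamma\neq qt$ means precisely that $V$ does not decouple, i.e.\ $V_{21}=P_2VP_1\neq 0$ by Lemma \ref{loffdiag} (cf.\ (\ref{diag})), so $P_2V_\omega P_1\neq 0$ and hence $\ker P_2V_\omega P_1=\{0\}$ by (\ref{eqkernul}) of Lemma \ref{qlem}; both peripheral point spectra are therefore empty. The main obstacle throughout is the reverse inclusion of the first assertion: without the rotation symmetry afforded by the uniform law one would be forced to compute the spectra of all periodic bidiagonal approximants and argue that their union already fills the annulus, which is considerably more laborious.
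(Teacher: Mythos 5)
Your proof is correct, but the key step (the lower bound in the first assertion) follows a genuinely different route from the paper's. The paper never needs almost-sure constancy of $\sigma(T_\omega)$: it first rewrites $P_1V_\omega P_1=\D_\omega^{(1)}V_{11}$ where the phases of $\D_\omega^{(1)}$ are again iid uniform (Lemma 4.1 of \cite{ABJ}), then applies Proposition \ref{xxx} to this tridiagonal operator with \emph{all} constant-phase periodic configurations $\theta^{(l)}=(\theta,\dots,\theta)$, which places every rotated ellipse $e^{i\theta}E(|\alpha|,|\delta|)$, hence the whole annulus, inside $\sigma(P_1V_\omega P_1|_{\cH_1})$ almost surely; Proposition \ref{spwp} and (\ref{88}) then close the argument. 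You instead use the single zero configuration and transfer the rotations onto the almost-sure spectrum $\Sigma$ via the measure-preserving map $\omega\mapsto\omega+c\mathbf 1$. Your route avoids the \cite{ABJ} step entirely, which is a genuine economy (and your ergodicity-plus-rotation argument for $\sigma(V_\omega)=\Ss$ is likewise cleaner than the paper's terse citation), but it leans on the claim that $\Sigma$ is deterministic. Be careful there: your one-line justification (each event $\{z\in\rho(T_\omega)\}$ is shift-invariant, hence of probability $0$ or $1$, ``and this determines $\Sigma$ in the usual way'') is the self-adjoint argument, and it does not transfer verbatim to non-normal operators, because $\|(T_\omega-z)^{-1}\|$ is not controlled by $\mbox{dist}(z,\sigma(T_\omega))$: knowing that a countable dense set of points lies a.s. in $\rho(T_\omega)$ does not by itself exclude random spectrum elsewhere. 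Determinism does hold here, via the pseudo-ergodic theory of \cite{D1,D2} which you cite; alternatively you can bypass it altogether by applying Proposition \ref{xxx} together with the covariance $\sigma(T_{\omega+c\mathbf 1})=e^{ic}\sigma(T_\omega)$ for a countable dense set of angles $c$, intersecting the almost-sure events, and using closedness of the spectrum.

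One further, harmless, inaccuracy: in the third assertion you assert that $\gamma\neq qt$ ``means precisely'' that $V_{21}=P_2VP_1\neq0$. Lemma \ref{loffdiag} only yields one direction: the special forms in (\ref{diag}) all have $\gamma=0$ and $qt=0$, so $\gamma\neq qt$ implies $V_{21}\neq0$. The converse fails: one can construct $\tilde C\in U(3)$ with $g=0$, $\gamma=qt$ and $\alpha\delta\neq0$ (e.g. a real orthogonal matrix with $\alpha=-\cos^2\xi$, $q=\sin\xi$, $t=\cos\xi$, $\gamma=\sin\xi\cos\xi$, $\delta=-\sin^2\xi$), for which $s(s\gamma-q\delta)=\sin\xi\cos\xi\neq0$, hence $V_{21}\neq0$ by (\ref{vjk}); in that case $V_\omega$ decouples with respect to the even/odd decomposition $\cH_+\oplus\cH_-$, which is a different decomposition from $\cH_1\oplus\cH_2$. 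Since you only use the implication $\gamma\neq qt\Rightarrow V_{21}\neq0\Rightarrow\ker P_2V_\omega P_1=\{0\}$ via (\ref{eqkernul}), your direct argument for the absence of peripheral eigenvalues of $T_\omega$ and $T_\omega^*$ — which in effect reproves the relevant parts of Lemmas \ref{kerzero} and \ref{lcnu} that the paper simply cites — goes through as written.
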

\begin{rem} This result shows in a sense that the spectral localization of $V_\omega$ does not carry over to the boundary of the spectrum of $T_\omega=V_\omega P_1$. Note that the original operator $U_\omega(C)$ is purely ac when $g<1$, for all $\omega\in\Omega$.
\end{rem}
\begin{proof}
The first consequence of our assumption on the distribution of the random phases is that  
$
P_1V_\omega P_1=\D_\omega^{(1)}V_{11},
$
where the random phases of the diagonal operator 
$\D_\omega^{(1)}$ are independent and uniformly distributed, see e.g. Lemma 4.1 in \cite{ABJ}. Hence proposition \ref{xxx} with $\mbox{supp } d\nu(\cdot)=2\pi$, together with Proposition \ref{spwp} show that
\be
\big\{\big||\alpha|-|\delta|\big|\leq |z|\leq |\alpha|+|\delta|\big\}=\bigcup_{\theta\in [0,2\pi[} e^{i\theta}E(|\alpha|,|\delta|) = \sigma(\D_\omega^{(1)} V_{11}), \ \mbox{almost surely.}
\ee
When $|\alpha|+|\delta|=1$, the peripheral spectra equals $\Ss$ almost surely  by Lemma \ref{easyspect}.
Finally, the nature of the peripheral spectra stems from Lemmas \ref{kerzero} and \ref{lcnu}. \ep
\end{proof}
\begin{rem} In case $|\alpha|=|\delta|=1/2$, $V_{11}=\Delta_1$, the discrete Laplacian on $\cH_1$. With  $d\nu(\theta)=d\theta/2\pi$,
\be
\sigma(\D_\omega^{(1)} \Delta_1)=\sigma(T_\omega)=\{|z|\leq 1\}, \ \mbox{almost surely},
\ee
where  $\D_\omega^{(1)} \Delta_1$ is a version of the random hopping model of Feinberg and Zee \cite{FZ}.
\end{rem}

\appendix
\section{Proof of Lemmas \ref{form}, \ref{form2}, 
and Proposition \ref{xxx}. }\label{66}

\begin{proof}[of Lemma \ref{form}] The determination of $\partial D(\theta)$ follows from the elimination of the parameter $\tau$ according to (\ref{xtau}) by an explicit computation. 

The relation $D(\theta)\cup B_0(g)\subset \bigcup_{\tau\in \R_+ }B_{\tau}(d_{\tau})\cap B_{g\tau}(gd_{\tau} )$ holds by construction. Let us check that $R_1(\theta)$ belongs to (\ref{symset})
as well.  Let $(x_\tau, y_\tau)=C_\tau(d_\tau)\cap C_{g\tau}(gd_\tau)$.  In order to assess the property  $(x_\tau, y)\in \bigcup_{\tau'\in \R_+ }B_{\tau'}(d_{\tau'})\cap B_{g\tau'}(gd_{\tau'} )$, for some $y\in \R$, we compute for any $\tau'\in \R$,
\bea\label{idset}
(x_\tau-\tau')^2+y^2&=&d_{\tau'}^2+ (y^2-y_\tau^2)+2(\tau'-\tau)(\cos(\theta)-x_\tau)\nonumber\\
(x_\tau-g\tau')^2+y^2&=&gd_{\tau'}^2+(y^2-y_\tau^2)+2g(\tau'-\tau)(g\cos(\theta)-x_\tau).
\eea
Thus, for any $\tau\geq (1+g)/(2g\cos(\theta))$ so that $x_\tau>\cos(\theta)$, and any $y^2\geq y_\tau^2$, we can take $\tau'$ large enough so that $(x_\tau,y)\in B_\tau'(d_{\tau'})\cap B_{g\tau'}(gd_{\tau'})$.

Consider now the reverse inclusion $\bigcup_{\tau\in \R_+ }B_{\tau}(d_{\tau})\cap B_{g\tau}(gd_{\tau} )\subset D(\theta)\cup B_0(g)\cup R_1(\theta)$. By symmetry it is enough to focus on $y  \geq 0$ and $x\leq \cos(\theta)$. 
Using (\ref{idset}) again, we first see that points $(x_\tau, y)\not \in D(\theta)\cup B_0(g)\cup R_1(\theta)$ such that $g\cos(\theta)\leq x_\tau \leq \cos(\theta)$ and $y\geq y_\tau$ cannot belong to $B_{\tau'}(d_{\tau'})\cap B_{g\tau'}(gd_{\tau'} )$, for any $\tau'$. Assume now $(x, y)\not \in D(\theta)\cup B_0(g)\cup R_1(\theta)$ is such that $x\leq g\cos(\theta)$ and $y^2\geq g^2-x^2$. For any $\tau'>0$, the relation
\be
(x-\tau')^2+y^2=gd_{\tau'}^2+(y^2-(g^2-x^2)) +2\tau'(g\cos(\theta)-x)\geq gd_{\tau'}^2
\ee
shows that $(x,y)\not \in \bigcup_{\tau'\in \R_+ }B_{\tau'}(d_{\tau'})\cap B_{g\tau'}(gd_{\tau'} )$, which ends the proof for $\theta<\pi/2$. 

When $\pi/2\leq \theta<\pi$, one first notes that $B_{\tau}(d_{\tau})\cap B_{g\tau}(gd_{\tau} )=B_{g\tau}(gd_{\tau} )$. Then, any $z\in \C$ such that $\Re z>g\cos(\theta)$ is contained in 
$B_{g\tau}(gd_{\tau} )$ provided $\tau>0$ is large enough.

Finally, we prove (\ref{alphaind}) assuming $0< \alpha<\theta<\pi/2$. We first note that if $e^{i\alpha}\tau$ is such that $\Im e^{i\alpha}\tau \geq \sin(\theta)$, {\em i.e.} $\tau\geq \sin(\theta)/\sin(\alpha)$, then any $ z\in B_{e^{i\alpha}\tau}(d_{e^{i\alpha}\tau})$, satisfies $\Re z> \cos(\theta)$, so that $\bigcup_{\tau\geq  \sin(\theta)/\sin(\alpha)}B_{e^{ i\alpha}\tau}(d_{e^{ i\alpha}\tau})\cap B_{ge^{ i\alpha}\tau}(gd_{e^{ i\alpha}\tau} )\subset R_1(\theta)$. For any $\tau<\sin(\theta)/\sin(\alpha)$, the intersection of the line passing by $e^{i\theta}$ and $e^{ i\alpha}\tau$ and the real axis occurs at a point $\tau'>0$ so that $d_{\tau'}=d_{e^{i\alpha}\tau}+|e^{i\alpha}\tau-\tau'|$. Therefore,
if $z\in B_{e^{ i\alpha}\tau}(d_{e^{ i\alpha}\tau})\cap B_{ge^{ i\alpha}\tau}(gd_{e^{ i\alpha}\tau} )$, we have
\bea
&&|z-\tau'|\leq |z-e^{i\alpha}\tau|+|e^{i\alpha}\tau-\tau'|<d_{e^{i\alpha}\tau}+|e^{i\alpha}\tau-\tau'|=d_{\tau'}\nonumber\\
&&|z-g\tau'|\leq |z-ge^{i\alpha}\tau|+g|e^{i\alpha}\tau-\tau'|<gd_{e^{i\alpha}\tau}+g|e^{i\alpha}\tau-\tau'|=gd_{\tau'}
\eea
which shows that $z\in B_{\tau'}(\tau')\cap B_{ge^{\tau'}}(gd_{\tau'} )$ and which ends the proof. A similar argument yields the result for $\pi/2\leq\theta<\pi$.
\ep 
\end{proof}
\begin{proof}[of Lemma \ref{form2}] We consider $0<\theta<\pi/2$ only, the other case being similar. Let $z=\rho e^{i\beta}\in B_0(g)\cup \Delta_g(\theta)$. By symmetry and the foregoing, we can consider $0\leq \beta\leq \pi$ only, and $\rho\geq g$. Thus, it is enough to consider $0\leq \beta < \theta$, and $g\leq \rho <g/\cos(\theta-\beta)$. We need to show that $|\rho e^{i\beta}+|\tau|e^{i\nu}|<|\tau|+g$, for some $\tau\in \R^-$ and some $e^{i\nu}\in \sigma(V)$, which is equivalent to 
\be\label{trian}
2|\tau|(g-\rho\cos(\nu-\beta))>\rho^2-g^2\geq 0.
\ee
Since we have $\cos(\nu-\beta)\leq \cos(\theta-\beta)$, the left hand side of (\ref{trian}) is bounded below by
$2|\tau|(g-\rho\cos(\theta-\beta))$ which is strictly positive, so that (\ref{trian}) holds for $|\tau|$ large enough.  

Conversely, assume $\exists\ |\tau|$ such that  $\forall \ e^{i\nu}\in \sigma(V)$, we have $|z+e^{i\nu}|\tau||<|\tau|+g$. With $z=\rho e^{i\beta}$, 
the geometrical properties recalled above imply that for all $\beta\not\in ]-\theta, \theta[$, $\rho<g$. Otherwise, the inequality is equivalent to
\be\label{sqz}
\rho^2+2\rho|\tau|\cos(\beta-\nu)-g(2|\tau|+g)<0.
\ee
Therefore, denoting by $x_+(\nu)$ the positive root of (\ref{sqz}), we must have for all allowed $\nu$, 
$0\leq \rho \leq x_+(\nu)$, where $\beta \in [-\theta, \theta]$. With $x_+(\nu) \geq x_+(\theta)$, as a consequence of $\cos(\nu-\beta)< \cos(\theta-\beta)$, we must have $0\leq \rho \leq x_+(\theta)$, for $\beta$ fixed. To get the result, one finally checks that $x_+(\theta)<g/\cos(\theta-\beta)$.

Consider now (\ref{reduct}) and fix $\tau\leq 0$. Expression (\ref{symset2}) with $e^{-i\alpha}V$ in place of $V$ and the observation that $\delta_{\tau e^{ i\alpha}}>|\tau |$ implies all circles $C_{e^{i\nu}\tau}(\delta_{\tau e^{ i\alpha}})$ are tangent to $C_0(\delta_{\tau e^{ i\alpha}}-|\tau|)$ yield (\ref{reduct}). Note that $\Gamma_{|\tau|, \delta_{\tau e^{i\alpha}-|\tau|}}(\theta)=B_0(g)$ if $\tau=0$.

It remains to establish (\ref{included}) for $\alpha\geq 0$. We start with a few facts for $|\tau|$ fixed
\bea
&&ge^{\pm i(\theta-\alpha)}\in \partial \Gamma_{|\tau|, \delta_{\tau e^{i\alpha}-|\tau|}}(\theta)\cap C_{\tau e^{\pm i\theta}}(\delta_{\tau e^{i\alpha}})
\eea
The point of of $e^{i\alpha}
\Gamma_{|\tau| , \delta_{\tau e^{ i\alpha}}-|\tau|}(\theta)
$ that is most distant from the origin is 
$e^{i\alpha}\rho_{|\tau|}\in C_{\tau e^{i\theta}}(\delta_{\tau e^{i\alpha}})\cap C_{\tau e^{-i\theta}}(\delta_{\tau e^{i\alpha}})$, where
 \be \rho_{|\tau|}=-|\tau|\cos(\theta)+\sqrt{(g+|\tau|\cos(\alpha))^2+|\tau|^2(\cos^2(\theta)-\cos^2(\alpha))}.
 \ee
 Now, if $\pi/2>\alpha> \theta$, $ \rho_{|\tau|}< g$, so that (\ref{included}) is contained in $B_0(g)$. Thus we assume from now on that $\alpha\leq \theta<\pi/2$. The line tangent to $e^{i\alpha}\partial \Gamma_{|\tau|, \delta_{\tau e^{i\alpha}-|\tau|}}(\theta)$ at $ge^{i\theta}$ has equation 
 \be
 t_{|\tau|}(x)=-(x-g\cos(\theta))\frac{(g\cos(\theta)+|\tau|\cos(\theta+\alpha))}{(g\sin(\theta)+|\tau|\sin(\theta+\alpha))}+g\sin(\theta).
\ee
Note that the tangent to $e^{i\alpha}\partial \Gamma_{|\tau|, \delta_{\tau e^{i\alpha}-|\tau|}}(\theta)$ at $ge^{i(2\alpha-\theta)}$ has slope inferior to $\pi/2$. By convexity, $e^{i\alpha}\Gamma_{|\tau|, \delta_{\tau e^{i\alpha}-|\tau|}}(\theta)\subset \Delta_{|\tau|}$, where $\Delta_{|\tau|}$ is the triangle defined by the intersection point of these tangent lines, $ge^{i\theta}$ and $ge^{i(2\alpha-\theta)}$ union $B_0(\delta_{\tau e^{i\alpha}-|\tau|})$. Since the slope of the line $ t_{|\tau|}$ is strictly increasing with $|\tau|$, we also have $\Delta_{|\tau|}\subset \Delta_{\infty}$, where the latter is set is the triangle is defined by $g\frac{\cos(\alpha)}{\cos(\theta)}e^{i\alpha}$,  $ge^{i\theta}$ and $ge^{i(2\alpha-\theta)}$ union $B_0(\delta_{\tau e^{i\alpha}-|\tau|})$.To prove (\ref{included}), it is enough to show that the line $t_{\infty}$ does not intersects the curve (\ref{cubicurve}) that defines $D(\theta)$ for $x\in ]g\cos(\theta), \min(g\frac{\cos^2(\alpha)}{\cos(\theta)},\cos(\theta)) [$.  With $y(x)>0$  solution to (\ref{cubicurve}), we get
\be
y^2(x)-t_{\infty}^2(x)=\frac{(x-g\cos(\theta))(x^2-x(\cos(\theta)+2g\cos(\theta+\alpha)\cos(\alpha))+g(g+1)\cos^2(\alpha))}{((1+g)\cos(\theta)-x)\sin^2(\theta+\alpha)},
\ee
which has the sign of the second factor in the numerator, call it $p(x)$, for $x\in ]g\cos(\theta),(g+1)\cos(\theta)[$. Moreover, we note that 
\bea
p\left(g\frac{\cos^2(\alpha)}{\cos(\theta)} \right)&=&\frac{g^2\cos^2(\alpha)(\cos^2(\alpha)+\cos^2(\theta)-2\cos(\theta)\cos(\alpha)\cos(\theta+\alpha))}{\cos^2(\theta)}>0.
\eea 
And since $t'_{\infty}(x)=-\frac{\cos(\theta+\alpha)}{\sin(\theta+\alpha)}<\tan(\theta)$, we have $t_\infty(\cos(\theta))<\sin(\theta)=y(\cos(\theta))$, hence $p(\cos(\theta)>0$.
If the discriminant of $p$ is negative, then $p(x)$ has no real roots, $y^2(x)-t_{\infty}^2(x)>0$ and the result holds. Otherwise, denote by $x_- \leq x_+$ these roots such that $x_- x_+=g(g+1)\cos^2(\alpha)>0$. Hence $y^2(x)-t_{\infty}^2(x)$ will be positive on $]g\cos(\theta), \min(g\frac{\cos^2(\alpha)}{\cos(\theta)},\cos(\theta)) [$ if $x_-\leq x_+ < 0$, which happens if and only if  $(\cos(\theta)+2g\cos(\theta+\alpha)\cos(\alpha))<0$. The foregoing yields that neither $g\frac{\cos^2(\alpha)}{\cos(\theta)}$ nor $\cos(\theta)$ lies between the roots. If $g\frac{\cos^2(\alpha)}{\cos(\theta)}\leq \cos(\theta)$, we get
\be
g\frac{\cos^2(\alpha)}{\cos(\theta)}<\cos(\alpha)\sqrt{g(g+1)}\leq\frac12((\cos(\theta)+2g\cos(\theta+\alpha)\cos(\alpha)))\leq x_+,
\ee
and the result follows. If $g\frac{\cos^2(\alpha)}{\cos(\theta)}>\cos(\theta)$, the same largument shows that $\cos(\theta)<x_+$, which ends the proof for $0<\theta<\pi/2$. When $\pi/2\leq \theta<\pi$, the inclusion (\ref{included}) follows directly from (\ref{reduct}) and the simple shape of $B_0(g)\cup R_g(\theta)$.
\ep
\end{proof}
 {

}
\end{document}